\documentclass[a4paper, 11pt]{article}
\usepackage{srcltx}
\usepackage{indentfirst}
\usepackage{graphicx}
\usepackage{overpic}
\usepackage{multirow}
\usepackage[dvipdfmx]{hyperref}
\usepackage[hang]{subfigure}
\usepackage{amsmath, amssymb, amsthm}
\usepackage{color}
\usepackage[mathscr]{euscript}

\newcommand\bbR{\mathbb{R}}
\newcommand\bbN{\mathbb{N}}
\newcommand\bxi{\boldsymbol{\xi}}
\newcommand\bx{\boldsymbol{x}}
\newcommand\bv{\boldsymbol{v}}
\newcommand\bu{\boldsymbol{u}}
\newcommand\bw{\boldsymbol{w}}
\newcommand\bn{\boldsymbol{n}}
\newcommand\bF{\boldsymbol{F}}
\newcommand\bA{\boldsymbol{A}}
\newcommand\htheta{\hat{\theta}}

\newcommand\dd{\,\mathrm{d}}
\newcommand\He{\mathit{He}}
\newcommand\Kn{\mathit{Kn}}

\numberwithin{equation}{section}

\setlength{\oddsidemargin}{0cm} 
\setlength{\evensidemargin}{0cm}
\setlength{\textwidth}{150mm}
\setlength{\textheight}{230mm}

\graphicspath{{images/}}

{\theoremstyle{remark} \newtheorem{remark}{Remark}}
\newtheorem{proposition}{Proposition}

\title{Numerical Regularized Moment Method of Arbitrary Order for Boltzmann-BGK Equation}

\author{Zhenning Cai\thanks{School of Mathematical Sciences, Peking
    University, Beijing, China, email: {\tt cai\_zn1987@163.com}.}~~and Ruo
    Li\thanks{CAPT, LMAM \& School of Mathematical Sciences, Peking
      University, Beijing, China, email: {\tt rli@math.pku.edu.cn}.}}

\begin{document}
\maketitle
\begin{abstract}
  We introduce a numerical method for solving Grad's moment equations
  or regularized moment equations for arbitrary order of moments. In
  our algorithm, we do not explicitly need the moment equations.
  Instead, we directly start from the Boltzmann equation and perform
  Grad's moment method \cite{Grad} and the regularization technique
  \cite{Struchtrup2003} numerically. We define a conservative
  projection operator and propose a fast implementation which makes it
  convenient to add up two distributions and provides more efficient
  flux calculations compared with the classic method using explicit
  expressions of flux functions. For the collision term, the BGK model
  is adopted so that the production step can be done trivially based
  on the Hermite expansion. Extensive numerical examples for one- and
  two-dimensional problems are presented.  Convergence in moments can
  be validated by the numerical results for different number of
  moments.

\vspace*{4mm}
\noindent {\bf Keywords:} Boltzmann-BGK equation; Grad's moment
method; Regularized moment equations
\end{abstract}

\section{Introduction}
In recent years, the simulation of rarefied fluids or microflows, which
contain significant non-equilibrium characteristics, became one of the
major directions of fluid dynamics. The Boltzmann equation, which is
considered as the basis of modern kinetic theory, is the starting
point of such simulations. Because of the high dimension of variables
and the complicated form of its collision operators, people tend to
use its discrete or simplified form instead of the Boltzmann equation
itself in numerical simulation.  Lots of work has been done to
simplify the collision operator, such as the BGK model \cite{BGK}, the
Shakhov model \cite{Shakhov}, the ES-BGK model \cite{Holway}, the
Liu's model \cite{Liu}, the Maxwell molecules model \cite{Ernst}, and
so on. Another way of simplification is to discretize the Boltzmann
equation by some expansion. In this field, the Chapman-Enskog
expansion \cite {Chapman,Enskog} and the Grad's expansion
\cite{Grad,Grad1958} achieved great success in the early exploration
of kinetic theory.

However, both methods of expansion suffer some problems which greatly
restrict their application. The Chapman-Enskog expansion shows an
unstable behavior in the case of high order expansions, such as
Burnett and super-Burnett equations \cite{Bobylev}; the Grad's moment
equations lead to unphysical subshocks when the Mach number is large
(see e.g. \cite{Struchtrup}). In order to extend their use, several
corrections are applied to these models, which include the R13 model,
proposed by H. Struchtrup and M. Torrilhon in \cite{Struchtrup2003}.
The regularization of Grad's moment equations is done by combining
Grad's technique with first-order Chapman-Enskog expansion.
\cite{Torrilhon2009} summarizes the recent work on R13 equations, in
which it is mentioned that such a regularization technique can be
extended to any system obtained by Grad's moment method. In this
paper, we introduce a method that numerically solves the Grad's
equations for arbitrary order of moments together with their
regularization. The numerical method generating the Grad's moment
equations for arbitrary number of moments has been proposed in
\cite{Torrilhon} and implemented in \cite{ETXX}. And the results for
shock tube are reported in \cite{Au, Weiss}. However, we solve these
equations ``only numerically,'' which means we do not need the
explicit forms of those equations in our algorithm. The regularized
moment equations are also considered, and to our knowledge, no results
about the generation of regularized equations for arbitrary number of
moments have been reported. Numerical methods for R13 equations are
discussed in \cite{Torrilhon2006,Gu,Torrilhon2008}, and the
computational framework for the R20 equations can be found in
\cite{Mizzi}.

Our starting point is the Boltzmann equation, and we adopt the BGK
model and a uniform rectangular mesh for discretizing the spatial
variable for simplicity.  The distribution function is expanded in an
Hermite series using the method in \cite{Grad}. The series is
truncated at a certain place and the coefficients in the expansion are
stored for each cell. Following the standard procedure, we adopt the
classic time splitting method in our scheme.  The convection term of
the Boltzmann equation is discretized by the finite volume method with
the HLL numerical fluxes. This requires algebraic operations on the
distributions between the neighbouring cells. Grad's method expands
the distribution function by the Hermite functions with center at the
local mean velocity and a scaling factor associated with the local
temperature, while these parameters are different in different places.
This makes it nontrivial to add up two distributions in different
cells. Thus it is extremely complicated to calculate the numerical
fluxes directly. As a key technique in this paper, we propose a fast
algorithm which projects a distribution function expanded in the
discrete space with one mean velocity and macroscopic temperature to
that with another mean velocity and macroscopic temperature. This
projection is conservative with respect to all the moments that are
not truncated. Moreover, we prove the projection is invertible so that
no information will be lost. With the help of this fast projection,
distributions in any two neighbouring cells can be transformed into
the same space efficiently. In order to estimate the signal velocities
in the HLL numerical fluxes, we prove that the eigenvalues of the
Jacobian matrix of the linearized flux function are actually the roots
of Hermite functions plus the mean velocity. Thus we can take the
eigenvalues with maximal absolute values as the approximated signal
velocities. Now, standard HLL numerical fluxes can be calculated
conveniently. After accumulating the contribution of the convection
term, the expansion is not of Grad's type any more, since Grad's
expansion requires all first-order coefficients and the trace of
second-order coefficients to be zero.  Again, the fast projection is
applied to correct the center and scaling factor of the expansion so
that the properties of Grad's expansion can be recovered. These
properties make it trivial to perform production of the BGK model by a
direct scaling of the moments with orders not lower than two.

In \cite {Struchtrup2003}, H. Struchtrup and M. Torrilhon used the
Chapman-Enskog expansion to deduce the regularized system of Grad's
13-moment equations --- R13 equations. The basic idea of this
regularization can be viewed as a strategy to ``guess'' the truncated
moments. Using this idea, we apply the Chapman-Enskog expansion of
the Boltzmann equation around the Grad non-equilibrium manifolds
\cite{Struchtrup} of arbitrary order. The closure of the system is
then achieved by the standard asymptotic techniques therein. This
method can be perfectly integrated into our numerical scheme without
deducing the macroscopic equations by intricate algebraic
calculations. For arbitrary order of moments, the regularization in
our algorithm introduces only first order derivative terms, which can
be numerically approximated by gradient reconstruction.

In our method, the computational cost of the fast conservative
projection is linear in terms of the number of moments. Thus it is
essentially faster to calculate the numerical fluxes than the classic
method using the flux functions in the macroscopic equations.
Actually, the macroscopic equations have never been deduced, but they
have been solved implicitly by our method. Then the framework of our
method can appear to be uniform for moment equations of any order.
This makes it very convenient to implement our algorithm. We need 
not deduce and code for the complicated flux functions at all in
the case of high order. Moreover, we need only to develop one copy of
the code for all different orders.

We carry out numerical experiments in both one- and two-dimensional
cases.  Different Knudsen numbers and different orders of moments are
examined to demonstrate the usefulness of large moment systems, and
the convergence in moments is validated numerically. Regularized
moment systems ranging from 20 moments up to 455 moments are simulated
in our one-dimensional examples. Two-dimensional examples for up to 84
moments are presented, and there is even an example demonstrating the
capacity of our method to simulate a three-dimensional non-equilibrium
process. To the best of our knowledge, it is the first time that the
method for arbitrary order regularized moment equations is numerically
implemented, and the moment method for large systems are applied to
two-dimensional problems.

The layout of this paper is as follows: in Section 2, an overview of
Boltzmann equation and the BGK model is given as the basis of our
algorithm. In section 3, the details of the algorithm which generate
numerical solution for arbitrary order Grad's moment equations are
introduced. In section 4, regularization of moment equations is
considered and then the whole algorithm is outlined. We present in
Section 5 four numerical examples including one- and two-dimensional
tests to make a comparison between results for different moment
equations, different Knudsen numbers and different meshes. At last,
some concluding remarks will be given in Section 6.


\section{The Boltzmann equation and BGK collision model}
In the kinetic theory of gases, the flow of a dilute gas is described
by the Boltzmann equation (see e.g.
\cite{Cercignani,Degond,Struchtrup})
\begin{equation} \label{eq:Boltzmann}
\frac{\partial f}{\partial t} + 
  \bxi \cdot \nabla_{\bx} f = Q(f,f),
\end{equation}
where $f(t, \bx, \bxi)$ is the distribution function, and
$(t,\bx,\bxi) \in \bbR^+ \times \bbR^D \times \bbR^D$. $Q(f,f)$
is the collision term with a quadratic expression given by
\begin{equation}
Q(f,f) = \alpha \int_{\bbR^D} \int_{S_+^{D-1}} (f'f_*' - f f_*)
  |(\bxi - \bxi_*) \cdot \bn| \dd \bxi_* \dd \bn,
\end{equation}
where
\begin{equation} \label{eq:f}
f_* = f(\bxi_*), \quad f' = f(\bxi'), \quad f_*' = f(\bxi_*'),
\end{equation}
and $\bxi'$ and $\bxi_*'$ are velocities after collision of two
particles with original velocities $\bxi$ and $\bxi_*$ and with unit
vector $\bn \in S_+^{D-1}$ joining the centers of them.  $\alpha$ is a
constant equivalent to $N \sigma^2$, which keeps invariant when taking
Boltzmann-Grad limit $N \rightarrow \infty$, $\sigma \rightarrow 0$
(cf. \cite{Cercignani,Degond}).  Here $N$ is the number of particles
while $\sigma$ is the diameter of each particle.

However, such a collision term turns out to be too complicated
for numerical simulation, so a variety of variants are raised
to get it simplified. To some extent, the BGK operator \cite
{BGK} is the simplest one. It substitutes the collision term
$Q(f,f)$ by
\begin{equation} \label{eq:BGK}
Q_{\mathrm{BGK}}(f) = -\nu (f - f_M),
\end{equation}
where $\nu$ is the collision frequency, and $f_M$ is the local
Maxwellian defined as
\begin{equation}
f_M(t, \bx, \bxi) = \frac{\rho(t, \bx)}
  {[2\pi \theta(t, \bx)]^{D/2}}
  \exp \left(
    -\frac{|\bxi - \bu(t, \bx)|^2}{2\theta(t, \bx)}
  \right).
\end{equation}
It is related with $f$ by
\begin{equation} \label{eq:moments}
\begin{split}
\rho(t, \bx) &= \int_{\bbR^D} f(t, \bx, \bxi) \dd \bxi, \\
\rho(t, \bx) \bu(t, \bx) &=
  \int_{\bbR^D} \bxi f(t, \bx, \bxi) \dd \bxi, \\
\rho(t, \bx) |\bu(t, \bx)|^2 + D \rho(t, \bx) \theta(t, \bx)
  &= \int_{\bbR^D} |\bxi|^2 f(t, \bx, \bxi) \dd \bxi,
\end{split}
\end{equation}
where $\rho, \bu$ and $\theta$ can be viewed as macroscopic
variables density, velocity and temperature, respectively. This model
is much more easy to use in numerical methods. However, it suffers the
disadvantage of being unable to predict the correct Prandtl number,
which will be seen in the numerical tests.

\section{A numerical formation equivalent to Grad's moment method}
\label{sec:Grad}
\subsection{Discretization of the distribution function}
In order to solve the kinetic equations numerically, we first
expands the distribution function into Hermite functions as
in \cite{Grad}:
\begin{equation} \label{eq:expansion}
f(\bxi) = \sum_{\alpha \in {\bbN^D}} f_{\alpha}
  \mathcal{H}_{\theta,\alpha}(\bv),
\end{equation}
where $\alpha = (\alpha_1, \cdots, \alpha_D)$ is a $D$-dimensional
multi-index, and
\begin{equation} \label{eq:v}
\bv = \frac{\bxi - \bu}{\sqrt{\theta}}.
\end{equation}
The basis functions $\mathcal{H}_{\theta,\alpha}$ are chosen as
\begin{equation} \label{eq:H}
\mathcal{H}_{\theta,\alpha}(\bv) = \prod_{d=1}^D
  \frac{1}{\sqrt{2\pi}} \theta^{-\frac{\alpha_d + 1}{2}}
  \He_{\alpha_d}(v_d) \exp \left(-\frac{v_d^2}{2}\right),
\end{equation}
where $\He_{\alpha_d}$ is the Hermite polynomial defined by
\begin{equation}
\He_n(x) = (-1)^n \exp(x^2/2)
  \frac{\mathrm{d}^n}{\mathrm{d}x^n} \exp(-x^2/2).
\end{equation}
Note that this is formally inconsistent with Grad's original
expression, but will be more convenient for our deduction
below\footnote{Grad uses symbols as $\mathscr{H}^{(n)}_{i_1 i_2 \cdots
i_n}$ to denote basis functions. This symbol is equivalent to $C
\mathcal{H}_{\alpha}, \, \alpha = e_{i_1} + e_{i_2} + \cdots +
e_{i_n}$ which is used here, where $C$ is a constant factor.
Inversely, $\mathcal{\alpha}$ can be expressed as
$\mathscr{H}_{\mathit{sub}}^{(|\alpha|)}$ with $\alpha_1$ ones,
$\alpha_2$ twos, $\cdots$, and $\alpha_D$ $D$'s in the subscript
$\mathit{sub}$. Thus our expansion is actually the same as what Grad
has done.}. The properties of Hermite polynomials can be found in many
handbooks such as \cite{Abramowitz}. Some useful ones are listed
below:
\begin{enumerate}
\item Orthogonality: $\displaystyle \int_{\bbR} \He_m(x)
\He_n(x) \exp(-x^2/2) \dd x = m! \sqrt{2\pi} \delta_{m,n}$;
\item Recursion relation: $\He_{n+1}(x) = x \He_n(x) - n \He_{n-1}(x)$;
\item Differential relation: $\He_n'(x) = n \He_{n-1}(x)$.
\end{enumerate}
It can be derived from the recursion relation and the
differential relation that
\begin{equation} \label{eq:differential}
[\He_n(x) \exp(-x^2/2)]' = -\He_{n+1}(x) \exp(-x^2/2).
\end{equation}
The expansion \eqref{eq:expansion} together with \eqref
{eq:moments} yields
\begin{equation} \label{eq:coef_restriction}
f_0 = \rho, \quad f_{e_i} = 0, \quad
  \sum_{d=1}^D f_{2e_d} = 0, \qquad i=1,\cdots,D,
\end{equation}
where $e_i$ stands for the multi-index with the $i$th component
$1$ and all other components $0$.

It is known that \eqref{eq:expansion} will result in an
``infinite moment system''. In order to make it numerically
solvable, we choose a positive integer $M \geqslant 2$ and
approximate \eqref{eq:expansion} by
\begin{equation} \label{eq:finite_expansion}
f(\bxi) \approx \sum_{|\alpha| \leqslant M}
  f_{\alpha} \mathcal{H}_{\theta,\alpha}(\bv).
\end{equation}
Using $F_M(\bu,\theta)$ to denote the linear space spanned by
all $\mathcal{H}_{\theta,\alpha}(\bv)$'s, where $|\alpha|
\leqslant M$ and $\bv$ is defined by \eqref{eq:v}, then
$F_M(\bu,\theta)$ is a finite dimensional subspace of
$L^2(\bbR^N, \exp(|\bv|^2/2) \dd \bv)$.

\begin{remark}
Based on such an expansion, the Maxwellian in the BGK collision
operator \eqref{eq:BGK} can simply be expressed by
\begin{equation}
f_M(\bxi) = \rho \mathcal{H}_{\theta,0}(\bv).
\end{equation}
With \eqref{eq:coef_restriction}, we have
\begin{equation} \label{eq:num_BGK}
Q_{\mathrm{BGK}}(f) = -\nu \sum_{1< |\alpha| \leqslant M}
  f_{\alpha} \mathcal{H}_{\theta,\alpha}(\bv).
\end{equation}
\end{remark}

\begin{remark}
The definition of $\bv$ (eq. \eqref{eq:v}) implies that we take the
mean velocity $\bu$ as the ``origin'' and $\theta$ as the scaling
factor when discretizing the distribution function.  Thus
\eqref{eq:coef_restriction} holds. If \eqref{eq:coef_restriction} is
violated, and we suppose $\bu' \in \bbR^D$, $\theta' \in \bbR^+$, and
a distribution function $f$ is approximated as
\begin{equation} \label{eq:f'}
f(\bxi) \approx \sum_{|\alpha| \leqslant M}
  f_{\alpha} \mathcal{H}_{\theta',\alpha}(\bv'),
  \quad \bv' = \frac{\bxi - \bu'}{\sqrt{\theta'}}
\end{equation}
with $f_{e_j}$'s or $\sum f_{2e_j}$ nonzero, then the associated
$\rho, \bu$ and $\theta$ can be calculated by substituting
\eqref{eq:f'} into \eqref{eq:moments}. Since
\begin{equation}
\He_0(x) = 1, \quad \He_1(x) = x, \quad \He_2(x) = x^2 - 1,
\end{equation}
employing the orthogonality of Hermite polynomials, the integrals
on the right hand sides of \eqref{eq:moments} can be directly
worked out as
\begin{equation} \label{eq:macro_vars}
\begin{split}
\rho &= f_0, \\
\rho \bu &= \rho \bu' + (f_{e_d})_{d=1,\cdots,D}^T, \\
\rho |\bu|^2 + D\rho \theta &= 2\rho\bu \cdot \bu' - \rho |\bu'|^2
  + \sum_{d=1}^D (\theta' f_0 + 2f_{2e_d}).
\end{split}
\end{equation}
T\"olke, Krafczyk, Schulz and Rank \cite{Tolke} use $\bu' \equiv
0$ in their discretization. Additionally, the heat flux can be
calculated by
\begin{equation}
q_j = \frac{1}{2} \int_{\bbR^D} |\bxi - \bu|^2 (\xi_j - u_j) \dd \bxi
    = 3\theta' f_{e_j} + 2 f_{3e_j} + \sum_{d=1}^D f_{e_j + 2e_d},
\quad \forall j=1,\cdots,D.
\end{equation}
\end{remark}

\subsection{Outline of the fractional step method} \label{sec:outline}
In this subsection, our numerical scheme will be outlined. Suppose
$\bx \in \bbR^N$ and $N \leqslant D$. $\mathcal{T}_h$ is a uniform
rectangular mesh in $\bbR^N$, with all grid lines parallel with the
axes, and each cell is identified by an $N$-dimensional multi-index
$\beta$. That is, for a fixed $\bx_0 \in \bbR^N$ and $\Delta x_j > 0,
\,j = 1,\cdots, N$,
\begin{equation}
\mathcal{T}_h = \{T_{\beta} = \bx_0 + [\beta_1 \Delta x_1,
  (\beta_1 + 1) \Delta x_1] \times \cdots \times
  [\beta_N \Delta x_N, (\beta_N + 1) \Delta x_N] :
  \beta \in \mathbb{Z}^N \}.
\end{equation}
Using $f_{\beta}^n(\bxi)$ to approximate the average distribution
function over the cell $T_{\beta}$ at time $t^n$, the Boltzmann
equation \eqref{eq:Boltzmann} can be solved by a standard fractional
step method:
\begin{enumerate}
\item {\it Convection step:} $\displaystyle
f_{\beta}^{n+1*}(\bxi) =
  f_{\beta}^n(\bxi) - \sum_{j=1}^N \frac{\Delta t^n}{\Delta x_j}
  [F_{\beta + \frac{1}{2} e_j}^n(\bxi) -
  F_{\beta - \frac{1}{2} e_j}^n(\bxi)].$
\item {\it Production step:} $f_{\beta}^{n+1}(\bxi) =
  f_{\beta}^{n+1*}(\bxi) + \Delta t^n Q_h(f_{\beta}^{n+1*})$.
\end{enumerate}
In the convection step, the finite volume method is employed
and $F_{\beta + \frac{1}{2}e_j}$ is the numerical flux between
cell $T_{\beta}$ and $T_{\beta + e_j}$. In the production step,
$Q_h$ is a transform over $F_M(\bu_{\beta}^{n+1*}, \theta_%
{\beta}^{n+1*})$, and is considered as an approximation to
$Q(\cdot, \cdot)$. Here $\bu_{\beta}^{n+1*}$ and $\theta_%
{\beta}^{n+1*}$ are the mean velocity and temperature corresponding
to the distribution function $f_{\beta}^{n+1*}$.

The numerical flux can be chosen from the standard ones in the
finite volume method, but in our framework, only central schemes
are available since the characteristic factorization of the flux
function is not available yet. A series of central schemes can
be found in textbook \cite{Toro}, and we choose the HLL scheme
\cite{HLL} in our numerical experiments, which reads
\begin{equation} \label{eq:HLL_flux}
F_{\beta + \frac{1}{2} e_j}^n(\bxi) = 
\begin{cases}
  \xi_j f_{\beta}^n(\bxi), & 0 \leqslant \lambda_j^L, \\[10pt]
  \displaystyle \frac{\lambda_j^R \xi_j f_{\beta}^n(\bxi) -
    \lambda_j^L \xi_j f_{\beta + e_j}^n(\bxi) +
    \lambda_j^L \lambda_j^R [f_{\beta + e_j}^n(\bxi) - f_{\beta}^n(\bxi)]}
    {\lambda_j^R - \lambda_j^L},
  & \lambda_j^L < 0 < \lambda_j^R, \\
  \xi_j f_{\beta + e_j}^n(\bxi), & 0 \geqslant \lambda_j^R.
\end{cases}
\end{equation}
$\lambda_j^L$ and $\lambda_j^R$ are the fastest signal velocities
arising from the solution of the Riemann problem, which will be
discussed in Section \ref{sec:est}. For all $\beta \in
\mathbb{Z}^N$, given
\begin{equation}
f_{\beta}^n(\bxi) = \sum_{|\alpha| \leqslant M}
  f_{\beta,\alpha}^n \mathcal{H}_{\beta,\alpha}^n(\bv_{\beta}^n),
\quad \bv_{\beta}^n =
  (\bxi - \bu_{\beta}^n) / (\theta_{\beta}^n)^{1/2},
\end{equation}
where $\mathcal{H}_{\beta,\alpha}^n = \mathcal{H}_{\theta_{\beta}^n,
\alpha}$, and $\bu_{\beta}^n, \theta_{\beta}^n$ are the mean
velocity and temperature in cell $\beta$, then $\xi_j
f_{\beta}^n(\bxi)$ can be calculated according to the recursion
relation of Hermite polynomials:
\begin{equation} \label{eq:flux}
\begin{split}
\xi_j f_{\beta}^n(\bxi) &=
  \left[(\theta_{\beta}^n)^{1/2} (v_{\beta}^n)_j + (u_{\beta}^n)_j \right]
  \sum_{|\alpha| \leqslant M} f_{\beta,\alpha}^n
  \mathcal{H}_{\beta,\alpha}^n (\bv_{\beta}^n) \\
&= \sum_{|\alpha| \leqslant M} f_{\beta,\alpha}^n \left[
  \theta_{\beta}^n \mathcal{H}_{\beta,\alpha + e_j}^n (\bv_{\beta}^n)
  + (u_{\beta}^n)_j \mathcal{H}_{\beta,\alpha}^n (\bv_{\beta}^n)
  + \alpha_j \mathcal{H}_{\beta,\alpha - e_j}^n (\bv_{\beta}^n)
\right].
\end{split}
\end{equation}
Since $|\alpha + e_j| = M + 1$ when $|\alpha| = M$, $\xi_j
f_{\beta}^n(\bxi)$ no longer exists in the space $F_M(\bu_{\beta}^n,
\theta_{\beta}^n)$. Thus, we need an additional ``projection step'' to
drag \eqref{eq:flux} back into $F_M(\bu_{\beta}^n, \theta_{\beta}^n)$.
This can be done by simply dropping the terms with $|\alpha + e_j| = M
+ 1$, since when $|\alpha| > M$, $\mathcal{H}_{\alpha}(\bv)$ is
orthogonal to $F_M(\bu, \theta)$ with respect to the inner product
\begin{equation}
(f, g) = \int_{\bbR^D} f(\bv) g(\bv) \exp
  \left( \frac{|\bv|^2}{2} \right) \dd \bv.
\end{equation}
However, the convection step is still uncompleted since it is
nontrivial to add up two functions lying in $F_M(\bu_{\beta}^n,
\theta_{\beta}^n)$ and $F_M(\bu_{\beta + e_j}^n, \theta_{\beta %
+ e_j}^n)$ respectively. This is a part of our major work and will
be discussed in \ref{sec:completion}.

As to the production step, the main job is to construct the
numerical collision operator $Q_h$. This is also implemented
by projecting $Q(f_{\beta}^{n+1*}, f_{\beta}^{n+1*})$ into
$F_M(\bu_{\beta}^{n+1*}, \theta_{\beta}^{n+1*})$. Precisely,
$Q_h$ is defined as
\begin{equation}
Q_h(f) = \sum_{|\alpha| \leqslant M}
  Q_{\alpha} \mathcal{H}_{\theta,\alpha}(\bv), \quad
  \forall f \in F_M(\bu, \theta),
\end{equation}
where
\begin{gather}
\label{eq:Q_alpha}
Q_{\alpha} = C_{\theta,\alpha} \int_{\bbR^D}
    Q(f, f)(\bxi) \mathcal{H}_{\theta,\alpha}(\bv)
    \exp(|\bv|^2/2) \dd \bv, \\
\label{eq:C}
C_{\theta,\alpha} = \frac{(2\pi)^{D/2}\theta^{D + |\alpha|}}
  {\alpha_1! \cdots \alpha_d!}.
\end{gather}
Further calculation requires the concrete forms of $f'$ and
$f_*'$ in \eqref{eq:f}. For BGK model \eqref{eq:BGK}, the numerical
collision operator has a simple explicit form \eqref{eq:num_BGK}.
In this situation, all $f_{\alpha}$'s can be decoupled, so
the production step can be performed by solving each $f_{\alpha}$
analytically. The scheme reads
\begin{enumerate}
\setcounter{enumi}{1}
\item {\it Production step (only for BGK model):}
\begin{displaymath}
  f_{\beta,\alpha}^{n+1} =
    f_{\beta,\alpha}^{n+1*} \exp(-\nu \Delta t^n),
  \quad \forall \alpha \in \mathbb{N}^D,
  \quad 0 < |\alpha| \leqslant M.
\end{displaymath}
\end{enumerate}

\begin{remark}
For the time integration, we use a single step Euler scheme for
both convection and production step. Actually, such formation
can be smoothly generalized to Runge-Kutta and Strang splitting
schemes.
\end{remark}

\begin{remark}
For other collision terms, such as ES-BGK model \cite{Holway} or
Maxwell molecules \cite{Ernst}, the expression of $Q_h$ can be much
more complicated. For the ES-BGK model, it is always possible to get
$Q_{\alpha}$'s by direct integration. For Maxwell molecules and
linearized Boltzmann collision operator, the method in \cite
{Torrilhon} can be employed to generate the numerical collision
operator $Q_h$. For simplicity, these models are not considered in
this paper.
\end{remark}

\subsection{Completion of the convection step} \label{sec:completion}
Two points remain unclear for the convection step. One is that we
need to find a way to add up two functions in different spaces
$F_M(\bu_1, \theta_1)$ and $F_M(\bu_2, \theta_2)$, so that it is
applicable to calculate the numerical fluxes and to accumulate them to
the solution at the last time step. And the other is the estimation of
the characteristic velocities $\lambda_j^L$ and $\lambda_j^R$.

\subsubsection{Projection between two different spaces} \label{sec:projection}
Assume $f_1 \in F_M(\bu_1, \theta_1)$ and $f_2 \in F_M(\bu_2,
\theta_2)$. Obviously, when $\bu_1 \neq \bu_2$ or $\theta_1 \neq
\theta_2$, direct calculation of $f_1 + f_2$ is inapplicable.
Therefore, we want to find $\tilde{f}_1 \in F_M(\bu_2, \theta_2)$ such
that $\tilde{f}_1$ is some approximation of $f_1$ in $F_M(\bu_2,
\theta_2)$. In order to realize such transformation, we propose a fast
projection method which has a time complexity of $O(M^D)$ below.

First, let us consider the case of $M = \infty$, and $f \in
F_{\infty}(\bu_1, \theta_1) \cap F_{\infty}(\bu_2, \theta_2)$.
Then, $f$ has the following two representations
\begin{equation} \label{eq:f1}
f(\bxi) = \sum_{\alpha \in \bbN^D}
  f_{1,\alpha} \mathcal{H}_{\theta_1,\alpha} (\bv_1),
\quad \bv_1 = (\bxi - \bu_1) / \sqrt{\theta_1},
\end{equation}
and
\begin{equation} \label{eq:f2}
f(\bxi) = \sum_{\alpha \in \bbN^D}
  f_{2,\alpha} \mathcal{H}_{\theta_2,\alpha} (\bv_2),
\quad \bv_2 = (\bxi - \bu_2) / \sqrt{\theta_2}.
\end{equation}
Suppose all $f_{1,\alpha}$'s are known, and we want to solve
all $f_{2,\alpha}$'s. Let $\htheta = \sqrt{\theta_1 / \theta_2}$
and $\bw = (\bu_1 - \bu_2) / \sqrt{\theta_2}$. It is obvious that
\begin{equation}
\bv_2 = \htheta \bv_1 + \bw, \quad
\mathcal{H}_{\theta_2, \alpha} =
  \htheta^{|\alpha| + D} \mathcal{H}_{\theta_1, \alpha}.
\end{equation}
Joining \eqref{eq:f1} and \eqref{eq:f2}, we have
\begin{equation} \label{eq:f_relation}
\sum_{\alpha \in \bbN^D} f_{1,\alpha}
  \mathcal{H}_{\theta_1, \alpha}(\bv_1)
= \sum_{\alpha \in \bbN^D} f_{2,\alpha}
  \htheta^{|\alpha| + D} \mathcal{H}_{\theta_1, \alpha}
  (\htheta \bv_1 + \bw).
\end{equation}
Now we introduce an auxiliary function $F(\bv, \tau)$, defined
as
\begin{equation} \label{eq:F}
F(\bv, \tau) = \sum_{\alpha \in \bbN^D}
  F_{\alpha}(\tau) [(\htheta - 1) \tau + 1]^{|\alpha| + D}
  \mathcal{H}_{\theta_1, \alpha} \left(
    [(\htheta - 1) \tau + 1] \bv + \tau \bw
  \right),
\end{equation}
which satisfies
\begin{equation} \label{eq:F_restriction}
F_{\alpha}(0) = f_{1,\alpha}, \quad
  \forall \alpha \in \bbN^D,
\qquad \textrm{and} \qquad F(\bv, 0) = F(\bv, 1).
\end{equation}
Comparing \eqref{eq:F} \eqref{eq:F_restriction} with \eqref
{eq:f_relation}, it can be found that for any $\alpha \in
\bbN^D$, $F_{\alpha}(1)$ is just $f_{2,\alpha}$ which
is to be solved. Moreover, if we suppose
\begin{equation} \label{eq:const_func}
\frac{\partial F}{\partial \tau} \equiv 0, \quad
  \forall \tau \in [0,1],
\end{equation}
then an infinite ordinary differential system of $\left\{
F_{\alpha}(\tau) \right\}_{\alpha \in \bbN^D}$
can be obtained.

The detailed calculation of $\dfrac{\partial F}{\partial \tau}$ can be
found in Appendix \ref{sec:df_dt}, and we only show the final result
here:
\begin{equation}
\frac{\partial}{\partial \tau} F(\bv, \tau)
= \sum_{\alpha \in \bbN^D} S^{-(|\alpha| + D)}
  \mathcal{H}_{\theta_1,\alpha} \left\{
    \frac{\mathrm{d}}{\mathrm{d}\tau} F_{\alpha}
    - \sum_{d=1}^D S^2 \left[
      \theta_1 R F_{\alpha-2e_d}
      + w_d \sqrt{\theta_1} F_{\alpha - e_d}
    \right]
  \right\},
\end{equation}
where
\begin{equation} \label{eq:RS}
R(\tau) = \frac{\htheta - 1}{(\htheta - 1) \tau + 1}, \quad
S(\tau) = 1 - \tau R(\tau) = \frac{1}{(\htheta - 1) \tau + 1},
\end{equation}
and the parameter of $\mathcal{H}_{\theta_1,\alpha}$ is
$[(\hat{\theta} - 1) \tau + 1] \bv + \tau \bw$. As required in
\eqref{eq:const_func}, 
\begin{equation} \label{eq:ode}
\frac{\mathrm{d}}{\mathrm{d}\tau} F_{\alpha} =
  \sum_{d=1}^D S^2 \left[
    \theta_1 R F_{\alpha - 2e_d}
    + w_d \sqrt{\theta_1} F_{\alpha - e_d}
  \right], \quad \forall \alpha \in \bbN^D,
    \quad \forall \tau \in [0,1]
\end{equation}
must hold. \eqref{eq:ode} is an infinite system, but for
any $M \geqslant 2$, if we consider only a subsystem
containing all equations with $|\alpha| \leqslant M$, it is
still closed. Therefore, in order to project a function
$f \in F_{\infty}(\bu_1, \theta_1) \cap F_{\infty}(\bu_2,
\theta_2)$ into $F_M(\bu_2, \theta_2)$, it is only needed
to solve \eqref{eq:ode} for all $|\alpha| \leqslant M$.

For an arbitrary function $f(\bxi)$ which is defined on
$\bbR^D$, its projection to $F_M(\bu,\theta)$ is defined as
\begin{equation} \label{eq:project}
\Pi_{\bu,\theta} f = \sum_{|\alpha| \leqslant M} \left[
  C_{\theta,\alpha} \int_{\bbR^D} f(\bxi)
    \mathcal{H}_{\theta,\alpha}(\bv)
    \exp(|\bv|^2/2) \dd \bv
\right] \mathcal{H}_{\theta,\alpha}(\bv),
\quad \bv = (\bxi - \bu) / \sqrt{\theta}.
\end{equation}
See \eqref{eq:C} for the definition of $C_{\theta,\alpha}$.
The following proposition provides an algorithm to project
a function in $F_M(\bu_1,\theta_1)$ to $F_M(\bu_2,\theta_2)$.
\begin{proposition} \label{prop:projection}
Suppose $f \in F_M(\bu_1, \theta_1)$ can be represented by
\begin{equation}
f(\bxi) = \sum_{|\alpha| \leqslant M} f_{1,\alpha}
  \mathcal{H}_{\theta_1,\alpha}(\bv_1),
\quad \bv_1 = (\bxi - \bu_1) / \sqrt{\theta_1}.
\end{equation}
For some $\bu_2 \in \mathbb{R}^D$ and $\theta_2 > 0$,
$\{F_{\alpha}(\tau)\}_{|\alpha| \leqslant M}$ satisfies
\begin{equation} \label{eq:finite_ode}
\begin{cases}
\displaystyle
\frac{\mathrm{d}}{\mathrm{d}\tau} F_{\alpha} =
  \sum_{d=1}^D S^2 \left[
    \theta_1 R F_{\alpha - 2e_d}
    + w_d \sqrt{\theta_1} F_{\alpha - e_d}
  \right], & \forall \tau \in [0,1], \\[15pt]
F_{\alpha}(0) = f_{1,\alpha},
\end{cases}
\end{equation}
where $S$ and $R$ are given by \eqref{eq:RS}, and $\bw =
(\bu_1 - \bu_2) / \sqrt{\theta_2}$. Let
\begin{equation} \label{eq:projection}
g(\bxi) = \sum_{|\alpha| \leqslant M} F_{\alpha}(1)
  \mathcal{H}_{\theta_2,\alpha}(\bv_2),
\quad \bv_2 = (\bxi - \bu_2) / \sqrt{\theta_2}.
\end{equation}
Then $g(\bxi) \in F_M(\bu_2, \theta_2)$ and $g(\bxi)$
satisfies
\begin{equation} \label{eq:conservation}
\int_{\bbR^D} p(\bxi) f(\bxi) \dd \bxi =
  \int_{\bbR^D} p(\bxi) g(\bxi) \dd \bxi, \quad
  \forall p(\bxi) \in P_M(\bbR^D).
\end{equation}
\end{proposition}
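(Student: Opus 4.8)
The plan is to realize the correspondence $f\mapsto g$ as the time-one value of a homotopy $\tau\mapsto F^{(M)}(\cdot,\tau)$ obtained by truncating the auxiliary function \eqref{eq:F} at $|\alpha|\leqslant M$, and to show that this homotopy leaves every moment of order at most $M$ unchanged. The assertion $g\in F_M(\bu_2,\theta_2)$ needs no argument: by construction \eqref{eq:projection} exhibits $g$ as a linear combination of the $\mathcal{H}_{\theta_2,\alpha}(\bv_2)$ with $|\alpha|\leqslant M$. So the whole content of the proposition is the conservation identity \eqref{eq:conservation}.

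First I would introduce $F^{(M)}(\bv,\tau)=\sum_{|\alpha|\leqslant M}F_\alpha(\tau)[(\htheta-1)\tau+1]^{|\alpha|+D}\mathcal{H}_{\theta_1,\alpha}([(\htheta-1)\tau+1]\bv+\tau\bw)$, with $\{F_\alpha\}$ the solution of \eqref{eq:finite_ode}, and identify the two endpoints. At $\tau=0$ the scaling factor $[(\htheta-1)\tau+1]$ equals $1$ and $F_\alpha(0)=f_{1,\alpha}$, so $F^{(M)}(\bv_1,0)=f(\bxi)$; at $\tau=1$ the scaling factor equals $\htheta$, and combining $\mathcal{H}_{\theta_2,\alpha}=\htheta^{|\alpha|+D}\mathcal{H}_{\theta_1,\alpha}$ with $\bv_2=\htheta\bv_1+\bw$ (both recorded just above \eqref{eq:f_relation}) gives $F^{(M)}(\bv_1,1)=g(\bxi)$. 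Both identifications are one-line substitutions.

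The core step is to control $\partial F^{(M)}/\partial\tau$. I would rerun the computation of Appendix \ref{sec:df_dt}, this time keeping the finite sum, using the differential relation \eqref{eq:differential} and the three-term recursion $\He_{n+1}(x)=x\He_n(x)-n\He_{n-1}(x)$. The key structural observation is that the $\tau$-derivative of a single summand indexed by $\alpha$ re-expands over basis functions indexed by $\alpha$, $\alpha+e_d$ and $\alpha+2e_d$ only, so the multi-index never decreases. Therefore, for every $|\alpha|\leqslant M$, the coefficient of $\mathcal{H}_{\theta_1,\alpha}([(\htheta-1)\tau+1]\bv+\tau\bw)$ in $\partial_\tau F^{(M)}$ is precisely $\frac{\mathrm{d}}{\mathrm{d}\tau}F_\alpha-\sum_{d=1}^D S^2[\theta_1 R F_{\alpha-2e_d}+w_d\sqrt{\theta_1}F_{\alpha-e_d}]$ (the on-diagonal contribution of $F_\alpha$ itself cancelling), which vanishes by \eqref{eq:finite_ode}. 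Hence $\partial_\tau F^{(M)}(\bv,\tau)$ is a $\tau$-dependent combination of $\mathcal{H}_{\theta_1,\beta}([(\htheta-1)\tau+1]\bv+\tau\bw)$ with $M+1\leqslant|\beta|\leqslant M+2$ only. I expect this bookkeeping --- checking that the truncation leaks strictly upward in $|\alpha|$ and that the on-diagonal terms cancel --- to be the main obstacle, although it is merely the finite restriction of the computation already carried out for the infinite system around \eqref{eq:ode}.

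To conclude, I would fix $p\in P_M(\bbR^D)$, set $\tilde p(\bv)=p(\bu_1+\sqrt{\theta_1}\bv)$, and substitute $\bxi=\bu_1+\sqrt{\theta_1}\bv_1$ to obtain $\int_{\bbR^D}p(\bxi)[f(\bxi)-g(\bxi)]\,\dd\bxi=-\theta_1^{D/2}\int_0^1\!\int_{\bbR^D}\tilde p(\bv)\,\partial_\tau F^{(M)}(\bv,\tau)\,\dd\bv\,\dd\tau$. For fixed $\tau$, the affine change of variables $\boldsymbol{z}=[(\htheta-1)\tau+1]\bv+\tau\bw$ turns the inner integral into a constant multiple of $\sum_{|\beta|>M}c_\beta(\tau)\int_{\bbR^D}\hat p(\boldsymbol{z})\mathcal{H}_{\theta_1,\beta}(\boldsymbol{z})\,\dd\boldsymbol{z}$, where $\hat p$ is again a polynomial of degree $\leqslant M$; expanding $\hat p$ in the tensor-product Hermite basis and using the orthogonality relation (property~1 in the list above) makes every term vanish because $|\beta|>M\geqslant\deg\hat p$. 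Thus the double integral is zero and \eqref{eq:conservation} holds.
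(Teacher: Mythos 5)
Your proof is correct, but it takes a genuinely different route from the paper's. The paper reduces the finite statement to the infinite-dimensional identity $F(\bv,0)=F(\bv,1)$ behind \eqref{eq:ode}: it builds a compactly supported surrogate $\tilde{g}$ (via completeness of the Hermite polynomials in $L^2([-1,1]^D)$) whose moments of order $\leqslant M$ agree with those of $f$; compact support puts $\tilde{g}$ in $F_{\infty}(\bu_1,\theta_1)\cap F_{\infty}(\bu_2,\theta_2)$, the lower-triangular structure of \eqref{eq:ode} makes the block $|\alpha|\leqslant M$ closed so that $\tilde{g}_{2,\alpha}=F_{\alpha}(1)$ there, and orthogonality then transfers the moments from $\tilde{g}$ to $g$. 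You instead work directly with the truncated homotopy $F^{(M)}$ and show that the ODE \eqref{eq:finite_ode} forces $\partial_\tau F^{(M)}$ to lie entirely in the span of $\mathcal{H}_{\theta_1,\beta}$ with $M+1\leqslant|\beta|\leqslant M+2$ (the truncation ``leaks'' only upward, and the on-diagonal terms cancel against the derivative of the prefactor $S^{-(|\alpha|+D)}$ since $\sum_d(\alpha_d+1)R=(|\alpha|+D)(\htheta-1)S$); integrating in $\tau$ against a degree-$\leqslant M$ polynomial and using orthogonality after the affine change of variables then kills everything. Your argument avoids the auxiliary existence/completeness step entirely, is self-contained given the Appendix computation, and is more quantitative in that it locates the defect $f-g$ exactly in the Hermite orders $M+1$ and $M+2$; the paper's argument is shorter on the page because it recycles the already-derived infinite identity, at the cost of the slightly delicate surrogate construction. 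The only points worth making explicit in your write-up are the routine justifications for differentiating under the integral sign and applying Fubini on $\bbR^D\times[0,1]$, which hold because all integrands are polynomials times Gaussians with $(\htheta-1)\tau+1$ bounded away from zero on $[0,1]$.
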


\begin{proof}
Let $\Lambda = [-1,1]^D$ and
\begin{equation}
P_{\theta_1,\alpha}(\bv_1) =
  \mathcal{H}_{\theta_1,\alpha}(\bv_1) \exp(|\bv_1|^2/2),
  \quad \forall \alpha \in \bbN^D.
\end{equation}
Thus $\{P_{\theta_1,\alpha}(\bv_1)\}
_{\alpha \in \bbN^D}$ forms a complete basis of
$L^2(\Lambda)$. It follows that for an arbitrary set of
$\{b_{\alpha}\}_{|\alpha| \leqslant M}$, there exists 
a function $\tilde{f} \in L^2(\Lambda)$, such that
\begin{equation}
\int_{\Lambda} \tilde{f}(\bv_1)
  P_{\theta_1, \alpha} (\bv_1) \dd \bv_1 = b_{\alpha},
  \quad \forall \alpha \in \bbN^D, |\alpha| \leqslant M.
\end{equation}
When $\displaystyle b_{\alpha} = \int_{\bbR^D} f(\bxi)
P_{\theta_1, \alpha} (\bv_1) \dd \bv_1$, a zero extension
of $\tilde{f}$ onto $\bbR^D$ gives that
\begin{equation} \label{eq:equiv}
\int_{\bbR^D} \tilde{f}(\bv_1)
  \mathcal{H}_{\theta_1,\alpha}(\bv_1)
  \exp \left( \frac{|\bv_1|^2}{2} \right) \dd \bv_1
= \int_{\bbR^D} f(\bxi)
  \mathcal{H}_{\theta_1,\alpha}(\bv_1)
  \exp \left( \frac{|\bv_1|^2}{2} \right) \dd \bv_1
\end{equation}
holds for all $\alpha \in \bbN^D$, $|\alpha| \leqslant M$.
Let $\tilde{g}(\bxi) = \tilde{f}(\bv_1)$. Since $\tilde{g}$
has a compact support on $\bbR^D$, we have $\tilde{g} \in
F_{\infty}(\bu_1, \theta_1) \cap F_{\infty}(\bu_2,\theta_2)$.
Now \eqref{eq:equiv} and the orthogonality of Hermite
polynomials implies that if
\begin{equation}
\tilde{g}(\bxi) = \sum_{\alpha \in \bbN^D} \tilde{g}_{1,\alpha}
  \mathcal{H}_{\theta_1,\alpha}(\bv_1),
\end{equation}
then $\tilde{g}_{1,\alpha} = f_{1,\alpha}$ for all $\alpha
\in \bbN^D$, $|\alpha| \leqslant M$.

The preceding analysis shows that if
\begin{equation}
\tilde{g}(\bxi) = \sum_{\alpha \in \bbN^D} \tilde{g}_{2,\alpha}
  \mathcal{H}_{\theta_2,\alpha}(\bv_2),
\end{equation}
then $\tilde{g}_{2,\alpha} = F_{\alpha}(1)$ for all
$\alpha \in \bbN^D$, $|\alpha| \leqslant M$. Employing the
orthogonality of Hermite polynomials again, we can deduce that
for any $p(\bxi) \in P_M(\bbR^D)$,
\begin{equation}
\int_{\bbR^D} p(\bxi) f(\bxi) \dd \bxi =
  \int_{\bbR^D} p(\bxi) \tilde{g}(\bxi) \dd \bxi =
  \int_{\bbR^D} p(\bxi) g(\bxi) \dd \bxi.
\end{equation}
Thus \eqref{eq:conservation} is established.
\end{proof}

Based on this proposition, the projection requires to solve an
ordinary differential system \eqref{eq:finite_ode}. Let row vectors
$\bF_m (\tau)$ and $\bF(\tau)$ be
\begin{equation}
\bF_m(\tau) = \left( F_{\alpha}(\tau) \right)_{|\alpha|=m},
  \quad 0 \leqslant m \leqslant M,
\end{equation}
and
\begin{equation}
\bF(\tau) = (\bF_0(\tau), \bF_1(\tau), \cdots, \bF_M(\tau)).
\end{equation}
Thus the ordinary differential equations \eqref{eq:finite_ode} can be
simplified as
\begin{equation} \label{eq:ode_vec}
\frac{\mathrm{d}}{\mathrm{d} \tau} \bF(\tau)
  = \bF(\tau) {\bf A}(\tau), \quad \tau \in [0,1].
\end{equation}
Eq. \eqref{eq:finite_ode} reveals that ${\bf A}(\tau)$ is an upper
triangular matrix with vanished diagonal entries. Therefore,
\eqref{eq:ode_vec} can actually be solved by recursive integration.
However, the direct integration will leads to $O(M^{2D})$
calculations, so we solve \eqref{eq:ode_vec} by applying $O(1)$ steps
of Runge-Kutta numerical integration, which is unconditionally stable
due to the special form of ${\bf A}(\tau)$. Since ${\bf A}(\tau)$ is
sparse, each Runge-Kutta step costs only $O(M^D)$ calculations. Thus
the whole projection has a time complexity of $O(M^D)$.

Now let us return to the convection step. Using $\Pi_{\beta}^n$
to denote the operator that projects any function to the space
$F_M(\bu_{\beta}^n, \theta_{\beta}^n)$, then the convection step
is described as following:
\begin{enumerate}
\item {\it Convection step:}
  \begin{enumerate}
  \item \label{conv_step1}
    Apply the convection within $F_M(\bu_{\beta}^n, \theta_%
    {\beta}^n)$:
    \begin{equation} \label{eq:step1a}
    f_{\beta}^{n**}(\bxi) = f_{\beta}^n(\bxi)
      - \sum_{j=1}^N \frac{\Delta t^n}{\Delta x_j}
      [(\Pi_{\beta}^n F_{\beta + \frac{1}{2} e_j}^n)(\bxi) -
        (\Pi_{\beta}^n F_{\beta - \frac{1}{2} e_j}^n)(\bxi)];
    \end{equation}
  \item \label{conv_step2}
    Use \eqref{eq:macro_vars} to calculate the mean velocity
    $\bu_{\beta}^{n**}$ and temperature $\theta_{\beta}^{n**}$
    for the distribution function $f_{\beta}^{n**}$;
  \item \label{conv_step3}
    Let $\bu_{\beta}^{n+1*} = \bu_{\beta}^{n**}$, $\theta_
    {\beta}^{n+1*} = \theta_{\beta}^{n**}$, and $f_{\beta}^{n+1*}
    = \Pi_{\beta}^{n+1*} f_{\beta}^{n**}$.
  \end{enumerate}
\end{enumerate}
When implementing Step \ref{conv_step1}, $\Pi_{\beta}^n$ is
actually applied on each term of $F_{\beta \pm \frac{1}{2} e_j}$,
and the result $f_{\beta}^{n**}$ no longer satisfies \eqref
{eq:coef_restriction}. So the mean velocity and temperature
need to be recalculated in Step \ref{conv_step2}. In Step
\ref{conv_step3}, $f_{\beta}^{n**}$ is adjusted to $f_{\beta}
^{n+1*}$ such that \eqref{eq:coef_restriction} holds for
$f_{\beta}^{n+1*}$. Later on, in the production step, the
mean velocity and temperature are not changed, so \eqref
{eq:coef_restriction} still holds for $f_{\beta}^{n+1}$.
The conservation of the convection step follows from proposition
\ref{prop:projection} and the conservative form of the finite
volume scheme. To be specific, for any $p(\bxi) \in P_M(\bbR^D)$, we
have
\begin{equation}
\begin{split}
& \phantom{={}} \sum_{\beta \in \mathbb{Z}^N}
  \int_{\bbR^D} p(\bxi) f_{\beta}^{n**}(\bxi) \dd \bxi \\
& = \sum_{\beta \in \mathbb{Z}^N}
  \int_{\bbR^D} p(\bxi) f_{\beta}^{n}(\bxi) \dd \bxi
  - \sum_{\beta \in \mathbb{Z}^N} \sum_{j=1}^N
    \frac{\Delta t^n}{\Delta x_j} \cdot \\
& \qquad \quad \left[
    \int_{\bbR^D} p(\bxi)
      (\Pi_{\beta}^n F_{\beta + \frac{1}{2} e_j}^n)(\bxi)
    \dd \bxi -
    \int_{\bbR^D} p(\bxi)
      (\Pi_{\beta}^n F_{\beta - \frac{1}{2} e_j}^n)(\bxi)
    \dd \bxi
  \right] \\
& = \sum_{\beta \in \mathbb{Z}^N}
  \int_{\bbR^D} p(\bxi) f_{\beta}^{n}(\bxi) \dd \bxi
  - \sum_{\beta \in \mathbb{Z}^N} \sum_{j=1}^N
    \frac{\Delta t^n}{\Delta x_j} \cdot \\
& \qquad \quad \left[
    \int_{\bbR^D} p(\bxi)
      F_{\beta + \frac{1}{2} e_j}^n(\bxi)
    \dd \bxi -
    \int_{\bbR^D} p(\bxi)
      F_{\beta - \frac{1}{2} e_j}^n(\bxi)
    \dd \bxi
  \right] \qquad \text{[Using \eqref{eq:conservation}]} \\
& = \sum_{\beta \in \mathbb{Z}^N}
  \int_{\bbR^D} p(\bxi) f_{\beta}^{n}(\bxi) \dd \bxi.
\end{split}
\end{equation}
Thus quantities such as mass, total momentum and total energy are
conservative.

\subsubsection{Estimation of the characteristic velocities} \label{sec:est}
In order to estimate $\lambda_j^L$ and $\lambda_j^R$ that are used in
\eqref{eq:HLL_flux}, we need to investigate into the expression of
numerical flux $\Pi_{\beta}^n F_{\beta \pm \frac{1}{2} e_j}$
carefully. Precisely, we should make sure the Riemann problem that
such a numerical flux solves.  In order to simplify the notation, we
consider only the following form:
\begin{equation} \label{eq:num_flux}
F_1 = \frac{\lambda_j^R \Pi_{f_1} (\xi_j f_1) -
  \lambda_j^L \Pi_{f_2,f_1} \Pi_{f_2} (\xi_j f_2) +
  \lambda_j^L \lambda_j^R (\Pi_{f_2,f_1} f_2 - f_1)}
  {\lambda_j^R - \lambda_j^L},
\end{equation}
where $f_1 \in F_M(\bu_1, \theta_1)$ and $f_2 \in F_M(\bu_2,
\theta_2)$, and both satisfy \eqref{eq:coef_restriction}.
For $f \in F_M(\bu, \theta)$ which satisfies \eqref
{eq:coef_restriction}, $\Pi_f$ is the projection
operator from $F_{\infty}(\bu, \theta)$ to $F_M(\bu, \theta)$,
which simply discards the terms with orders higher than $M$.
$\Pi_{f_2,f_1}$ is the projection operator from $F_M(\bu_2,
\theta_2)$ to $F_M(\bu_1, \theta_1)$. Then if we take
\begin{equation}
f_1 = f_{\beta}^n, \quad f_2 = f_{\beta + e_j}^n,
\end{equation}
$F_1$ is exactly the same as $\Pi_{\beta}^n F_{\beta + \frac
{1}{2} e_j}$ in the case of $\lambda_j^L < 0 < \lambda_j^R$.
Similarly, let
\begin{equation}
F_2 = \frac{\lambda_j^R \Pi_{f_1, f_2} \Pi_{f_1} (\xi_j f_1)
  - \lambda_j^L \Pi_{f_2} (\xi_j f_2)
  + \lambda_j^L \lambda_j^R (f_2 - \Pi_{f_1,f_2} f_1)}
  {\lambda_j^R - \lambda_j^L}.
\end{equation}
Then $F_2$ is just $\Pi_{\beta}^n F_{\beta - \frac {1}{2} e_j}$
if $f_1 = f_{\beta - e_j}^n$ and $f_2 = f_{\beta}^n$ in the
case of $\lambda_j^L < 0 < \lambda_j^R$. Due to the similar
forms of $F_1$ and $F_2$, only $F_1$ is considered below.

The nature of $F_1$ can be depicted with the help of the following
proposition:
\begin{proposition} \label{prop:invertible}
$\Pi_{f_2,f_1}$ is invertible.
\end{proposition}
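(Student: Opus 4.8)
\emph{Proof sketch.} The plan is to reduce invertibility to injectivity and then kill the kernel with a moment argument. First I would observe that $F_M(\bu_1,\theta_1)$ and $F_M(\bu_2,\theta_2)$ are linear spaces of the \emph{same} finite dimension---the number $\binom{M+D}{D}$ of multi-indices $\alpha\in\bbN^D$ with $|\alpha|\leqslant M$---and that $\Pi_{f_2,f_1}$ is a linear map between them, so it suffices to show it is injective. To do this I would invoke the moment-conservation property of Proposition~\ref{prop:projection}, read with the labels $1$ and $2$ interchanged (the operator $\Pi_{f_2,f_1}$ is defined by exactly that recipe applied to a function of $F_M(\bu_2,\theta_2)$): if $g=\Pi_{f_2,f_1}f$, then $\int_{\bbR^D}p(\bxi)f(\bxi)\dd\bxi=\int_{\bbR^D}p(\bxi)g(\bxi)\dd\bxi$ for every $p\in P_M(\bbR^D)$.

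The key step---and the one where a little care is needed---is the uniqueness lemma: a function $h\in F_M(\bu,\theta)$ all of whose $P_M(\bbR^D)$-moments vanish is identically zero. To prove it I would make the affine change of variables $\bv=(\bxi-\bu)/\sqrt\theta$, under which $P_M(\bbR^D)$ is spanned by the tensor-product Hermite polynomials $\prod_{d=1}^D\He_{\gamma_d}(v_d)$ with $|\gamma|\leqslant M$, and then pair $h=\sum_{|\alpha|\leqslant M}h_\alpha\mathcal{H}_{\theta,\alpha}(\bv)$ against each such polynomial. Using the one-dimensional orthogonality relation $\int_{\bbR}\He_m(x)\He_n(x)\exp(-x^2/2)\dd x=m!\sqrt{2\pi}\,\delta_{m,n}$ coordinate by coordinate, the pairing $\int_{\bbR^D}\bigl(\prod_d\He_{\gamma_d}(v_d)\bigr)\mathcal{H}_{\theta,\alpha}(\bv)\dd\bxi$ equals $\theta^{-|\alpha|/2}\alpha_1!\cdots\alpha_D!$ when $\gamma=\alpha$ and $0$ otherwise. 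Hence the map from the coefficients $(h_\alpha)_{|\alpha|\leqslant M}$ to the moment vector is diagonal with nonzero diagonal entries, so vanishing moments force $h_\alpha\equiv0$, i.e.\ $h\equiv0$.

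Granting the lemma, injectivity is immediate: if $\Pi_{f_2,f_1}f=0$, the conservation identity makes every $P_M$-moment of $f$ equal to the corresponding moment of the zero function, hence zero, so $f\equiv0$; therefore $\Pi_{f_2,f_1}$ is bijective. As a by-product one gets that its inverse is $\Pi_{f_1,f_2}$, since for $f\in F_M(\bu_2,\theta_2)$ the function $\Pi_{f_1,f_2}\Pi_{f_2,f_1}f$ lies in $F_M(\bu_2,\theta_2)$ and shares all $P_M$-moments with $f$, so it equals $f$ by uniqueness. I would also note a shorter self-contained alternative that avoids moments altogether: in the coordinates $\bF(\tau)$ the projection is the composition of two coefficient isomorphisms with the time-one flow of the linear system \eqref{eq:ode_vec}, whose transition operator is invertible because $\bA(\tau)$ has continuous entries on $[0,1]$---indeed $(\htheta-1)\tau+1=(1-\tau)+\htheta\tau>0$ there---and in fact has unit determinant since $\bA(\tau)$ has vanishing trace. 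In either route the only genuine obstacle is checking that Proposition~\ref{prop:projection} (respectively the well-posedness of the ODE on $[0,1]$) applies verbatim to $\Pi_{f_2,f_1}$, which it does by the symmetry of the construction in the two index sets.
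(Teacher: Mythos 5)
Your main argument is essentially the paper's own proof: both rest on the moment-conservation identity of Proposition~\ref{prop:projection} together with the observation that an element of $F_M(\bu,\theta)$ whose moments against all of $P_M(\bbR^D)$ vanish must be zero (by Hermite orthogonality); the paper phrases this as showing $\Pi_{f_1,f_2}\Pi_{f_2,f_1}=I$ and $\Pi_{f_2,f_1}\Pi_{f_1,f_2}=I$ directly, while you phrase it as injectivity between spaces of equal finite dimension and recover $\Pi_{f_1,f_2}$ as the inverse afterwards---a cosmetic difference. Your parenthetical ODE-flow alternative (unipotent transition operator of \eqref{eq:ode_vec} since $\bA(\tau)$ is strictly upper triangular and continuous on $[0,1]$) is also sound and arguably more elementary, but it is not the route the paper takes.
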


\begin{proof}
Denote the projection operator from $F_M(\bu_1, \theta_1)$
to $F_M(\bu_2, \theta_2)$ as $\Pi_{f_1,f_2}$. We are going to
prove that $\Pi_{f_1,f_2} \Pi_{f_2,f_1}$ is the identity
operator. Proposition \ref{prop:projection} shows that for any $f
\in F_M(\bu_2, \theta_2)$ and $p \in P_M(\bbR^D)$,
\begin{equation}
\int_{\bbR^D} p(\bxi) f(\bxi) \dd \bxi =
  \int_{\bbR^D} p(\bxi) (\Pi_{f_2,f_1}f)(\bxi) \dd \bxi =
  \int_{\bbR^D} p(\bxi)
    (\Pi_{f_1,f_2} \Pi_{f_2,f_1}f)(\bxi) \dd \bxi.
\end{equation}
That is,
\begin{equation}
\int_{\bbR^D} p(\bxi)
  [(I - \Pi_{f_1,f_2}\Pi_{f_2,f_1})f](\bxi) \dd \bxi = 0.
\end{equation}
Choosing $p(\bxi) = \mathcal{H}_{\theta_1,\alpha}(\bv_1)
\exp(|\bv_1|^2/2)$ for $\alpha \in \bbN^D$, $|\alpha|
\leqslant M$ respectively, and making use of the orthogonality
of Hermite polynomials, it follows that
\begin{equation}
(I - \Pi_{f_1,f_2} \Pi_{f_2,f_1})f \equiv 0, \quad
  \forall f \in F_M(\bu_2, \theta_2).
\end{equation}
Similarly, it can be proved that $\Pi_{f_2,f_1} \Pi_{f_1,f_2}$
is also the identity operator. Thus $\Pi_{f_2,f_1}$ is invertible.
\end{proof}

Now let us turn back to the numerical flux \eqref{eq:num_flux}.
Let $\tilde{f}_2 = \Pi_{f_2,f_1} f_2$. Based on proposition
\ref{prop:invertible}, we rewrite \eqref{eq:num_flux} as
\begin{equation}
F_1 = \frac{\lambda_j^R \Pi_{f_1,f_1} \Pi_{f_1}
  (\xi_j \Pi_{f_1, f_1}^{-1} f_1) -
  \lambda_j^L \Pi_{f_2,f_1} \Pi_{f_2}
    (\xi_j \Pi_{f_2,f_1}^{-1} \tilde{f}_2) +
  \lambda_j^L \lambda_j^R (\tilde{f}_2 - f_1)}
  {\lambda_j^R - \lambda_j^L},
\end{equation}
where $\Pi_{f_1,f_1}$ is the projection from $F_M(\bu_1,
\theta_1)$ to itself, which is actually the identity operator.
Now it is clear that the corresponding Riemann problem of
$F_1$ is
\begin{equation} \label{eq:Riemann}
\begin{split}
& \frac{\partial f}{\partial t} + \frac{\partial}{\partial x}
  [\Pi_{f,f_1} \Pi_f (\xi_j \Pi_{f,f_1}^{-1}) f] = 0, \\
& \begin{cases}
    f(0, x) = f_1, & x < 0, \\
    f(0, x) = \tilde{f}_2, & x > 0.
  \end{cases}
\end{split}
\end{equation}
Here $f$ always lies in $F_M(\bu_1,\theta_1)$, and the meanings
of $\Pi_{f,f_1}$ and $\Pi_f$ have been changed a little. Suppose
$\bu$ and $\theta$ are the mean velocity and temperature
associated with $f$, whose explicit expressions can be obtained
from \eqref{eq:macro_vars}.  Then $\Pi_{f,f_1}$ is defined as
the projection operator from $F_M(\bu, \theta)$ to $F_M(\bu_1,
\theta_1)$, and $\Pi_f$ is defined as the projection operator
form $F_{M+1}(\bu, \theta)$ to $F_M(\bu, \theta)$.

The characteristic velocities of Riemann problem \eqref{eq:Riemann} 
seem to be difficult to obtain. Therefore, in order to give an
estimation of $\lambda_j^L$ and $\lambda_j^R$, we choose a fixed
distribution function $f^* \in F_M(\bu_1, \theta_1)$ that lies
``between'' $f_1$ and $\tilde{f}_2$, and linearize \eqref{eq:Riemann}
as
\begin{equation} \label{eq:lattice}
\frac{\partial f}{\partial t} + \Pi_{f^*,f_1} \Pi_{f^*} \left(
  \xi_j \Pi_{f^*,f_1}^{-1} \frac{\partial f}{\partial x}
\right) = 0.
\end{equation}
Thus, we only need to estimate the eigenvalues of $\Pi_{f^*,f_1}
\Pi_{f^*} \xi_j \Pi_{f^*,f_1}^{-1}$, which is an operator on
$F_M(\bu_1,\theta_1)$. Since $\Pi_{f^*,f_1}$ is linear and invertible,
the problem can be further simplified as the estimation of eigenvalues
of $\Pi_{f^*} \xi_j$, which is an operator on $F_M(\bu^*, \theta^*)$.
Taking $\bv^* = (\bxi - \bu^*) / \sqrt{\theta^*}$, we have
\begin{equation} \label{eq:pi_star}
\Pi_{f^*} \xi_j = \Pi_{f^*} (u_j^* + v_j^*\sqrt{\theta^*})
  = u_j^* I + \sqrt{\theta^*} \Pi_{f^*} v_j^*.
\end{equation}
For the eigenvalues of $\Pi_{f^*} v_j^*$, we have the following
proposition:
\begin{proposition} \label{prop:eigenvalue}
The eigenvalues of $\Pi_{f^*} v_j^*$ are formed by all the zeros
of $\He_{m+1}(x)$, $m = 0,\cdots,M$.
\end{proposition}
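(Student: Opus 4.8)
The plan is to diagonalize $\Pi_{f^*}v_j^*$ by noticing that it acts on the Hermite basis of $F_M(\bu^*,\theta^*)$ through a three-term recursion that touches only the $j$-th component of the multi-index; the space therefore splits into invariant one-dimensional families of basis functions, and on each family the operator is a truncated Jacobi matrix of the Hermite polynomials.

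First I would record the action on a single basis function. Reading off \eqref{eq:flux} together with $\xi_j=u_j^*+\sqrt{\theta^*}\,v_j^*$ gives
\begin{equation*}
v_j^*\,\mathcal{H}_{\theta^*,\alpha}(\bv^*)
 = \sqrt{\theta^*}\,\mathcal{H}_{\theta^*,\alpha+e_j}(\bv^*)
 + \frac{\alpha_j}{\sqrt{\theta^*}}\,\mathcal{H}_{\theta^*,\alpha-e_j}(\bv^*),
\end{equation*}
and after applying $\Pi_{f^*}$, which drops the order-$(M+1)$ term, the first summand survives exactly when $|\alpha|\leqslant M-1$. Since only $\alpha_j$ changes, for each multi-index $\gamma$ with $\gamma_j=0$ the subspace $V_\gamma$ spanned by $\mathcal{H}_{\theta^*,\gamma+ke_j}$, $0\leqslant k\leqslant m$ with $m:=M-|\gamma|$, is invariant under $\Pi_{f^*}v_j^*$, and $F_M(\bu^*,\theta^*)=\bigoplus_\gamma V_\gamma$.

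Next I would compute the spectrum on one block. In the ordered basis $\mathcal{H}_{\theta^*,\gamma},\dots,\mathcal{H}_{\theta^*,\gamma+me_j}$ the operator is represented by the $(m+1)\times(m+1)$ tridiagonal matrix $A_\gamma$ with zero diagonal, subdiagonal entries all equal to $\sqrt{\theta^*}$, and superdiagonal entries $1/\sqrt{\theta^*},2/\sqrt{\theta^*},\dots,m/\sqrt{\theta^*}$; the would-be $(m,m+1)$ coupling is absent because that term got truncated. Conjugating $A_\gamma$ by $\operatorname{diag}\!\big(1,\sqrt{\theta^*},\theta^*,\dots,(\theta^*)^{m/2}\big)$ replaces it by the standard truncated Jacobi matrix of the probabilists' Hermite polynomials (subdiagonal $1$'s, superdiagonal $1,2,\dots,m$); its leading principal minors $D_n$ satisfy $D_0=1$, $D_1=\lambda$, and $D_n=\lambda D_{n-1}-(n-1)D_{n-2}$, which is exactly the recursion defining $\He_n$. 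Hence $\det(\lambda I-A_\gamma)=\He_{m+1}(\lambda)$, so the eigenvalues of $\Pi_{f^*}v_j^*$ restricted to $V_\gamma$ are precisely the zeros of $\He_{m+1}$. (A direct cofactor expansion along the last row reproduces the same recursion, if one prefers to avoid the similarity transform.)

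Finally I would assemble the blocks: as $\gamma$ runs over all multi-indices with $\gamma_j=0$, the integer $m=M-|\gamma|$ takes every value in $\{0,1,\dots,M\}$, so the spectrum of $\Pi_{f^*}v_j^*$ is $\bigcup_{m=0}^{M}\{x:\He_{m+1}(x)=0\}$, as claimed. I expect the only genuinely fiddly step to be pinning down the characteristic polynomial of the \emph{truncated} block — i.e. being sure that removing the single order-$(M+1)$ term yields $\He_{m+1}$ and not a nearby polynomial — and the reduction to the textbook Hermite Jacobi matrix is the cleanest way to absorb the $\theta^*$-dependent rescalings while doing so.
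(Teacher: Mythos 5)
Your proof is correct, and it takes a genuinely different route from the paper's. Both arguments exploit the same underlying structure --- $\Pi_{f^*}v_j^*$ only shifts the $j$-th component of the multi-index, so everything reduces to a one-variable Hermite computation on each family $\{\mathcal{H}_{\theta^*,\gamma+ke_j}\}_{k=0}^{M-|\gamma|}$ --- but they certify the spectrum differently. The paper exhibits explicit eigenvectors: for each zero $x_i$ of $\He_{m+1}$ it takes the Lagrange interpolation polynomial $p_{i,m}$ at the $m+1$ Gauss--Hermite nodes, shows $p_{i,m}(v_j^*)\mathcal{H}_{\theta^*,\alpha}(\bv^*)$ (with $\alpha_j=0$, $|\alpha|=M-m$) is an eigenvector by evaluating both sides of the eigenvalue equation at the nodes and invoking uniqueness of degree-$m$ interpolation, and then counts these eigenvectors against $\dim F_M(\bu^*,\theta^*)$ to exclude any further spectrum. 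You instead make the block decomposition explicit and compute the characteristic polynomial of each truncated tridiagonal block, recognizing the determinant recursion $D_n=\lambda D_{n-1}-(n-1)D_{n-2}$ as the Hermite recursion, so that each block's characteristic polynomial is exactly $\He_{m+1}$. These are the two classical faces of the same fact (eigenvalues of the truncated Jacobi matrix equal the Gauss quadrature nodes, i.e.\ the zeros of the orthogonal polynomial). The paper's route hands you a full eigenbasis, hence diagonalizability, essentially for free; your route makes the multiplicities transparent (each zero of $\He_{m+1}$ occurs once for every $\gamma$ with $\gamma_j=0$ and $|\gamma|=M-m$) and requires no prior guess of the eigenvectors. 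The step you flagged as fiddly is handled correctly: dropping the single order-$(M+1)$ coupling is precisely what terminates the minor recursion at $\He_{m+1}$ rather than some perturbation of it, and your diagonal conjugation disposes of the $\theta^*$-scalings cleanly.
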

\begin{proof}
Suppose $m \in \{0, \cdots, M\}$, and all zeros of $\He_{m+1}(x)$
are denoted as $x_0, \cdots, x_m$. Note that all $x_i$'s are
real and different (see e.g. \cite{Shen}), so we can assume that
\begin{equation}
x_0 < \cdots < x_m.
\end{equation}
For any $i \in \{0, \cdots, m\}$, there exists a unique polynomial
$p_{i,m}(x) \in P_m(x)$ that satisfies $p_{i,m}(x_k) = \delta_{ik}$,
$k = 0,\cdots,m$. Let $\alpha \in \mathbb{N}^D$ satisfy $|\alpha|
= M - m$ and $\alpha_j = 0$. We are going to prove
\begin{equation} \label{eq:eigenvector}
\Pi_{f^*} [v_j^* p_{i,m}(v_j^*) \mathcal{H}_{\theta^*,\alpha}(\bv^*)]
  = x_i p_{i,m}(v_j^*) \mathcal{H}_{\theta^*,\alpha}(\bv^*).
\end{equation}
The definition of $\Pi_{f^*}$ shows
\begin{equation} \label{eq:proj}
\Pi_{f^*} [v_j^* p_{i,m}(v_j^*) \mathcal{H}_{\theta^*,\alpha}(\bv^*)]
  = v_j^* p_{i,m}(v_j^*) \mathcal{H}_{\theta^*,\alpha}(\bv^*) -
    C_{j,\alpha} \He_{m+1}(v_j^*)
      \mathcal{H}_{\theta^*,\alpha}(\bv^*),
\end{equation}
where $C_{j,\alpha}$ is a properly selected constant such that
\eqref{eq:proj} lies in $F_M(\bu^*, \theta^*)$. Thus, for any
$k \in \{0, \cdots, m\}$ and $\bv_k^*$ satisfying $v_{k,j}^* =
x_k$, we have
\begin{equation}
\Pi_{f^*} [v_j^* p_{i,m}(v_j^*)
  \mathcal{H}_{\theta^*,\alpha}(\bv^*)] \big|_{\bv^* = \bv_k^*}
= x_k \delta_{ik} \mathcal{H}_{\theta^*,\alpha} (\bv_k^*).
\end{equation}
Then \eqref{eq:eigenvector} holds due to the uniqueness of $p_{i,m}$.

It remains to prove that $\Pi_{f^*} v_j^*$ has no other eigenvalues.
Let us count how many eigenvectors are included in the form of
$p_{i,m}(v_j^*) \mathcal{H}_{\theta^*, \alpha}(\bv^*)$. Consider an
arbitrary $\tilde{\alpha} \in \bbN^D$ and $|\tilde{\alpha}| \leqslant
M$. Let
\begin{equation}
i = \tilde{\alpha}_j, \quad \alpha = \tilde{\alpha} - i e_j,
  \quad m = M - |\alpha|.
\end{equation}
Obviously $i, \alpha$ and $m$ satisfy $0 \leqslant i \leqslant M$ and
$\alpha_j = 0$. Thus each $\tilde{\alpha}$ uniquely determines an
eigenvector. However, the number of such $\tilde{\alpha}$'s are equal
to the dimension of space $F_M(\bu^*, \theta^*)$, so $\Pi_{f^*} v_j^*$
has no other eigenvectors, thus no other eigenvalues, either.
\end{proof}

According to Proposition \ref{prop:eigenvalue}, the smallest and
largest eigenvalues of $\Pi_{f^*} v_j^*$ are the smallest and largest
zeros of $\He_{M+1}(x)$, denoted by $x_0$ and $x_M$.
\eqref{eq:pi_star} shows that the smallest and largest eigenvalues of
$\Pi_{f^*} \xi_j^*$ are $u_j^* + x_0 \sqrt{\theta^*}$ and $u_j^* + x_M
\sqrt{\theta^*}$ respectively. Since $f^*$ lies ``between'' $f_1$ and
$\tilde{f}_2$, we use
\begin{gather}
\label{eq:lambda_L}
\lambda_j^L = \min \{u_{1,j} + x_0 \sqrt{\theta_1},
  u_{2,j} + x_0 \sqrt{\theta_2} \}, \\
\label{eq:lambda_R}
\lambda_j^R = \max \{u_{1,j} + x_M \sqrt{\theta_1},
  u_{2,j} + x_M \sqrt{\theta_2} \}
\end{gather}
while computing numerical fluxes. In our implementation, we use
the subroutine in ALGLIB \cite{AIGLIB} to calculate the roots of
$\He_{M+1}$, and $\lambda_j^L$ and $\lambda_j^R$ are also used
in the CFL condition to determine the time step length.

\begin{remark}
The numerical method described in this section is only of the first
order. In order to extend it to higher order schemes, reconstruction
techniques need to be added to the finite volume scheme. Since
addition and subtraction between two distribution functions are
already available, it is only needed to determine a proper ``slope'',
which can probably be done with the help of the standard slope limters
used in the normal finite volume schemes.
\end{remark}

\subsection{Relation with the Grad-type moment method and the
LBE model}
The relation and difference between the Grad-type moment method
\cite{Grad, Grad1958} and the LBE (lattice Boltzmann equations)
model \cite{Succi} are summarized in \cite{Shan}, where both models
are considered to be some approximation to the Boltzmann equation
by an Hermite polynomial expansion. The expansion \eqref{eq:expansion}
and truncation \eqref{eq:finite_expansion} are exactly the same
as what have been done by Grad \cite{Grad}, which means the method
described above is actually solving the Grad-type moment equations.
For $M = 3, 4, 5$ and $D = 3$, it corresponds to the $20, 35,
56$-moment equations which take the complete $M$th order moments.
However, systems such as $13, 26, 45$-moment equations are not
included. Those complete $M$th order moment equations are popular
in extended thermodynamics, see e.g. \cite{Muller,Au,Weiss}. A
software called $ET_{XX}$ \cite{ETXX} is developed by J. Au,
H. Struchtrup and M. Torrilhon to generate equations for arbitrary
order of moments, however, explicit moment equations, when written
in conservative form, require $O(M^{2D})$ calculations for the flux
function, and this is reduced to $O(M^D)$ in our numerical
formation.

Now that our numerical strategy is equivalent to Grad's moment
method, one can refer to \cite{Shan} for the precise difference
between our method and the LBE model. According to \cite{Shan}, the
linearized equation \eqref{eq:lattice} can be interpreted as the LBE
model, which places the center of the lattice at $\bu^*$. Thus such
linearization is reasonable.

\section{Regularization of the moment method}
The main drawback of Grad's moment method is that its hyperbolicity
yields unphysical subshocks \cite{Grad1952}. The behavior in the
case of high order moment equations can be found in \cite{Au, Weiss}.
\cite{Grad1958} provided a way to regularize Grad's moment equations,
and it was further studied in \cite{Struchtrup2003, Torrilhon2004} as
R13 equations. Now we follow the regularization technique in
\cite{Struchtrup2003} and regularize the numerical method introduced
in Section \ref{sec:Grad} in exactly the same way.

\subsection{Chapman-Enskog expansion around the truncated distribution}
Independent of Grad's method, Chapman-Enskog expansion \cite{Chapman,
Enskog} is another important method for deriving equations of
macroscopic variables. Following the generic procedure of
Chapman-Enskog expansion, a scaling parameter $\varepsilon$ is
introduced on the right hand side of the Boltzmann equation. However,
according to \cite{Struchtrup2003, Struchtrup}, only the high order
part instead of the whole collision term is scaled:
\begin{equation} \label{eq:asym}
\frac{\partial f}{\partial t} + 
  \bxi \cdot \nabla_{\bx} f
= Q(f^0, f^0) + \frac{1}{\varepsilon} [Q(f,f) - Q(f^0, f^0)],
\end{equation}
where $f^0$ is a truncation of the distribution function \eqref
{eq:expansion} defined by
\begin{equation} \label{eq:truncation}
f^0(\bxi) = \sum_{|\alpha| \leqslant M} f_{\alpha}
  \mathcal{H}_{\theta, \alpha}(\bv).
\end{equation}
The scaled part $Q(f,f) - Q(f^0,f^0)$ will be denoted as $\tilde{Q}$
below. Let us apply the Chapman-Enskog expansion around $f^0$, i.e.
expand $f$ by
\begin{equation}
f = f^0 + \varepsilon f^1 + \varepsilon^2 f^2 + \cdots,
\end{equation}
and we require that the truncation at any term of this expansion
keeps the same values of all the moments with orders less than or
equal to $M$.  Suppose the scaled part of the collision term
$\tilde{Q}$ has a corresponding expansion
\begin{equation}
\tilde{Q} = \varepsilon \tilde{Q}^1 +
  \varepsilon^2 \tilde{Q}^2 + \cdots.
\end{equation}
It is reasonable to assume that $\tilde{Q}$ has no zeroth order
term since it has been taken away from $Q(f,f)$. Match the
zeroth order term on both sides of \eqref{eq:asym}, and we have
\begin{equation}
\tilde{Q}^1 = \frac{\partial f^0}{\partial t}
  + \bxi \cdot \nabla_{\bx} f^0 - Q(f^0, f^0).
\end{equation}
For any multi-index $\alpha$ with $|\alpha| > M$, multiplying
$\mathcal{H}_{\theta,\alpha}(\bv) \exp(|\bv|^2/2)$ on both sides
and then integrating the whole equality over $\bbR^D$ with
respect to $\bv$, since
\begin{equation}
\begin{split}
\frac{\partial f^0}{\partial t} \exp(|\bv|^2/2)
  & = \frac{\partial [f^0 \exp(|\bv|^2/2)]}{\partial t}
    - f^0 \cdot \frac{\partial [\exp(|\bv|^2/2)]}{\partial t} \\
& = \frac{\partial [f^0 \exp(|\bv|^2/2)]}{\partial t}
  + \left[ \frac{\bxi - \bu}{\sqrt{\theta}} \cdot
    \frac{\partial}{\partial t} \left(
      \frac{\bu}{\sqrt{\theta}}
    \right) \right] f^0 \exp(|\bv|^2/2),
\end{split}
\end{equation}
the orthogonality of Hermite polynomials leads to
\begin{equation} \label{eq:high-order}
\begin{split}
& \int_{\bbR^D} \mathcal{H}_{\theta,\alpha}(\bv)
  \tilde{Q}^1 \exp(|\bv|^2/2) \dd \bv \\
= & \int_{\bbR^D} \mathcal{H}_{\theta,\alpha}(\bv)
    G(\bxi f^0) \exp(|\bv|^2/2) \dd \bv
  - \int_{\bbR^D} \mathcal{H}_{\theta,\alpha}(\bv)
    Q(f^0,f^0) \exp(|\bv|^2/2) \dd \bv,
\end{split}
\end{equation}
where
\begin{equation} \label{eq:G}
G(\bxi f^0) = \left[
  \frac{1}{\sqrt{\theta}} \frac{\partial}{\partial t}
  \left( \frac{\bu}{\sqrt{\theta}} \right) + \nabla_{\bx}
\right] \cdot (\bxi f^0).
\end{equation}

Now the concrete form of the production term is required for
further calculation. Still, we adopt the simplest BGK model,
which gives
\begin{equation}
Q_{\mathrm{BGK}}(f^0) = -\nu (f^0 - f_M),
  \quad \tilde{Q}^1 = -\nu f^1.
\end{equation}
Thus the second term on the right hand side of \eqref{eq:high-order}
also vanishes, and \eqref{eq:high-order} becomes
\begin{equation}
\int_{\bbR^D} \mathcal{H}_{\theta,\alpha}(\bv)
  f^1 \exp(|\bv|^2/2) \dd \bv
= -\frac{1}{\nu} \int_{\bbR^D} \mathcal{H}_{\theta,\alpha}(\bv)
    G(\bxi f^0) \exp(|\bv|^2/2) \dd \bv.
\end{equation}
This shows that $f^1$ can be represented as
\begin{equation} \label{eq:f1_dis}
f^1 = -\frac{1}{\nu} \sum_{|\alpha| > M}
  [G(\bxi f^0)]_{\alpha} \mathcal{H}_{\theta,\alpha}(\bv),
\end{equation}
where the coefficient $[G(\bxi f^0)]_{\alpha}$ is the corresponding
coefficient of $G(\bxi f^0)$'s expansion in $F_{\infty}(\bu, \theta)$.

At last, we set $\varepsilon = 1$ and approximate $f$ by $f
\approx f^0 + f^1$. Since $f^1$ can be obtained from $f^0$
which is of finite dimension, when substituting such $f$
into the Boltzmann-BGK equation, a closed system can be
obtained without more truncations. For $D = 3$ and $M = 3,
4,5$, the R20, R35, R56 equations for the BGK model can be
obtained.

\subsection{The numerical method} \label{sec:reg_num}
In this subsection, we restrict our focus on the BGK model.
As in Section \ref{sec:Grad}, only $f^0$ is stored at each
time step. Note that for the BGK model, $f^0$ and $f^1$ are
decoupled in the production step, which can be implemented
exactly the same as that in Section \ref{sec:Grad}. Therefore,
we concentrate only on the convection step below.

Consider the original form of the numerical flux \eqref
{eq:HLL_flux}, and now $f_{\beta}^n$ is recognized as
$f_{\beta}^{n,0} + f_{\beta}^{n,1}$, where
\begin{equation}
f_{\beta}^{n,0} \in F_M(\bu_{\beta}^n, \theta_{\beta}^n),
  \quad \Pi_{\beta}^n f_{\beta}^{n,1} = 0.
\end{equation}
The latter equation follows from \eqref{eq:f1_dis}, \eqref
{eq:project} and the orthogonality of the Hermite polynomials.
Similarly, we have
\begin{equation}
\Pi_{\beta}^n f_{\beta \pm e_j}^{n,1} = 0,
  \quad \forall \beta \in \mathbb{Z}^n.
\end{equation}
Since only $f_{\beta}^{n+1*,0}$ is desired after the
convection step, the form of \eqref{eq:step1a} is
still adoptable. Therefore, we can mimic \eqref
{eq:num_flux} and write the numerical flux as
\begin{equation} \label{eq:new_flux}
\begin{split}
F_1 &= \frac{\lambda_j^R \Pi_{f_1} (\xi_j f_1^0) -
  \lambda_j^L \Pi_{f_2,f_1} \Pi_{f_2} (\xi_j f_2^0) +
  \lambda_j^L \lambda_j^R (\Pi_{f_2,f_1} f_2^0 - f_1^0)}
  {\lambda_j^R - \lambda_j^L} \\
& \qquad +\frac{\lambda_j^R \Pi_{f_1} (\xi_j f_1^1) -
  \lambda_j^L \Pi_{f_2,f_1} \Pi_{f_2} (\xi_j f_2^1)}
  {\lambda_j^R - \lambda_j^L}, \\
& := F_{11} + F_{12},
\end{split}
\end{equation}
where $f_1 = f_1^0 + f_1^1$, $f_2 = f_2^0 + f_2^1$, and $\Pi_{f_1}
f_1^1 = \Pi_{f_2} f_2^1 = 0$. As described in Section \ref{sec:Grad},
$F_{11}$ is just the flux for Grad's moment equations. As to $F_{12}$,
\eqref{eq:flux} and \eqref{eq:f1_dis} show
\begin{equation} \label{eq:F12}
\Pi_{f_i} (\xi_j f_i^1) = -\frac{1}{\nu}
  \sum_{|\alpha| = M} (\alpha_j + 1) [G(\bxi f_i^0)]_{\alpha + e_j}
    \mathcal{H}_{\theta_i,\alpha} (\bv_i),
  \quad i = 1,2.
\end{equation}
It is easy to see that in the expansions of $f_1$ and $f_2$, only the
coefficients with $|\alpha| = M + 1$ have effect on the numerical flux
$F_1$. We can also find that when $M \geqslant 3$, $F_{12}$ in
\eqref{eq:new_flux} has actually no contribution to the velocity $\bu$
and temperature $\theta$ for the next time step, since \eqref{eq:F12}
reveals that the Grad's expansion of $F_{12}$ contains only terms with
orders higher than or equal to $M$, but \eqref{eq:macro_vars} tells
that $\bu$ and $\theta$ are only relevant with the coefficients with
orders less than or equal to $2$. That is to say, for all $\beta$'s,
$\bu_{\beta}^{n+1}$ and $\theta_{\beta}^{n+1}$ can be solved using the
method introduced in Section \ref{sec:Grad}, without adding the
``regularizing part of numerical flux'' $F_{12}$. Thus, the time
derivative in \eqref{eq:G} can be explicitly approximated by
\begin{equation}
\left[ \frac{\partial}{\partial t}
  \left( \frac{\bu}{\sqrt{\theta}} \right)
\right]_{\beta}^n \approx
  \frac{1}{\Delta t^n} \left(
    \frac{\bu_{\beta}^{n+1}}{\sqrt{\theta_{\beta}^{n+1}}} -
    \frac{\bu_{\beta}^n}{\sqrt{\theta_{\beta}^n}}
  \right).
\end{equation}
Now, in order to approximate $[G(\bxi f^0)]_{\alpha}$, it is only
needed to approximate
\begin{equation}
[\nabla_{\bx} \cdot (\bxi f^0)]_{\alpha},
  \quad \alpha \in \bbN^D, \quad |\alpha| = M + 1.
\end{equation}

For a fixed point $\bx_0 \in \bbR^N$, we have
\begin{equation}
\begin{split}
[\nabla_{\bx} \cdot (\bxi f^0)]_{\alpha}(\bx_0)
  &= \sum_{j=1}^N \left[
    \frac{\partial}{\partial x_j} (\xi_j f^0)
  \right]_{\alpha} (\bx_0) \\
&= \sum_{j=1}^N  \left[
  C_{\theta_0,\alpha} \int_{\bbR^D}
  \frac{\partial}{\partial x_j} (\xi_j f^0)
  \mathcal{H}_{\theta_0,\alpha} \dd \bv
\right]_{\bx = \bx_0} \quad
  (\bv = \frac{\bxi - \bu_0}{\sqrt{\theta_0}})\\
&= \sum_{j=1}^N \left[ \frac{\partial}{\partial x_j}
  \left(
    C_{\theta_0,\alpha} \int_{\bbR^D}
    (\xi_j f^0) \mathcal{H}_{\theta_0,\alpha} \dd \bv
  \right) \right]_{\bx = \bx_0} \\
&= \sum_{j=1}^N \left[
  \frac{\partial}{\partial x_j} \left(
    \tilde{\Pi}_{\bu_0, \theta_0} (\xi_j f^0)
  \right)_{\alpha} \right]_{\bx = \bx_0},
\end{split}
\end{equation}
where $C_{\theta_0,\alpha}$ is defined by \eqref{eq:C}, $\bu_0$ and
$\theta_0$ are the mean velocity and temperature at point $\bx_0$, and
$\tilde{\Pi}_{\bu_0, \theta_0}$ is the projection operator to the
space $F_{M+1}(\bu_0, \theta_0)$. Now consider the discrete
circumstance. For each $\beta \in \mathbb{Z}^N$,
$\tilde{\Pi}_{\bu_{\beta}^n, \theta_{\beta}^n} (\xi_j f_{\beta \pm
e_j}^{n,0})$ can be obtained according to \eqref{eq:flux} and
\eqref{eq:finite_ode}.  Without confusion, the superscript ``$n$''
will be omitted below, and $\tilde{\Pi}_{\bu_{\beta},\theta_{\beta}}$
is simplified as $\tilde{\Pi}_{\beta}$. Mimicing the method in
\cite{Torrilhon2006}, suppose
\begin{gather}
\label{eq:grad1}
d_1 = \frac{1}{\Delta x_j} \left[ \left(
  \tilde{\Pi}_{\beta} (\xi_j f_{\beta+e_j}^0)
\right)_{\alpha} - \left(
  \xi_j f_{\beta}^0
\right)_{\alpha} \right], \\
\label{eq:grad2}
d_2 = \frac{1}{\Delta x_j} \left[ \left(
  \xi_j f_{\beta}^0
\right)_{\alpha} - \left(
  \tilde{\Pi}_{\beta} (\xi_j f_{\beta-e_j}^0)
\right)_{\alpha} \right],
\end{gather}
and then we can reconstruct the spatial partial derivative
by a central difference in the smooth case:
\begin{equation}
\left[ \delta_{x_j} \left(
  \tilde{\Pi}_{\beta} (\xi_j f^0)
\right)_{\alpha} \right]_{\beta} = \frac{d_1 + d_2}{2},
\end{equation}
or the van Leer reconstruction in the discontinuous case:
\begin{equation}
\left[ \delta_{x_j} \left(
  \tilde{\Pi}_{\beta} (\xi_j f^0)
\right)_{\alpha} \right]_{\beta}^{\textrm{van Leer}} =
\frac{|d_1| d_2 + |d_2| d_1}{|d_1| + |d_2|}.
\end{equation}
Thus we have
\begin{equation} \label{eq:f^1}
\begin{split}
\tilde{\Pi}_{\beta} f_{\beta}^1 & = -\frac{1}{\nu} 
  \sum_{|\alpha| = M + 1} \Bigg\{
    \sum_{j=1}^N \left[ \delta_{x_j} \left(
      \tilde{\Pi}_{\beta} (\xi_j f^0)
    \right)_{\alpha} \right]_{\beta} \\
& \qquad + \sum_{j=1}^D \left[ \left(
    \sqrt{\frac{\theta_{\beta}^n}{\theta_{\beta}^{n+1}}}
      u_{\beta,j}^{n+1} - u_{\beta,j}^n
  \right) \frac{f_{\beta, \alpha - e_j}^0}{\Delta t} \right]
\Bigg\} \mathcal{H}_{\theta_{\beta},\alpha},
\end{split}
\end{equation}
where \eqref{eq:flux} has been incorporated into the expression
of $G(\bxi f^0)$.

From \eqref{eq:F12}, we can conclude that
\begin{equation} \label{eq:1ord_term}
\Pi_{\beta} (\xi_j f_{\beta}^1) =
  \Pi_{\beta} (\xi_j \tilde{\Pi}_{\beta} f_{\beta}^1),
  \quad j = 1,\cdots,N,
\end{equation}
which appears twice on the right hand side of \eqref
{eq:new_flux}. This expression can be computed cheaply
since $\tilde{\Pi}_ {\beta} f_{\beta}^1$ has only
$O(M^{D-1})$ non-zero coefficients. The last term
in \eqref{eq:new_flux} requires us to compute
\begin{equation}
\Pi_{\beta} \Pi_{\beta \pm e_j}
  (\xi_j f_{\beta \pm e_j}^1),
\end{equation}
i.e. to apply another projection to the result of
\eqref{eq:1ord_term}. Note that $F_{\alpha}$ does not
appear on the right hand side in \eqref{eq:finite_ode},
and for all $\alpha \in \bbN^D$, $|\alpha| \neq M$,
the coefficients of $\mathcal{H}_{\theta_{\beta},
\alpha}$ in the expansion of \eqref{eq:1ord_term} are
zero, so all coefficients do not change after projection, 
saying if
\begin{equation}
\Pi_{\beta \pm e_j} (\xi_j f_{\beta \pm e_j}^1) =
  \sum_{|\alpha| = M} C_{\pm,\alpha}
  \mathcal{H}_{\theta_{\beta \pm e_j}, \alpha}
    (\bv_{\beta \pm e_j}),
\end{equation}
then
\begin{equation} \label{eq:invariant_coef}
\Pi_{\beta} \Pi_{\beta \pm e_j} (\xi_j f_{\beta \pm e_j}^1)
  = \sum_{|\alpha| = M} C_{\pm,\alpha}
  \mathcal{H}_{\theta_{\beta}, \alpha} (\bv_{\beta}).
\end{equation}
This is the reason why a coefficient $\theta^{-\frac{\theta_d+1}{2}}$
is multiplied in the definition of basis functions \eqref{eq:H}.
Until now, the calculation of numerical fluxes for the regularized
moment equations is thoroughly clarified.

\begin{remark} \label{rem:combine}
In \eqref{eq:grad1} and \eqref{eq:grad2}, $\tilde{\Pi}_{\beta}
(\xi_j f_{\beta \pm e_j}^0)$ needs to be computed, while in the
convection term of Grad's moment equations, or rather, $F_{11}$ in
\eqref{eq:new_flux}, $\Pi_{\beta} (\xi_j f_{\beta \pm e_j}^0)$ also
needs to be computed. Since $\Pi_{\beta} = \Pi_{\beta}
\tilde{\Pi}_{\beta}$, and the second $\Pi_{\beta}$ is trivial, these
two projections can be combined into one.
\end{remark}

\begin{remark}
As is known, the regularized moment equations contain second order
derivative terms, so the CFL condition for the method above is
\begin{equation} \label{eq:CFL}
\Delta t \sum_{j=1}^N \frac{\lambda_{j,\max}}{\Delta x_j}
  \left(1 + \dfrac{4\nu_{\max}}{\Delta x_j}\right) < 1,
\end{equation}
where $\lambda_{j,\max}$ is the maximum of $|\lambda_j^L|$ and
$|\lambda_j^R|$ on all cells, and $\nu_{\max}$ is the maximal
collision frequency. Note that for this regularized model, when
calculating $\lambda_j^L$ and $\lambda_j^R$ using \eqref{eq:lambda_L}
and \eqref{eq:lambda_R}, the roots of $\He_{M+1}(x)$ should be
replaced by the roots of $\He _{M+2}(x)$ since we use the $(M + 1)$-th
order moments of $f^1$. Eq. \eqref{eq:CFL} leads to a relatively small
time step length. The time step length can be enlarged following the
methods in \cite{Xu,Torrilhon2006,Jeltsch}, which is not yet
implemented in our program.
\end{remark}

\begin{remark}
In our implementation, the classical fourth-order Runge-Kutta method
is used to solve \eqref{eq:finite_ode}. Since the solution of
regularized moment equations is generally smooth, most of the time,
only one Runge-Kutta step is able to provide enough accuracy for such
local projections. In the case of sharp initial values, some more steps
are performed. However, such a situation only appears at the very
beginning of the calculation.
\end{remark}

\subsection{Outline of the algorithm}
As a summarization, our numerical method for the regularized
moment equations is outlined as below:
\begin{enumerate}
\setlength{\itemsep}{0pt}
\setlength{\parskip}{0pt}
\item Let $n = 0$ and set the initial state $f_{\beta}%
  ^{n,0}$ for all $\beta$'s.
\item \label{item:CFL}
  Determine the time step length according to the CFL
  condition \eqref{eq:CFL}.
\item Apply the convection step in Section \ref{sec:Grad}.
\item Obtain $\tilde{\Pi}_{\beta} f_{\beta}^{n,1}$ as 
  in Section \ref{sec:reg_num}.
\item Add the ``regularizing part of numerical flux'' to
  $f_{\beta}^{n+1,0}$.
\item Apply the production step at the end of
  Section \ref{sec:outline}.
\item Let $n \leftarrow n+1$, and return to step \ref{item:CFL}.
\end{enumerate}

\subsection{Boundary conditions}
Currently, boundary conditions are not available in our numerical
scheme. The boundary condition is always a delicate issue in the
deduction of the macroscopic equations. Generally speaking, the
kinetic boundary condition introduced by Maxwell \cite{Maxwell} is
expected to be added to the algorithm. As in \cite{Grad} and \cite
{Torrilhon2008}, half-space integration needs to be performed during
the construction of the distribution function in the ghost cells. This
is possible to obtain for the discrete distributions $f^0 + \tilde
{\Pi}_{\bu,\theta} f^1$, thanks to the recursion relation of the
Hermite polynomials, but such integration requires a subroutine with a
time complexity of $O(M^{2D})$, which results in much more
computational time on the cells next to the wall. The details are
still in preparation.


\section{Numerical examples}
In this section, 1D and 2D numerical examples of our method for the
regularized moment equations are presented. In all these tests,
the global Knudsen number is denoted as $\Kn$, and the collision
frequency $\nu$ (see \eqref{eq:BGK}) is substituted by $\rho(t,\bx) /
\Kn$. The CFL number is always $0.8$. We use the POSIX multi-threaded
technique in our simulation, and at most $8$ CPU cores are used.

\subsection{One dimensional case}
Two 1D examples are studied as follows. Since the boundary
conditions are currently out of our consideration, only free or
periodic boundary conditions are used in the following examples.
In this section, some numerical solutions of the Boltzmann-BGK
equation are provided, which are obtained according to the algorithm
described in \cite{Mieussens}.

\subsubsection{Shock tube test}
The shock tube test has been investigated in many works due to its
fundamental role in characterizing the hyperbolicity of equations.
For Grad's 13-moment system, it is shown in \cite{Torrilhon2000} that
unphysical subshocks can be found. However, in \cite{Torrilhon2006},
the numerical result of the Riemann problem shows that R13 equations
are able to capture these waves correctly. Here we first repeat this
test in \cite{Torrilhon2006} to obtain similar results.

As in \cite{Torrilhon2006}, the initial conditions are
\begin{equation}
\rho(0, x) = \begin{cases}
  7.0, & x < 0, \\ 1.0, & x > 0,
\end{cases} \qquad
p(0, x) = \begin{cases}
  7.0, & x < 0, \\ 1.0, & x > 0,
\end{cases}
\end{equation}
where the pressure $p$ equals to $\rho \theta$. The velocity is zero
everywhere, and the fluid is in equilibrium everywhere.  The
computational domain is $[-1,1]$. In order to make a comparison, we
set $\Kn = 0.02$ and compute this model until $t = 0.3$ as in
\cite{Torrilhon2006}. The result of R20 equations ($M = 3$) is given
in Figure \ref{fig:ShockTube_Kn=0.02}. Compared with the result in
\cite{Torrilhon2006}, the plot of density agrees with that in
\cite{Torrilhon2006} very well. The plot of heat flux has a similar
shape with that in \cite{Torrilhon2006}, but they differ in magnitude.
This is due to the different models in the collision operator. The BGK
model fails to predict the correct Prandtl number, which results in
the incorrectness of heat flux.

\begin{figure}[!ht]
\centering
\includegraphics[scale=.45]{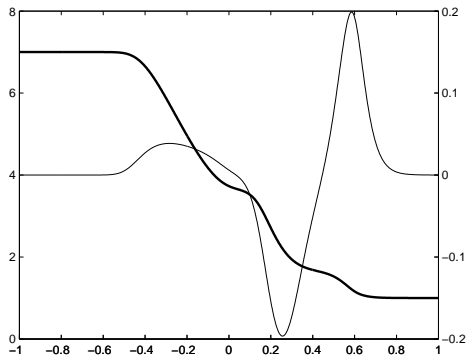}
\caption{R20 results for the shock tube test with $\Kn = 0.02$.  The
thick line with the left $y$-axis is the plot of density, while the
thin line with the right $y$-axis is the plot of heat flux. 1000 grids
are used for calculation.}
\label{fig:ShockTube_Kn=0.02}
\end{figure}

Since $Kn$ is small, almost the same results are produced by $M > 3$
and no subshocks are found. This illustrates the capability of our
method in capturing physical waves.  Note that the peak and valley of
the heat flux lies in the ``discontinuities'' of the density, which
indicates the non-equilibrium. The negative part ($x < 0$) has smaller
heat flux due to the larger density or collision term.

Now we set $\Kn = 0.5$ to investigate the numerical behavior of our
method in the case of greater Knudsen number. The curves of density
and temperature for $M = 3$ to $8$, as well as the numerical solutions
of the Boltzmann-BGK equation, are plotted in Figure
\ref{fig:ShockTube_Kn=0.5}. With this Knudsen number, the
hyperbolicity of the regularized moment equations clearly turns to be
dominant. The solutions in Figure \ref{fig:ShockTube_Kn=0.5} exhibit a
similar behavior with those in \cite{Au}. In the region of $x < 0$,
all results are similar because of the high density behind the initial
shock. In front of the initial shock, both the density and temperature
are converging to the BGK solution, although the convergence rate is
much slower.

\begin{figure}[!ht]
\centering
\subfigure[R20, $M=3$]{
  \includegraphics[scale=.4]{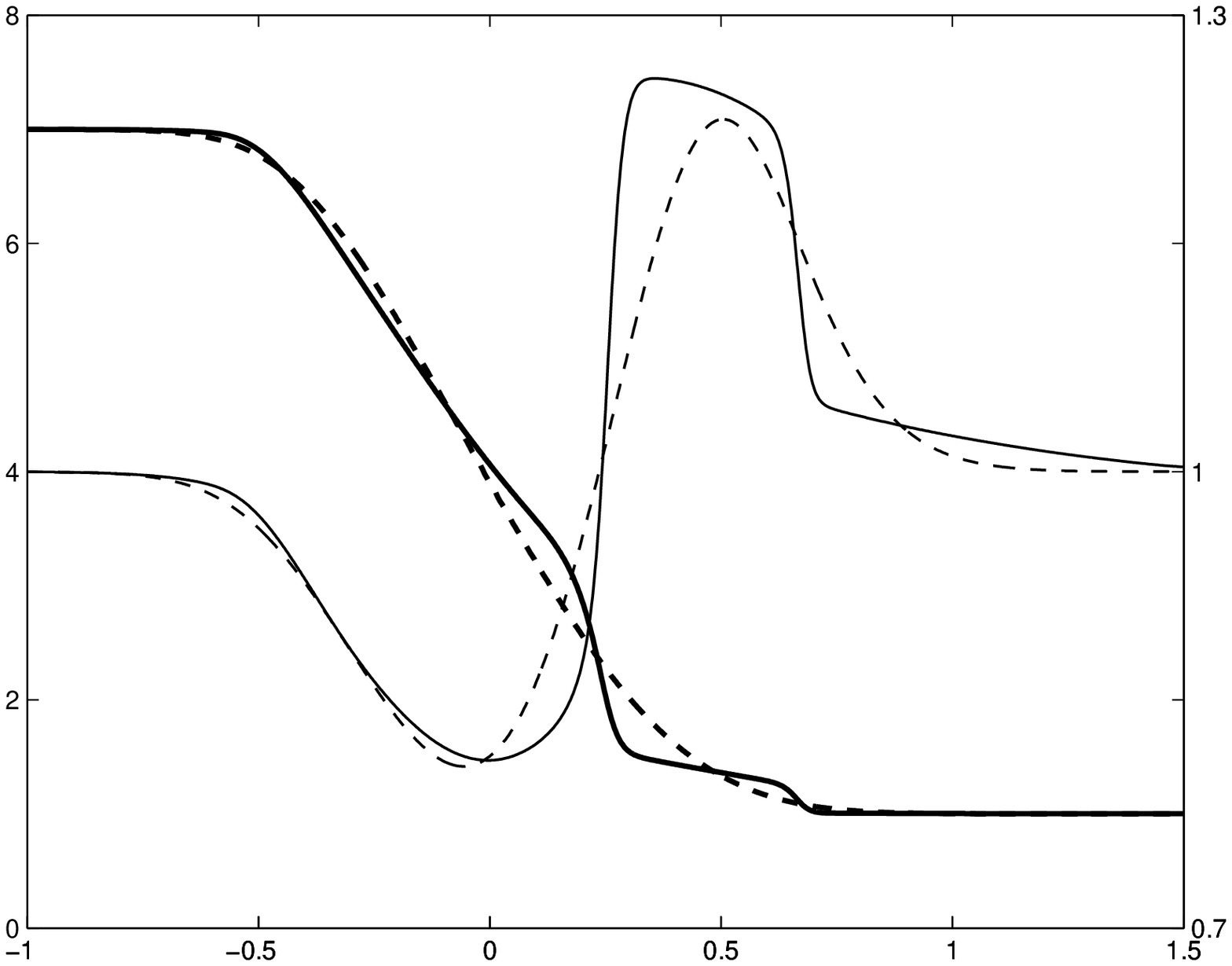}
}
\subfigure[R35, $M=4$]{
  \includegraphics[scale=.4]{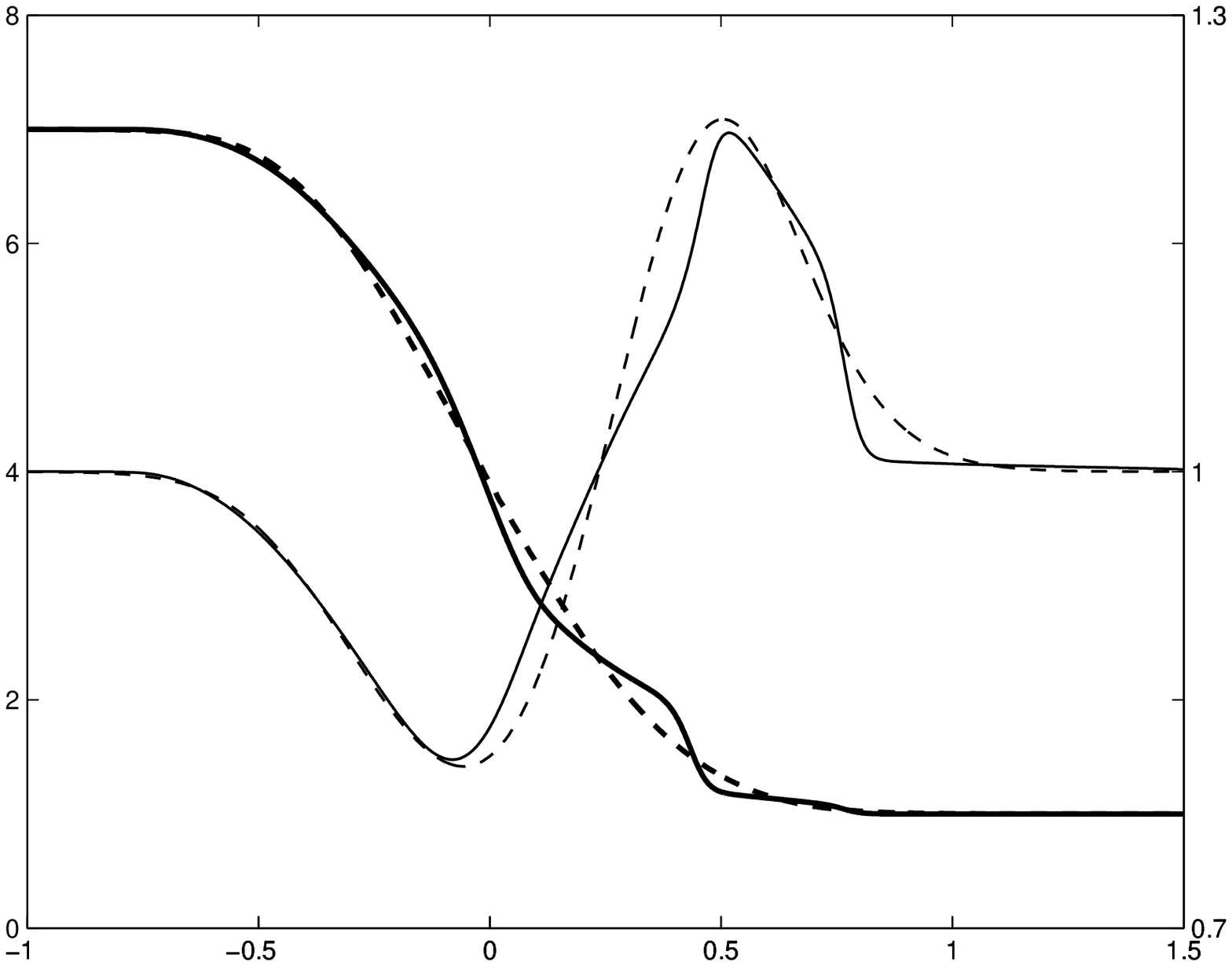}
}
\subfigure[R56, $M=5$]{
  \includegraphics[scale=.4]{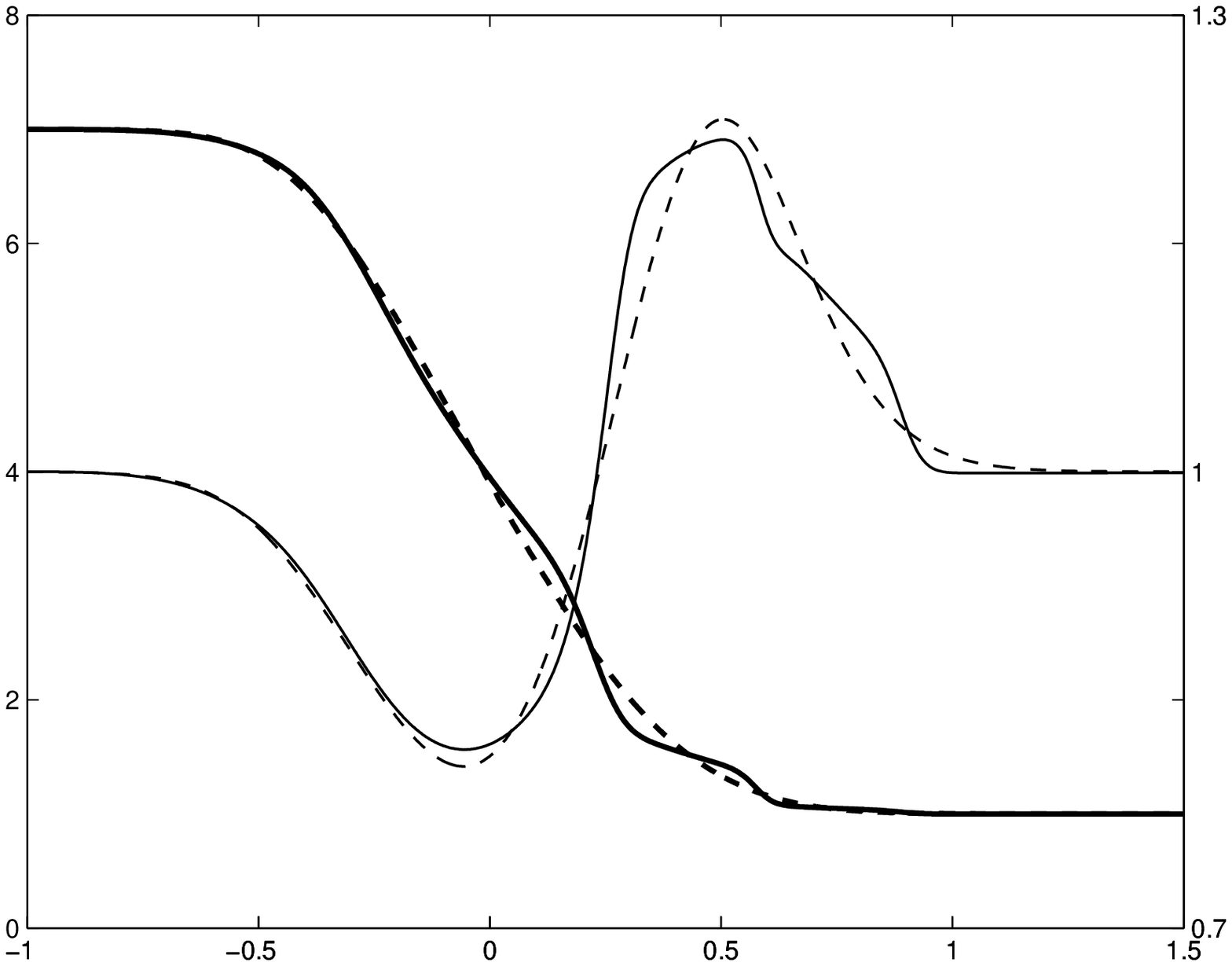}
}
\subfigure[R84, $M=6$]{
  \includegraphics[scale=.4]{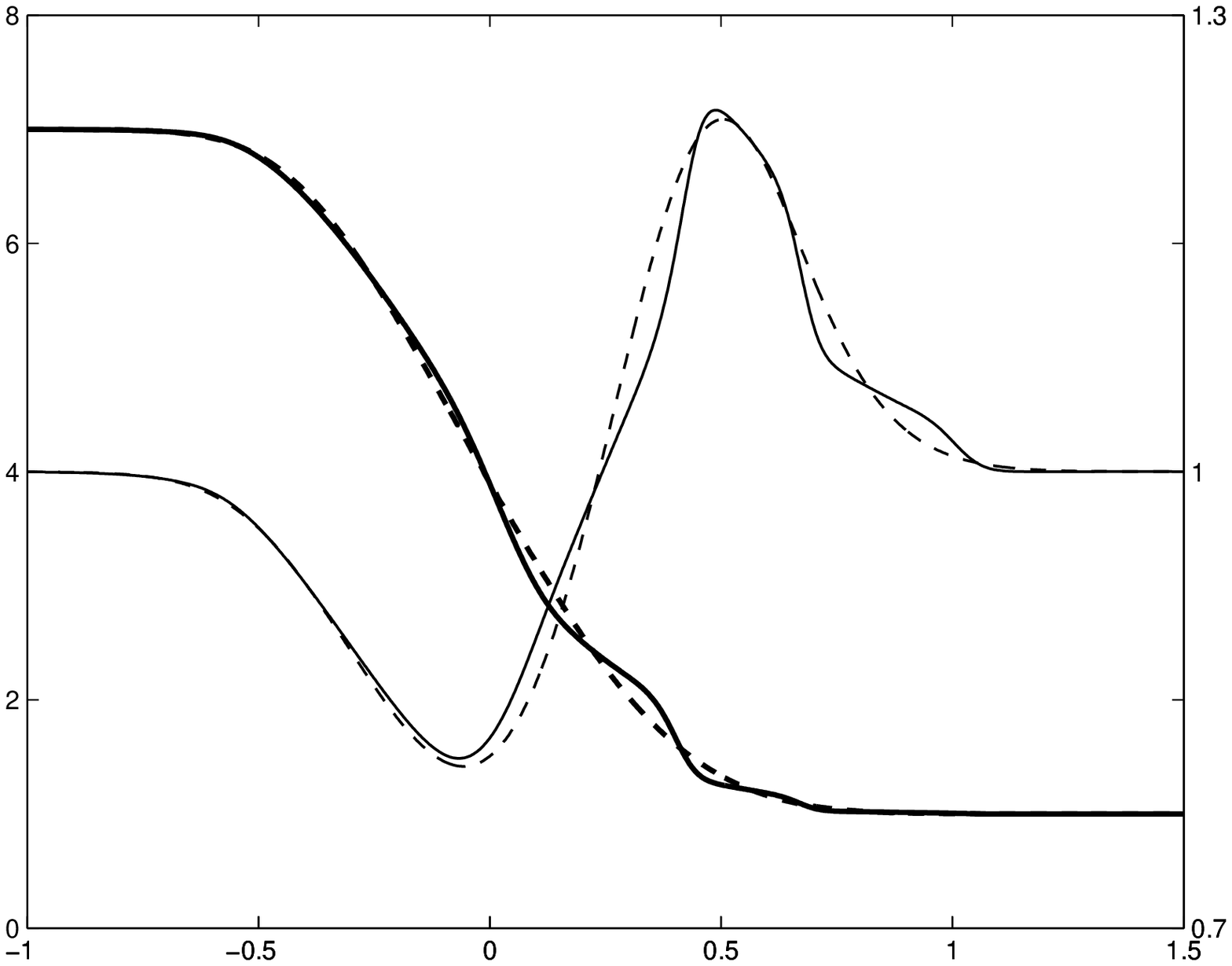}
}
\subfigure[R120, $M=7$]{
  \includegraphics[scale=.4]{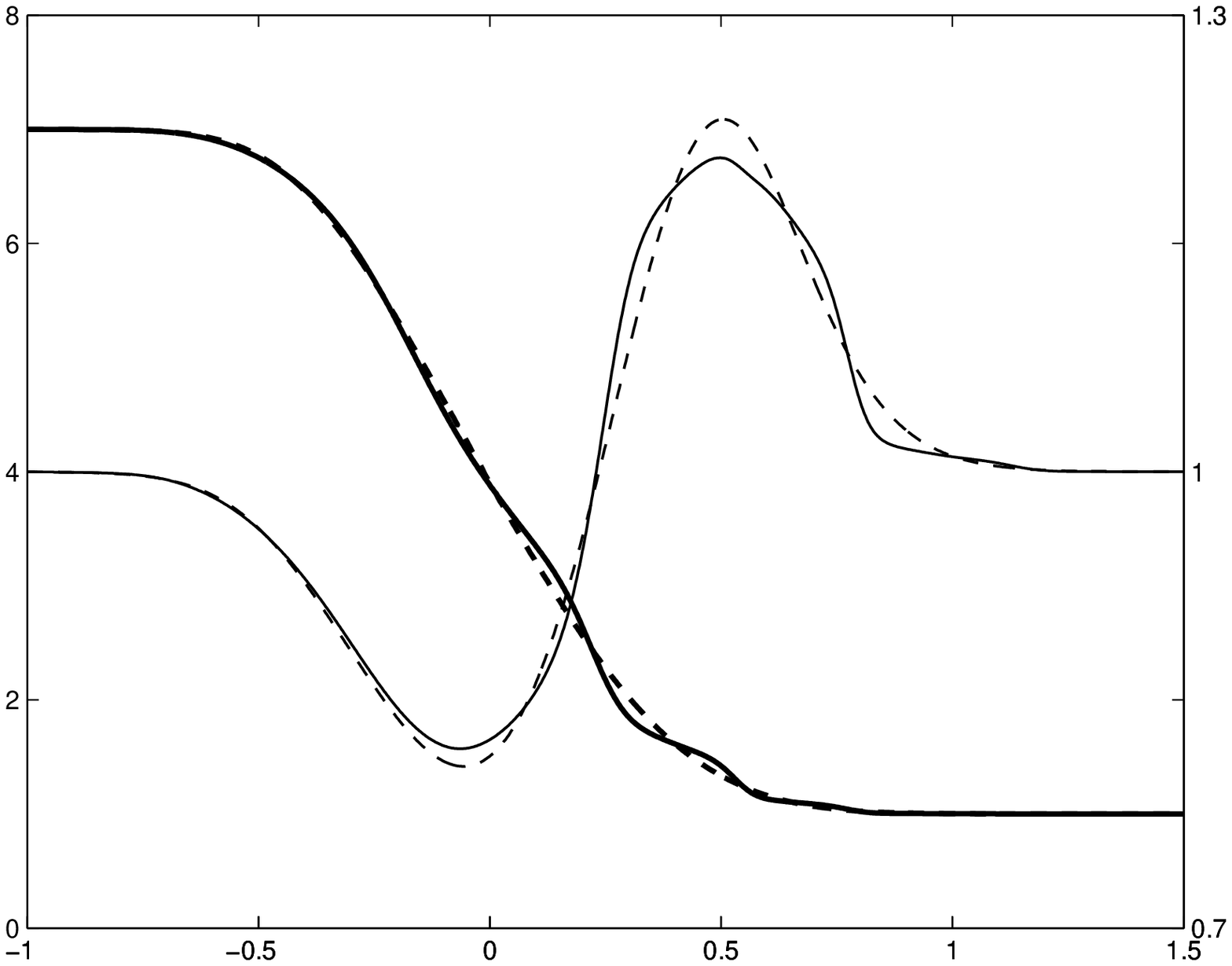}
}
\subfigure[R165, $M=8$]{
  \includegraphics[scale=.4]{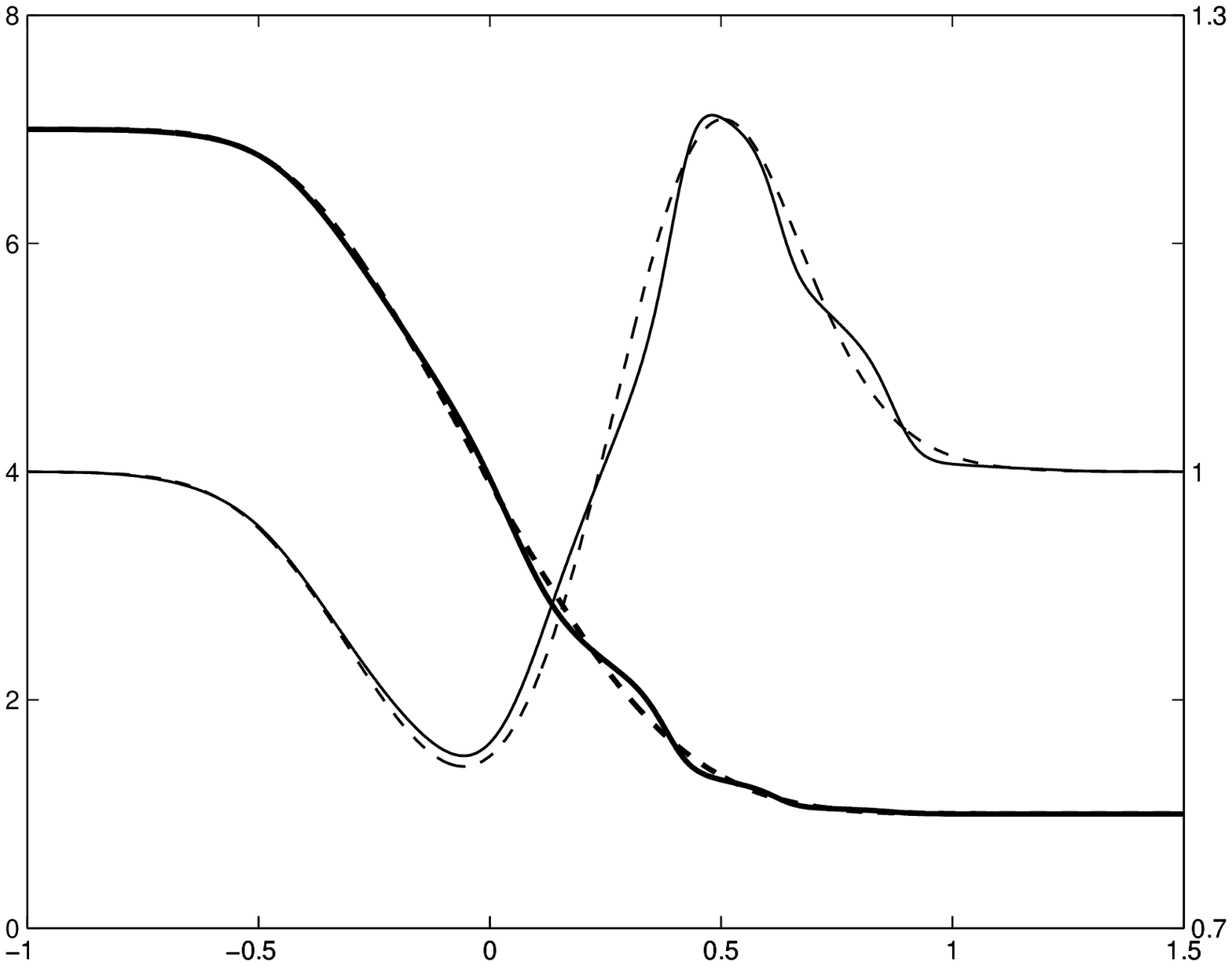}
}
\caption{Results for the shock tube test with $\Kn = 0.5$.
The thick line with the left $y$-axis is the plot of density,
while the thin line with the right $y$-axis is the plot of
temperature. The dashed lines are the numerical solution of
the BGK equation. 1250 grids are used for calculation.}
\label{fig:ShockTube_Kn=0.5}
\end{figure}

\subsubsection{A test with smooth initial values}
This example is again from \cite{Torrilhon2006}. The initial
conditions are
\begin{equation} \label{eq:periodic}
\rho(0, x) = 2 + \frac{1}{2} \cos(\pi x),
\quad \bu(0, x) = \left(1 + \frac{1}{2} \sin (\pi x),
  \frac{1}{2} \sin (\pi x), 0\right)^T,
\end{equation}
and the fluid is in equilibrium everywhere with $p(0, x) = 1$.
Periodic boundary condition is used and the computational domain is
the interval $[-1,1]$. In order to validate our method, we use $\Kn =
0.01, 0.1, 0.5$ in our numerical computation exactly as in
\cite{Torrilhon2006}. The end time is $t = 0.4$.

The results for different Knudsen numbers and different moment
equations are plotted in Figure \ref{fig:periodic}. All tests are
computed using $1000$ grids. The results in the first column are
almost identical, which indicates the correct behavior of our
method in the dense limit. The R20 equations, which should be the
closest to the R13 system, produce similar results as those of R13
reported in \cite{Torrilhon2006}. The temperature plots in the first
row can be used to make comparison.

The new results are presented in the second and third columns, where
the numerical solutions for high-order moment equations are listed.
For $\Kn = 0.1$, the R20 result shows an incorrect profile of density.
With increasing $M$, both the density and the temperature tend to
converge. For $\Kn = 0.5$, the R20 equations provide completely wrong
structures, although smooth initial values are used. Results for
even larger moment systems are plotted in Figure \ref
{fig:periodic_Kn=0.5}. In this plot, the satisfying temperature plot
is obtained when $M = 7$, but the density plots behave similarly as
those in Figure \ref{fig:ShockTube_Kn=0.5}. The curves for odd and
even order of moments hold different profiles, and they are toddling
close to each other gradually. With $M$ been increased up to $11$,
the density curve eventually exhibits a satisfying convergence. The
phenomenon illustrates the necessity of large moment systems in the
microcase.

\begin{figure}[!ht]
\centering
\begin{tabular}{c@{}ccc}
& $\Kn = 0.01$ & $\Kn = 0.1$ & $\Kn = 0.5$ \\
\raisebox{43pt}{\small R20}
  & \includegraphics[scale=.3,bb=111 247 509 544,clip]{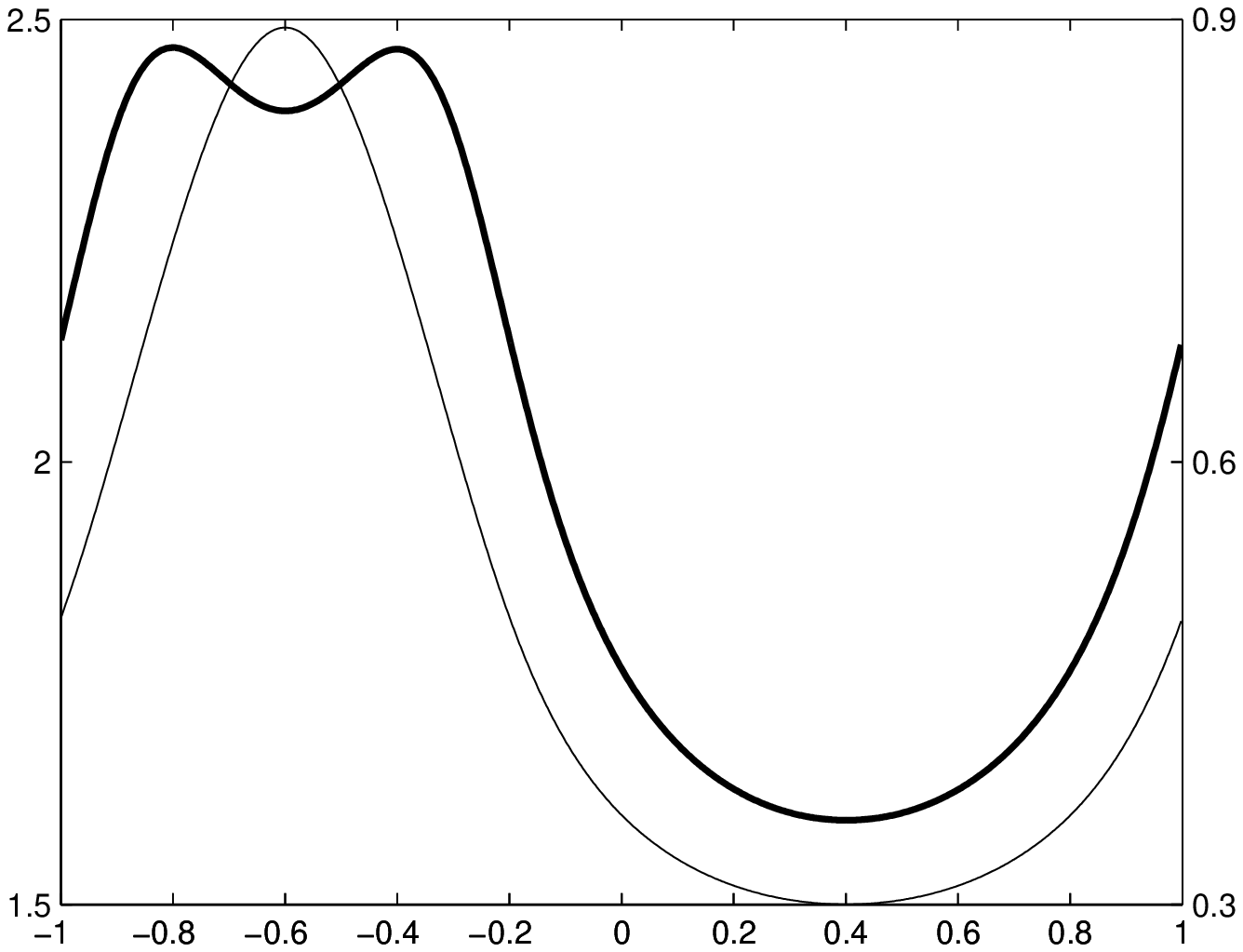}
  & \includegraphics[scale=.3,bb=111 247 509 544,clip]{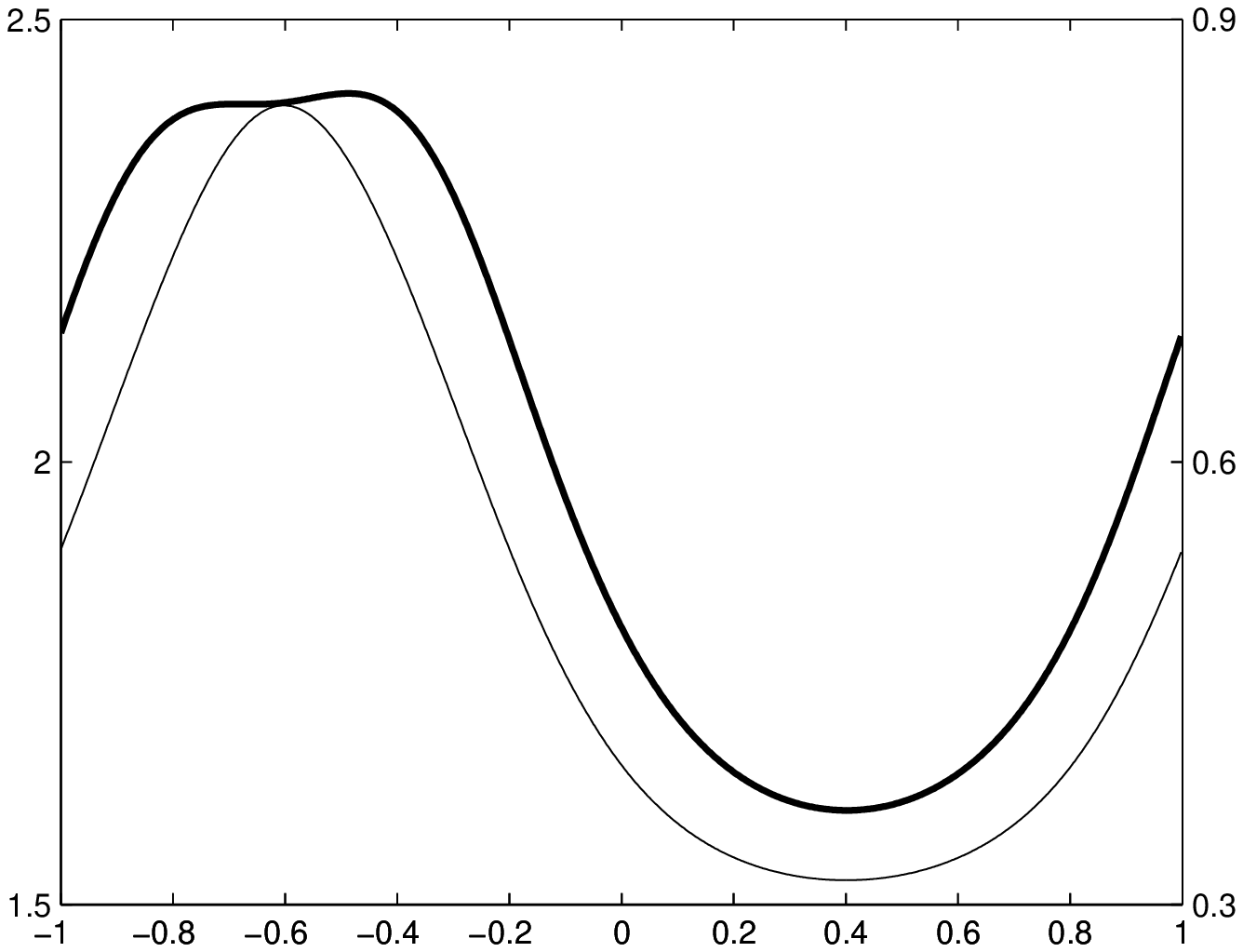}
  & \includegraphics[scale=.3,bb=111 247 509 544,clip]{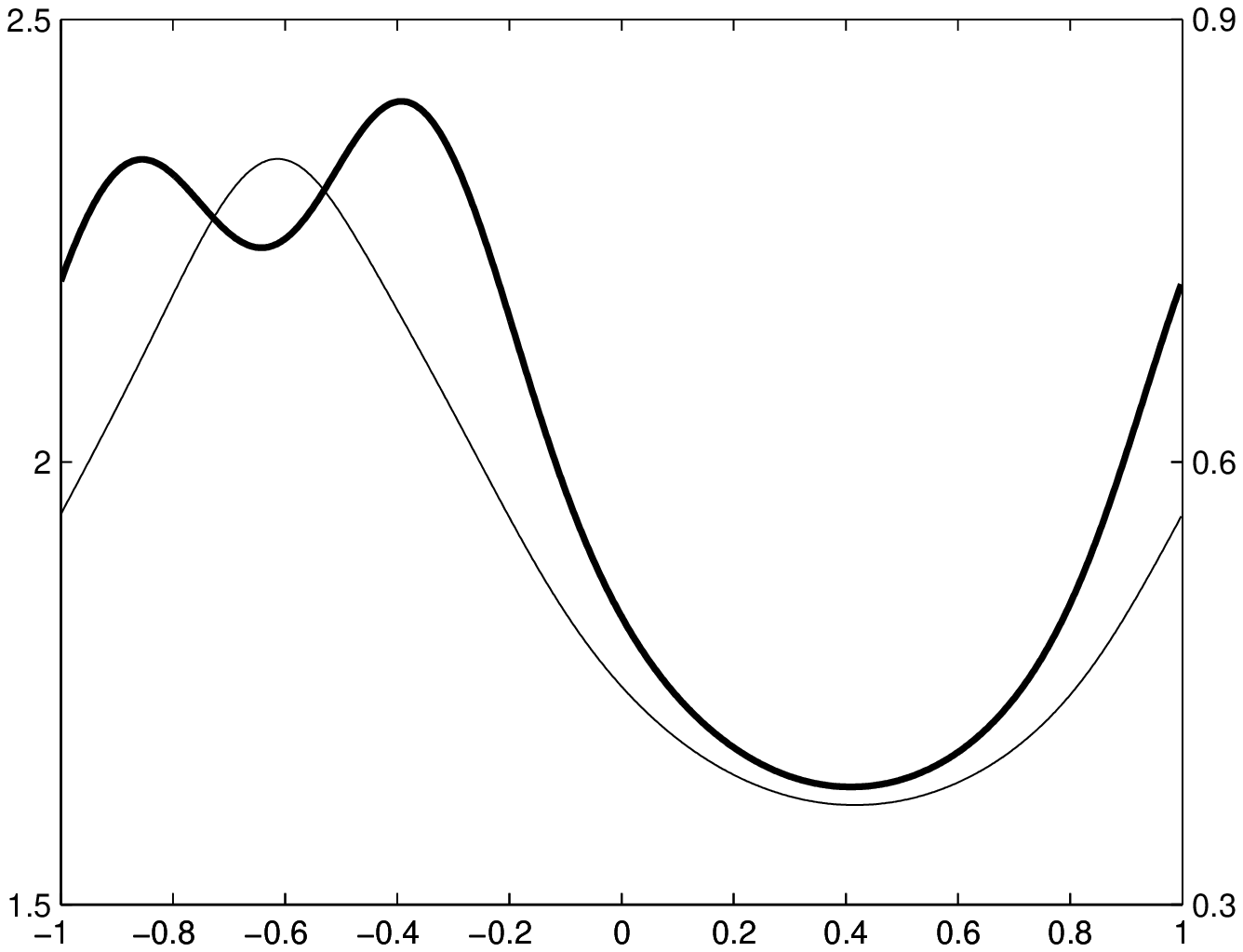} \\
\raisebox{43pt}{\small R35}
  & \includegraphics[scale=.3,bb=111 247 509 544,clip]{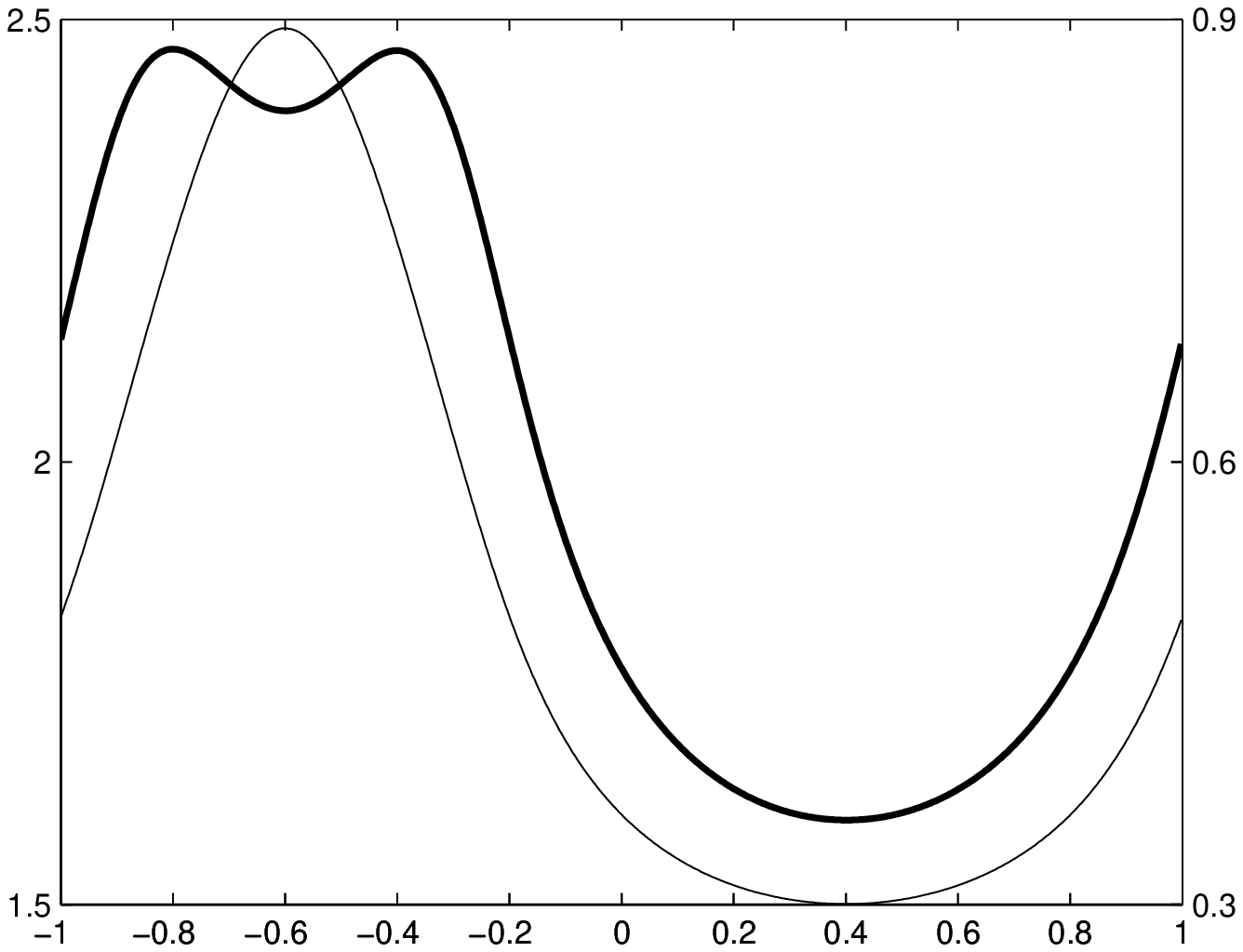}
  & \includegraphics[scale=.3,bb=111 247 509 544,clip]{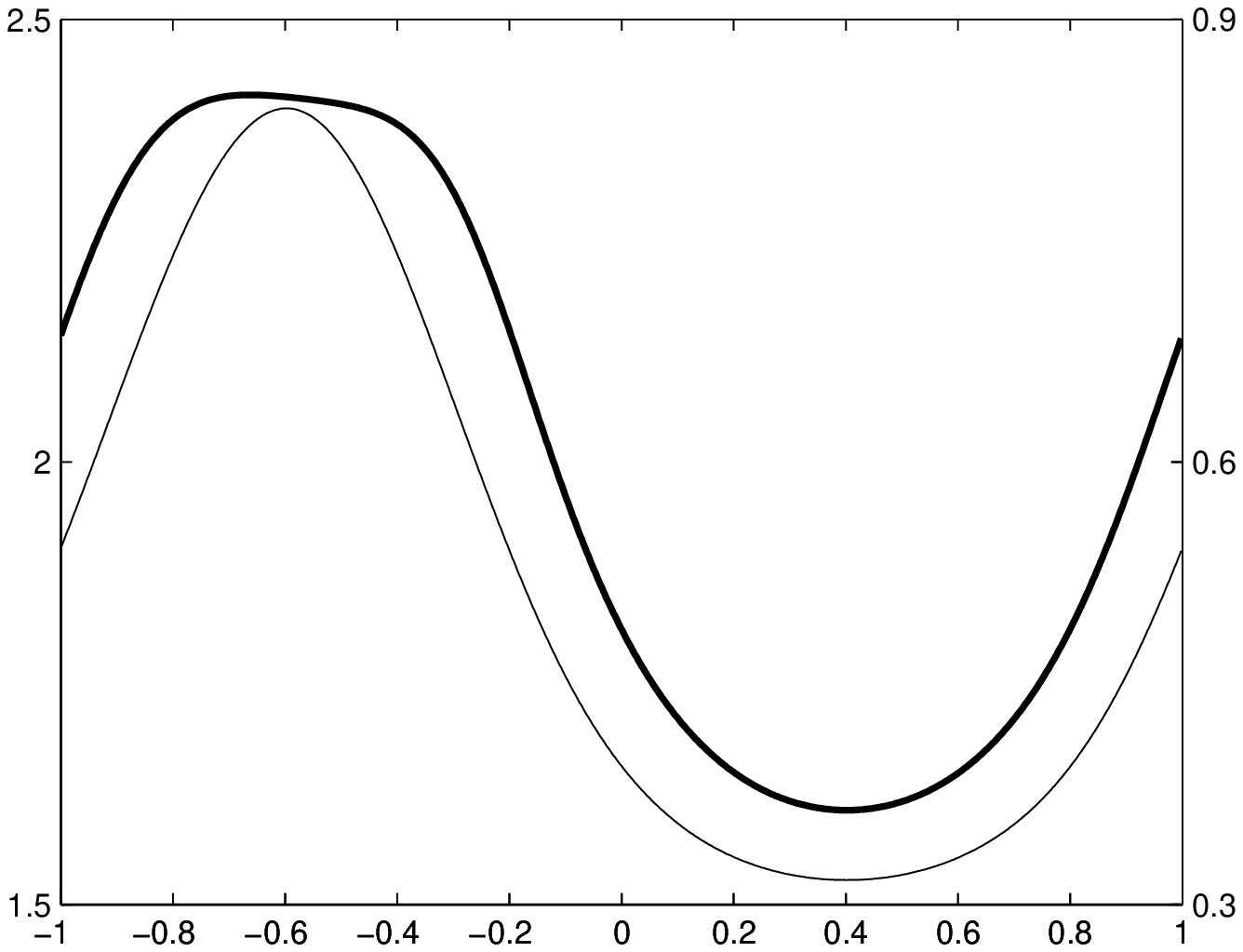}
  & \includegraphics[scale=.3,bb=111 247 509 544,clip]{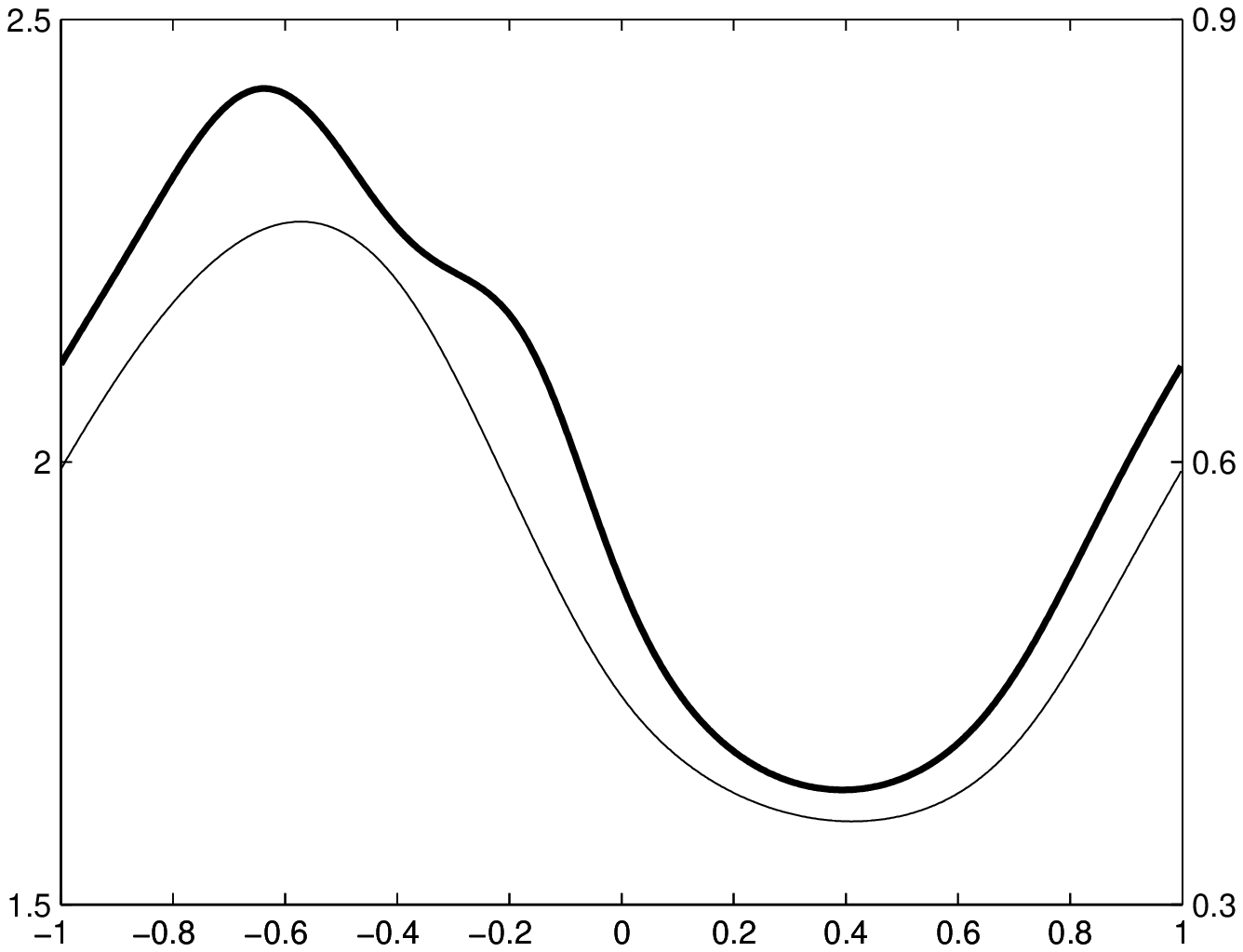} \\
\raisebox{43pt}{\small R56}
  & \includegraphics[scale=.3,bb=111 247 509 544,clip]{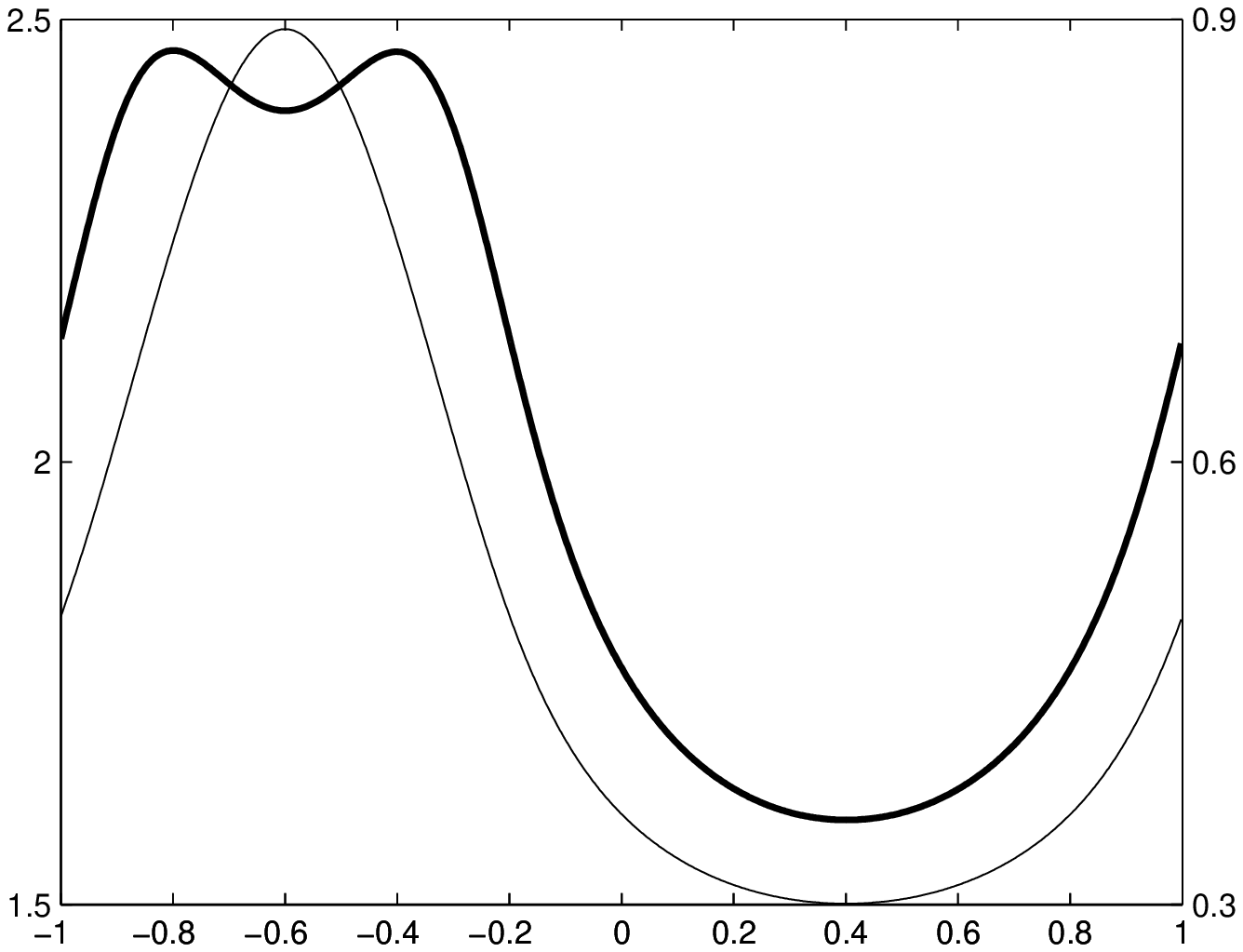}
  & \includegraphics[scale=.3,bb=111 247 509 544,clip]{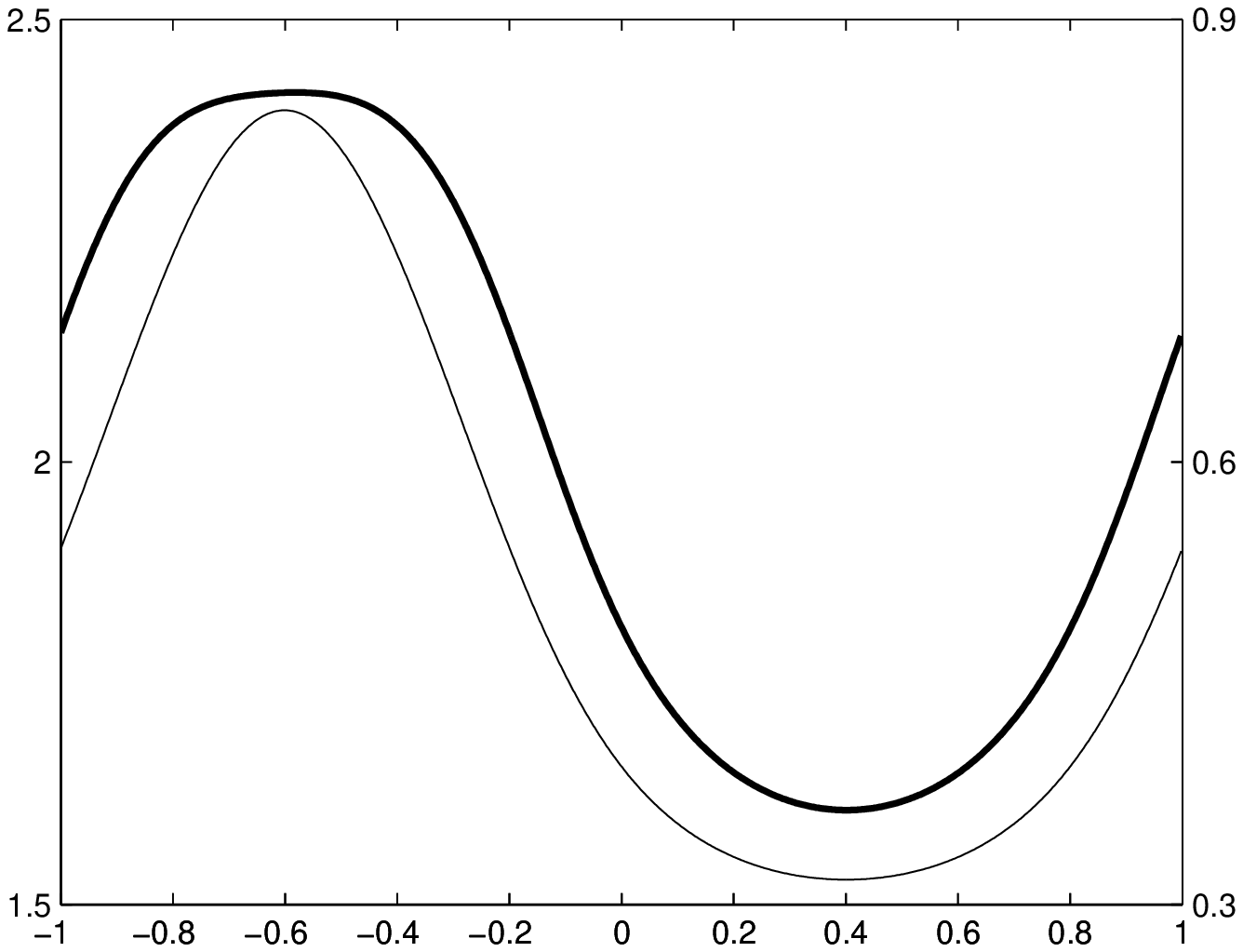}
  & \includegraphics[scale=.3,bb=111 247 509 544,clip]{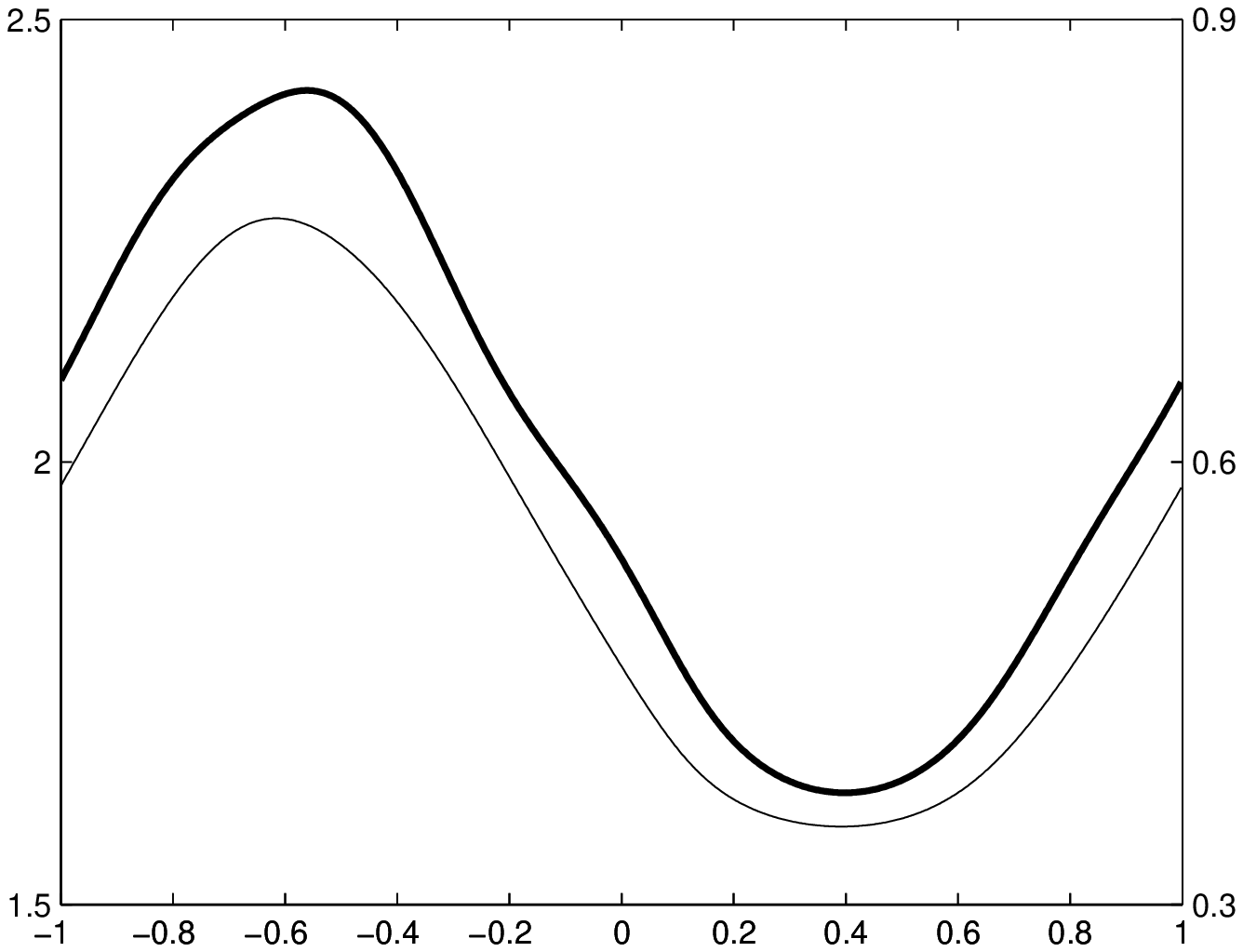} \\
\raisebox{43pt}{\small R84}
  & \includegraphics[scale=.3,bb=111 247 509 544,clip]{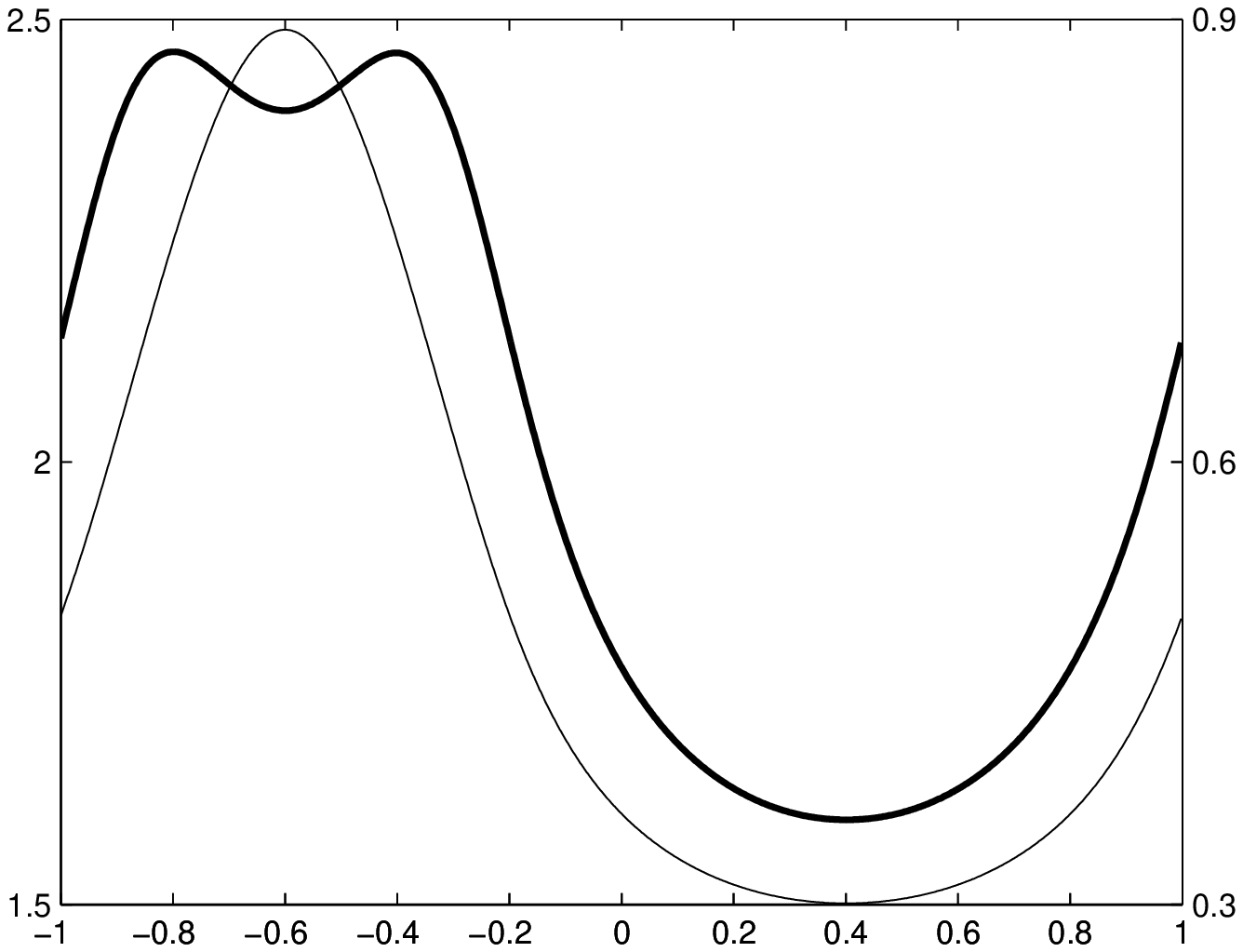}
  & \includegraphics[scale=.3,bb=111 247 509 544,clip]{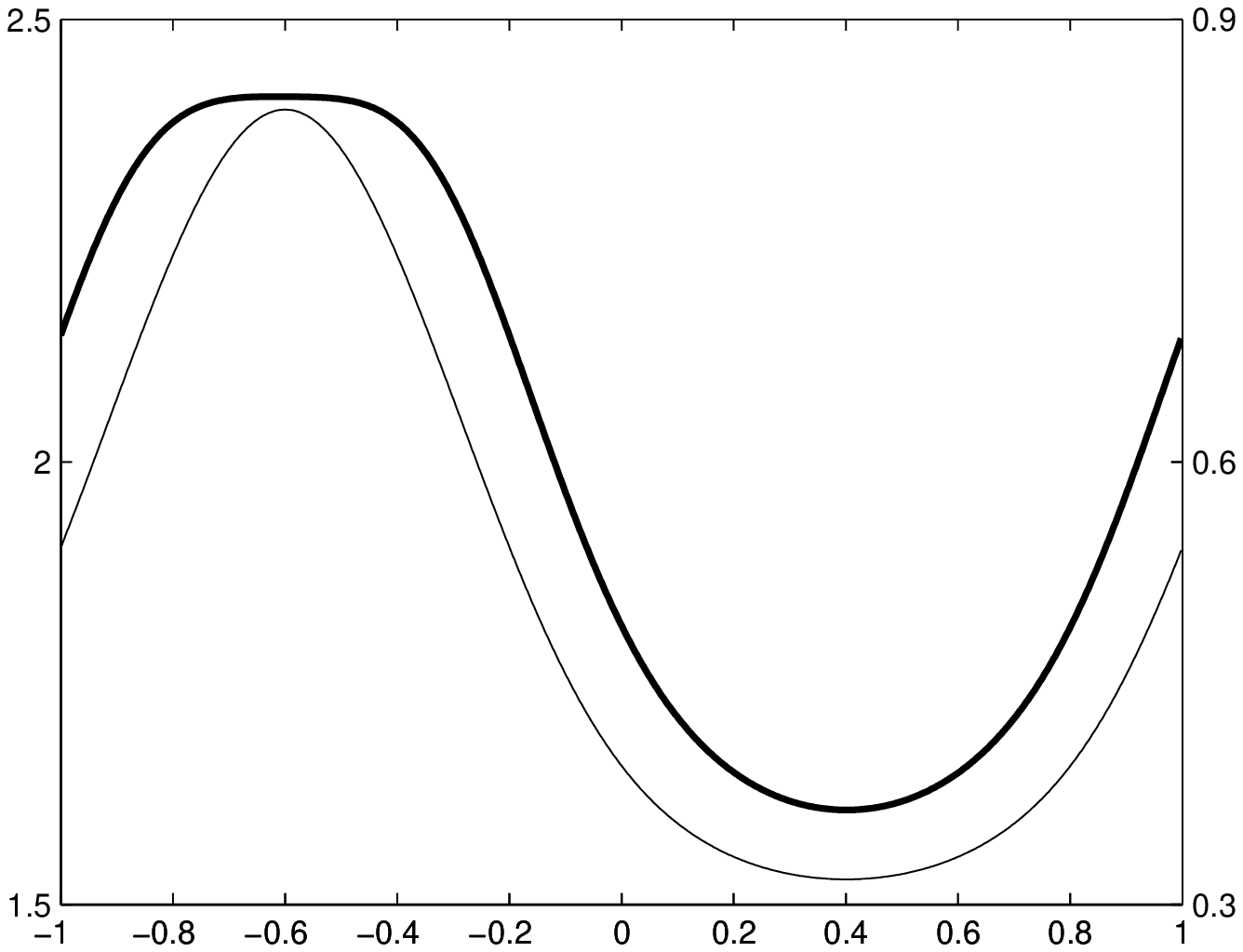}
  & \includegraphics[scale=.3,bb=111 247 509 544,clip]{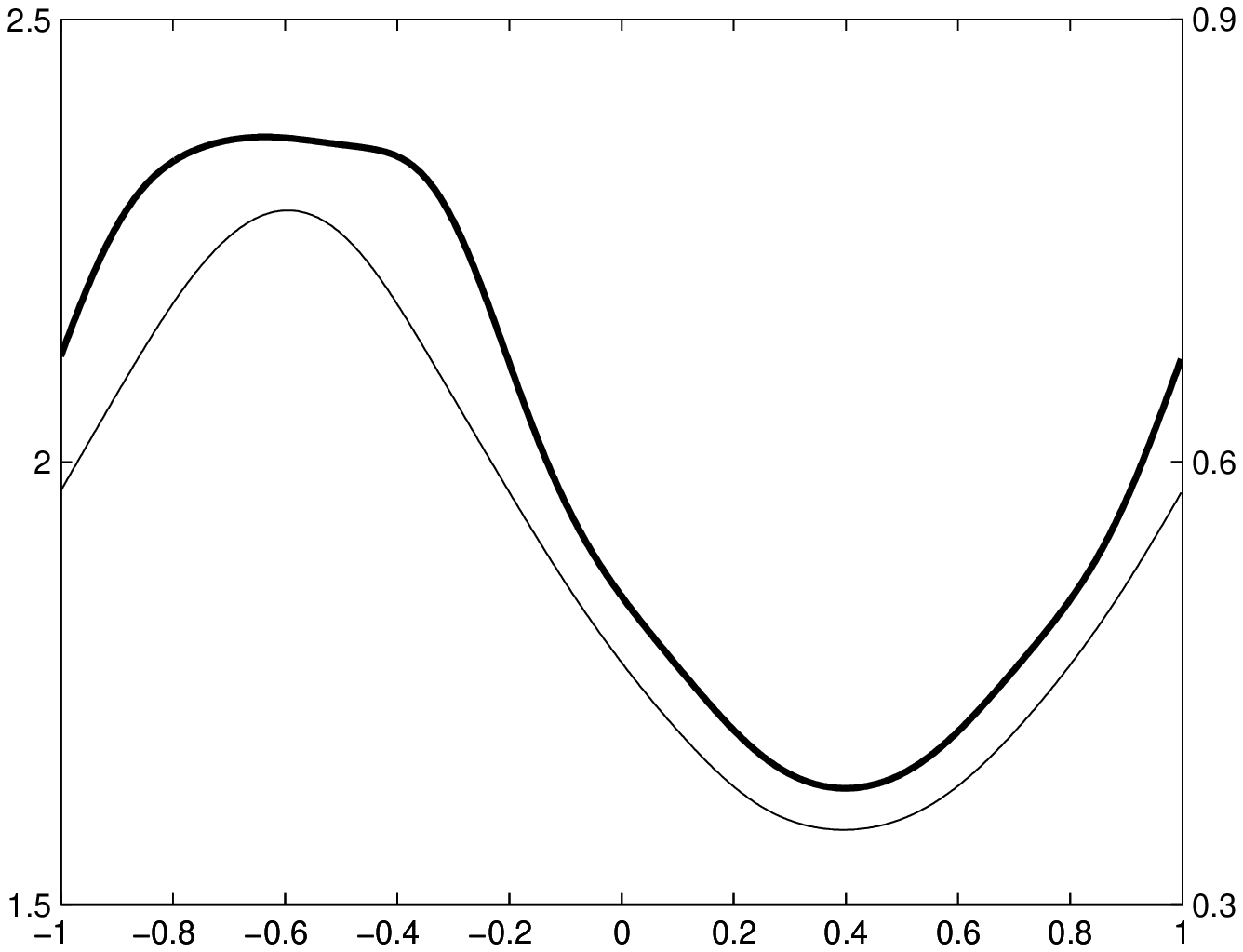}
\end{tabular}
\caption{The numerical results for problem \eqref{eq:periodic}.
The thick line with the left $y$-axis is the plot of density
while the thin line with the right $y$-axis is the plot of
temperature.}
\label{fig:periodic}
\end{figure}

\begin{figure}[!ht]
\centering
\subfigure[R120, $M=7$]{
  \includegraphics[scale=.4]{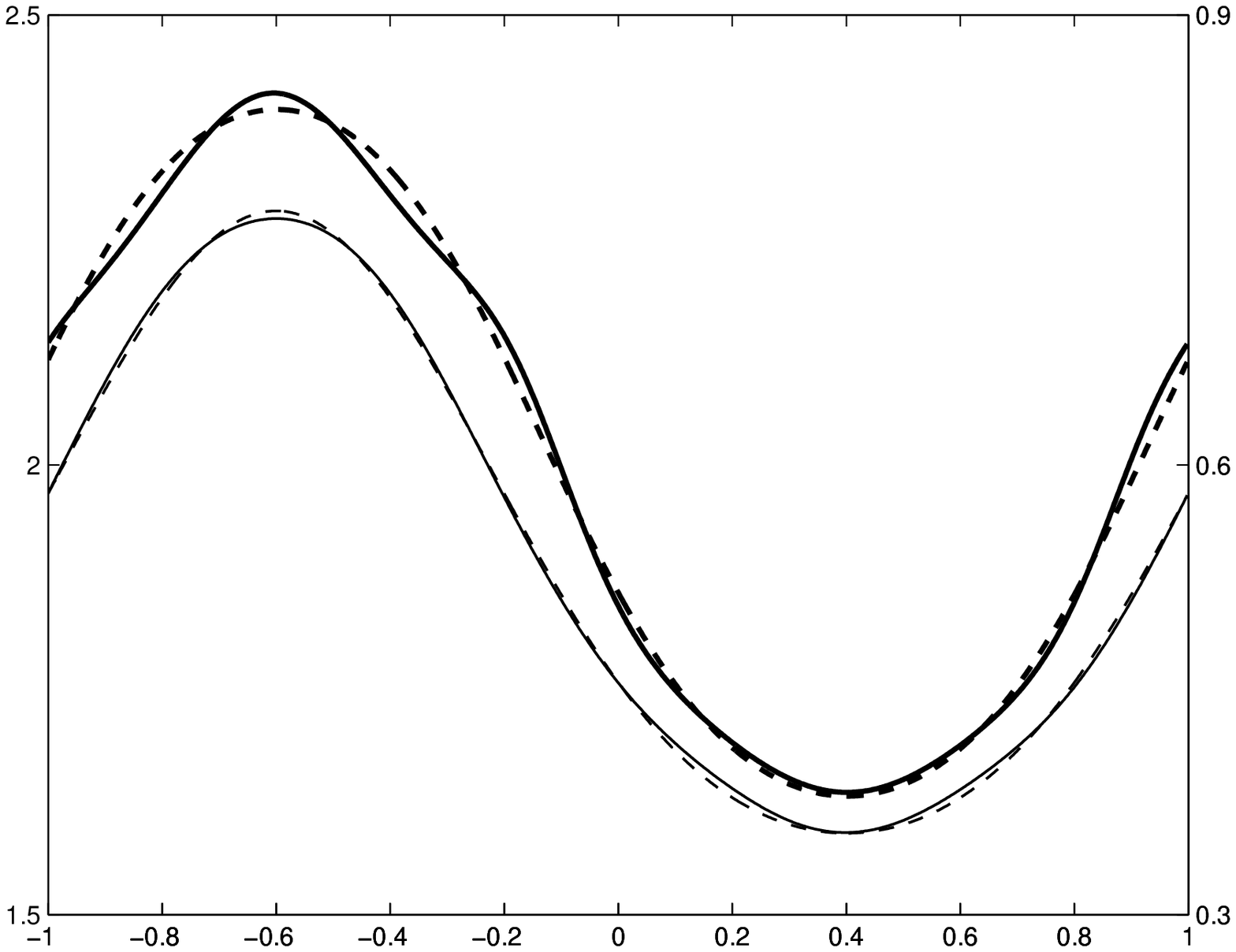}
}
\subfigure[R165, $M=8$]{
  \includegraphics[scale=.4]{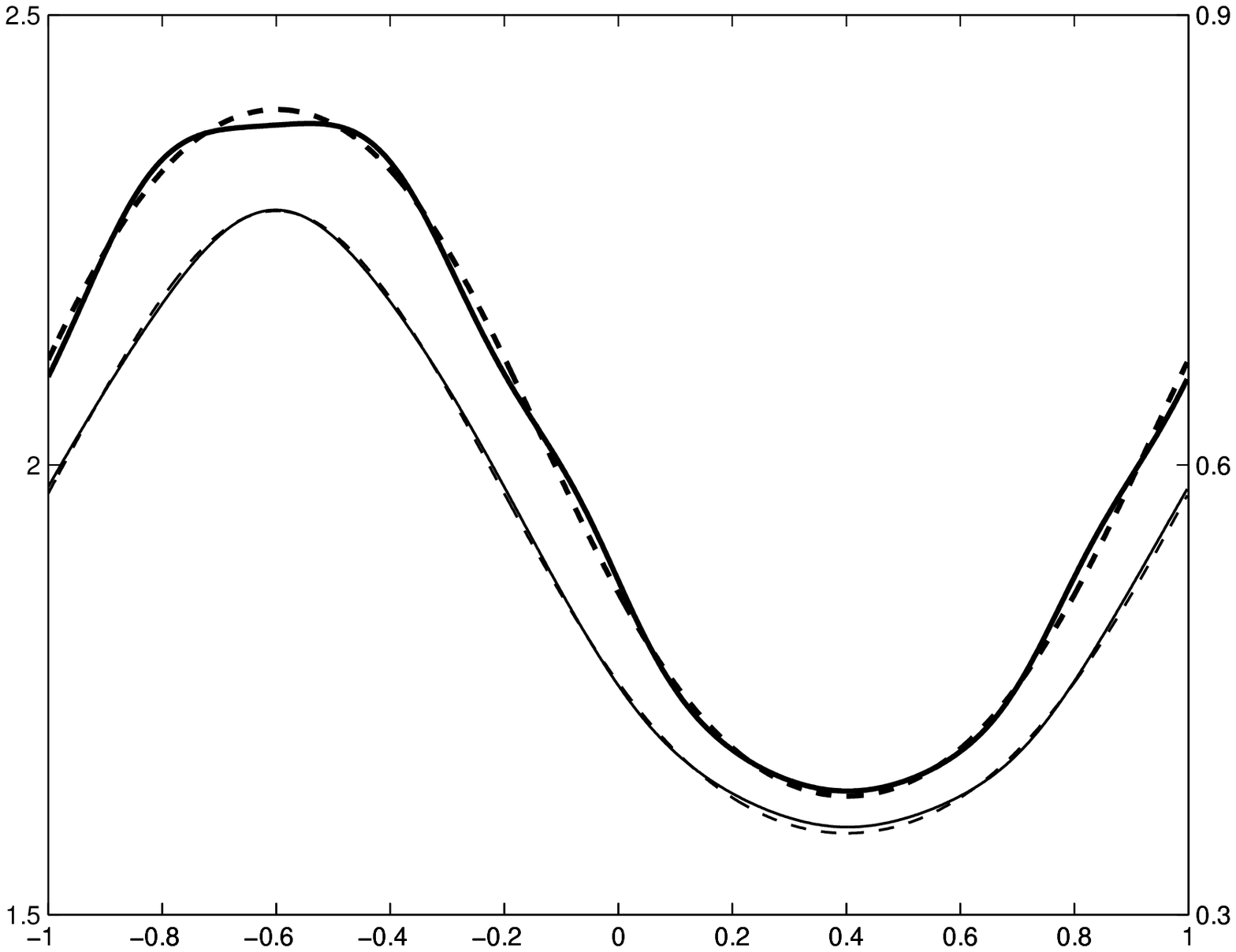}
}
\subfigure[R220, $M=9$]{
  \includegraphics[scale=.4]{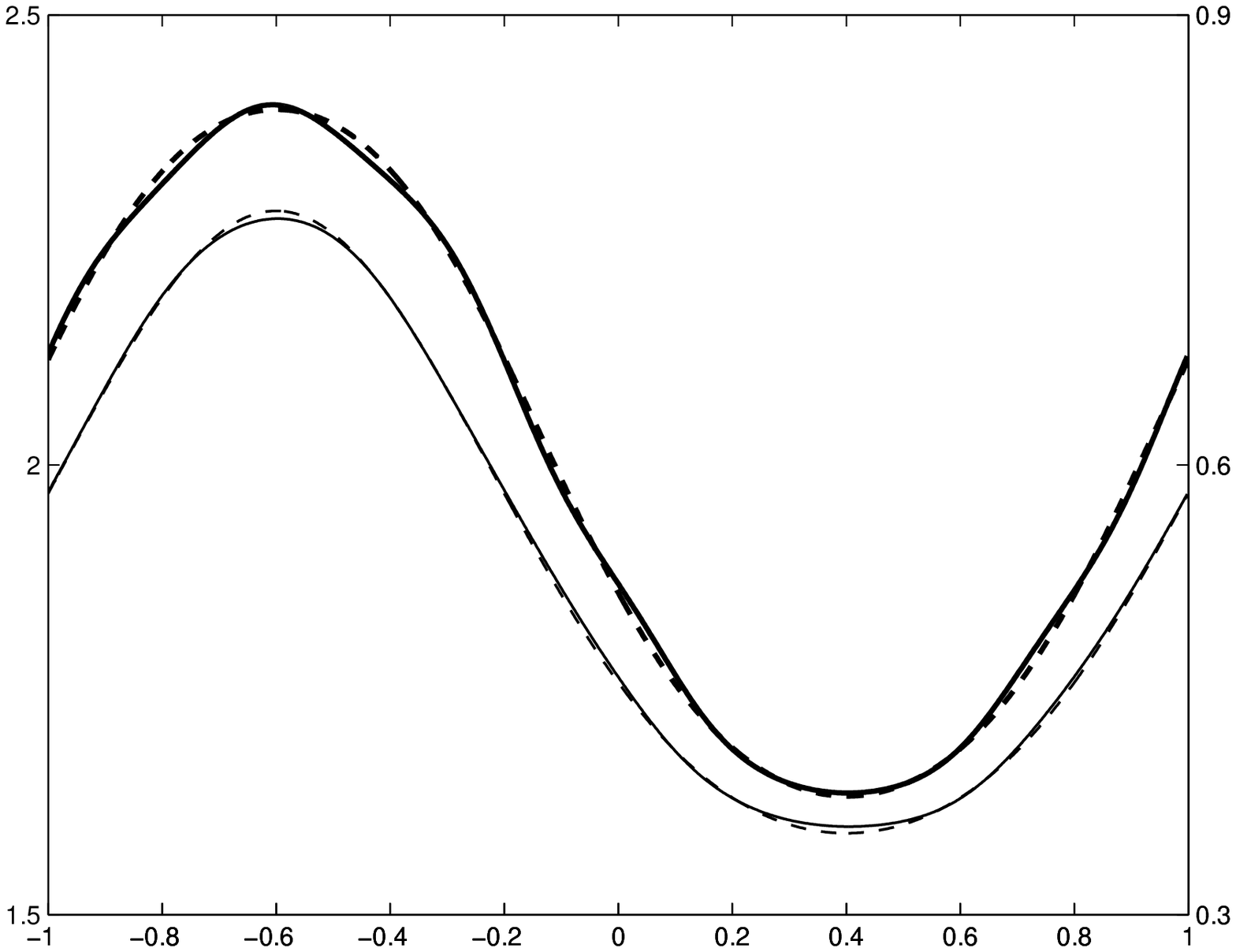}
}
\subfigure[R286, $M=10$]{
  \includegraphics[scale=.4]{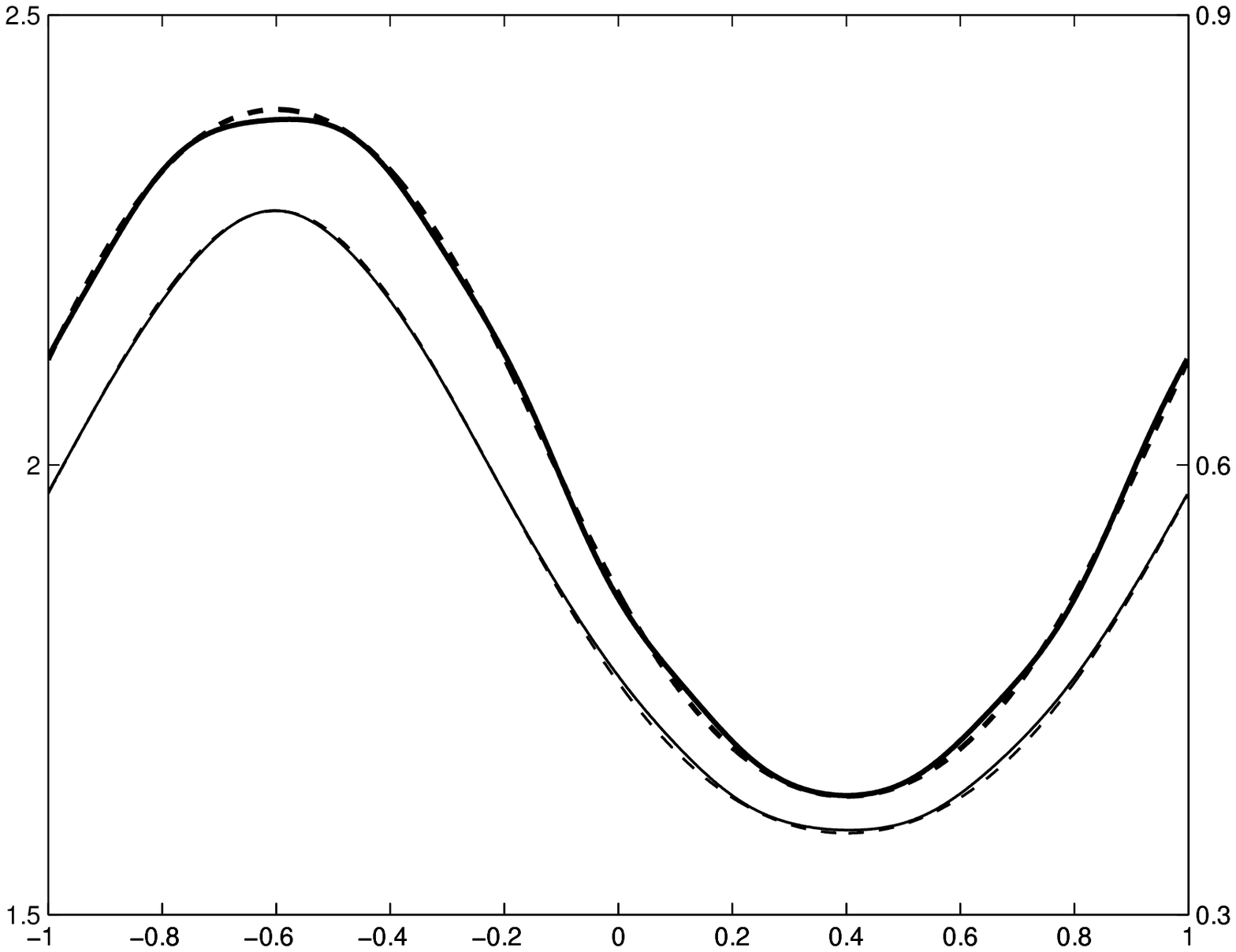}
}
\subfigure[R364, $M=11$]{
  \includegraphics[scale=.4]{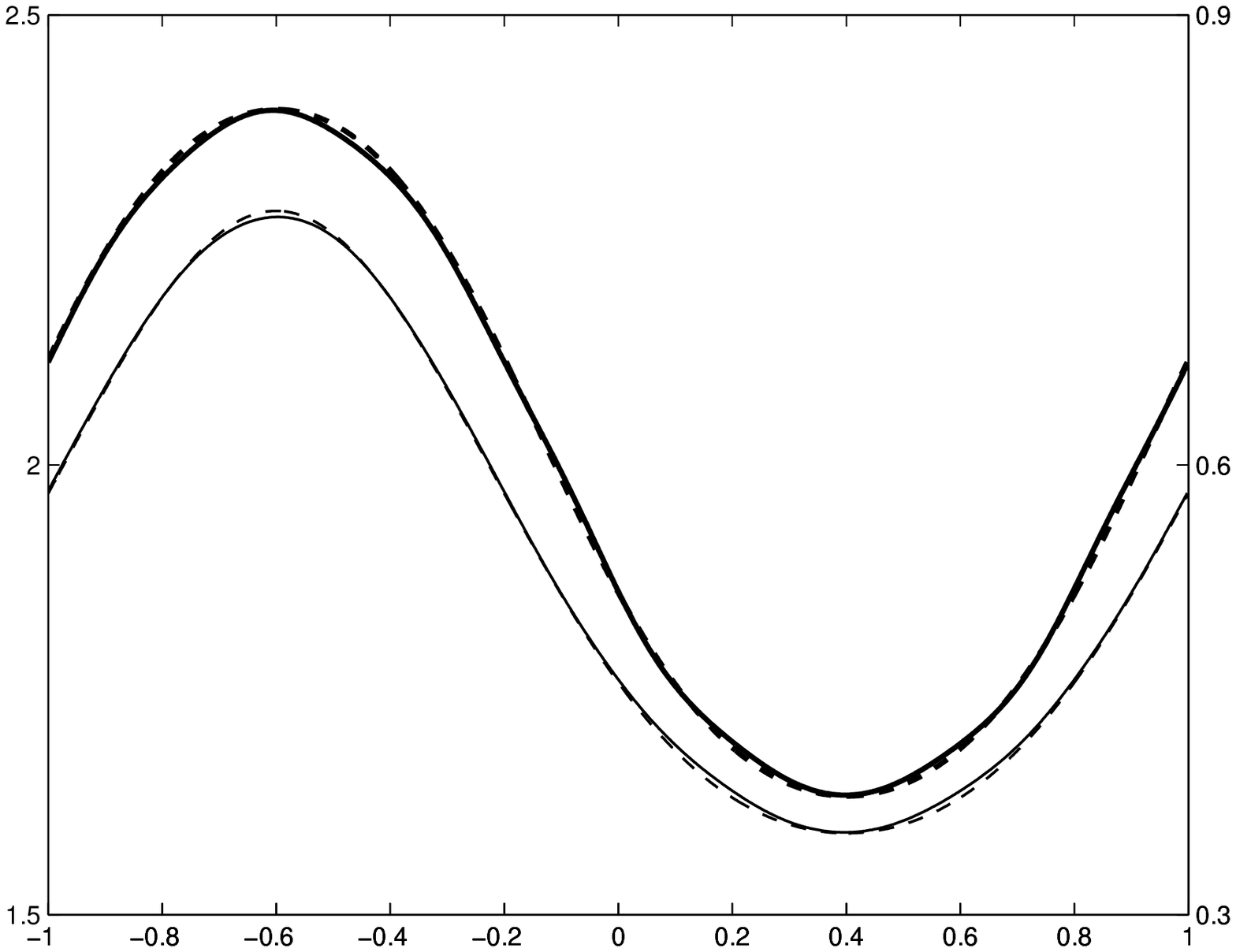}
}
\subfigure[R455, $M=12$]{
  \includegraphics[scale=.4]{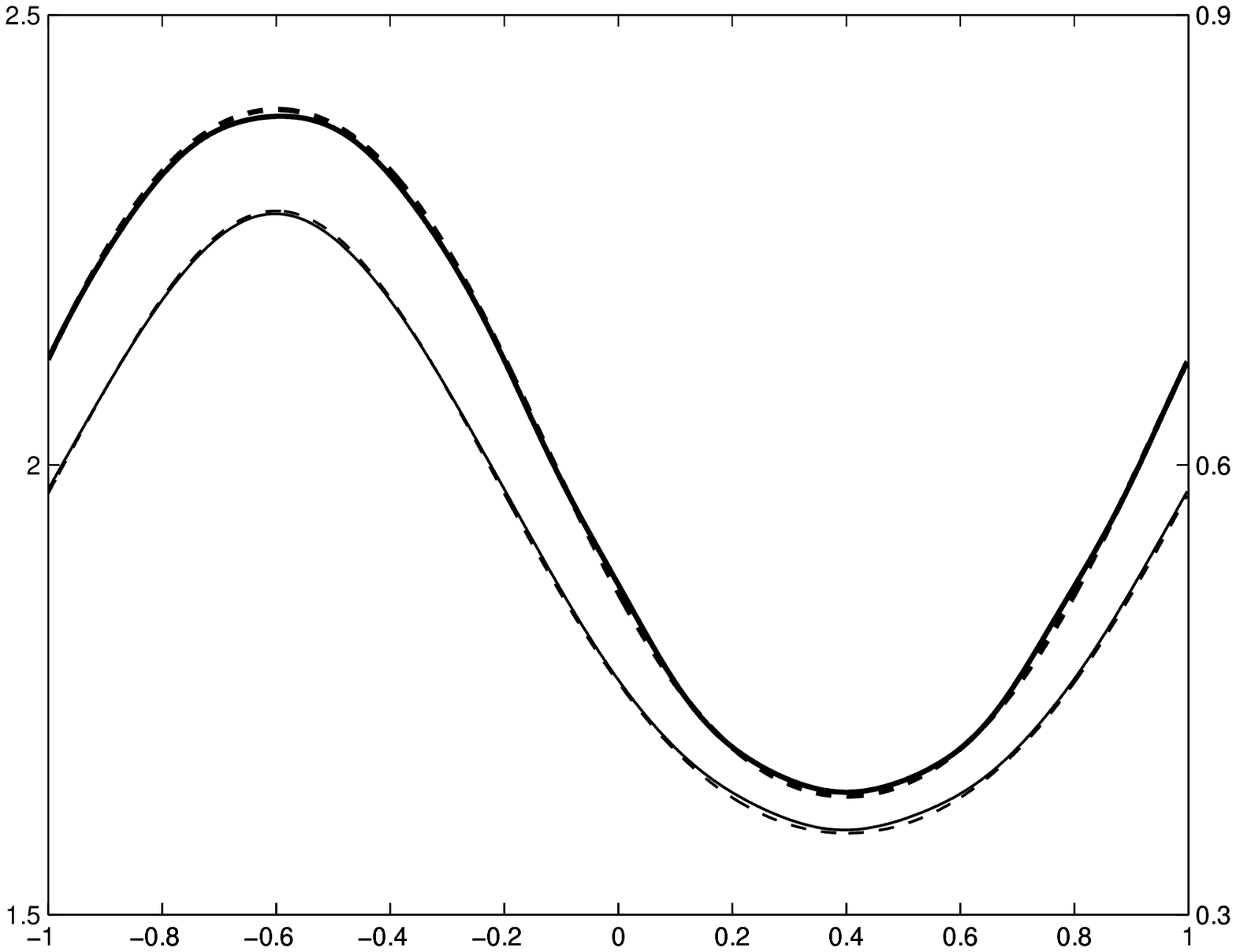}
}
\caption{The numerical results for problem \eqref{eq:periodic}
in the case of $Kn=0.5$. The thick line with the left $y$-axis
is the plot of density while the thin line with the right
$y$-axis is the plot of temperature. The dashed lines are the
numerical solution of the BGK equation.}
\label{fig:periodic_Kn=0.5}
\end{figure}

\subsection{Two dimensional case}
Two 2D examples are investigated in our numerical simulation. Both
examples use uniform grids in the spatial discretization. Though much
more computational cost are needed for 2D problems, the equations with
up to 84 moments are considered.

\subsubsection{Shock-bubble interaction}
In this section, the shock-bubble problem tested in \cite
{Torrilhon2006} is repeated. The initial state contains a shock wave
at $x=-1.0$ travelling with Mach number $M_0 = 2.0$ into an
equilibrium area with $(\rho, \bu, \theta) = (1, 0, 1)$. A bubble is
in front of the shock with density profile
\begin{equation}
\rho(0, \bx) = 1 + 1.5 \exp (-16 |\bx - \bx_0|^2),
\end{equation}
where $\bx_0 = (0.5, 0)^T$, and constant pressure $p = 1$. The shock
wave has a fully developed structure instead of a discontinuity. Thus a
pre-computation of the shock profile is needed. The initial density
surfaces for $\Kn = 0.05$ and $\Kn = 0.1$ are shown in Figure
\ref{fig:ShockBubble_Kn=0.05_init}. A uniform mesh with $1000\times
400$ grids is used in our numerical simulation.

\begin{figure}[!ht]
\centering
\subfigure[$\Kn=0.05$]{
  \includegraphics[scale=.5,clip]{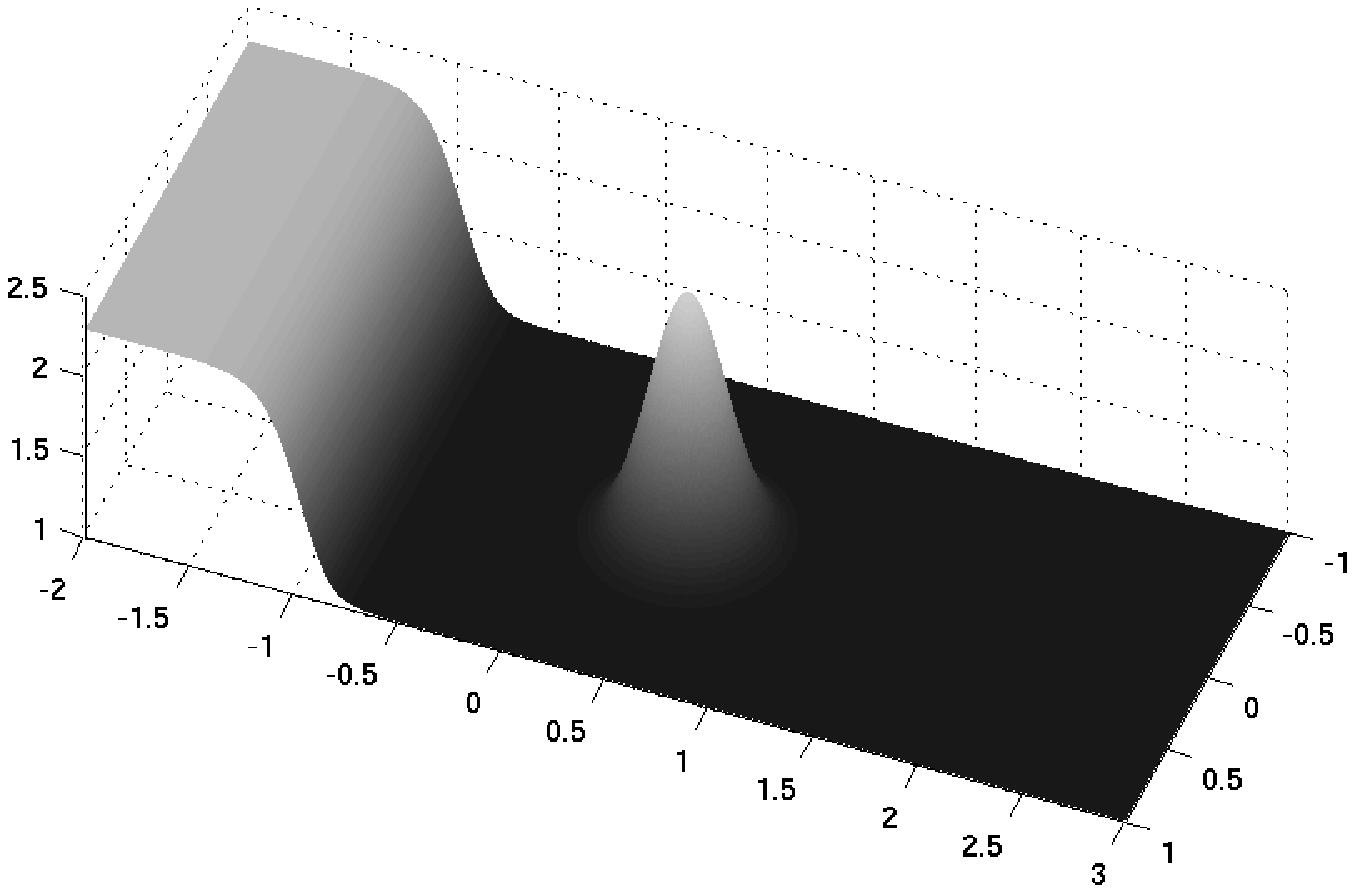}
}
\subfigure[$\Kn=0.1$]{
  \includegraphics[scale=.5,clip]{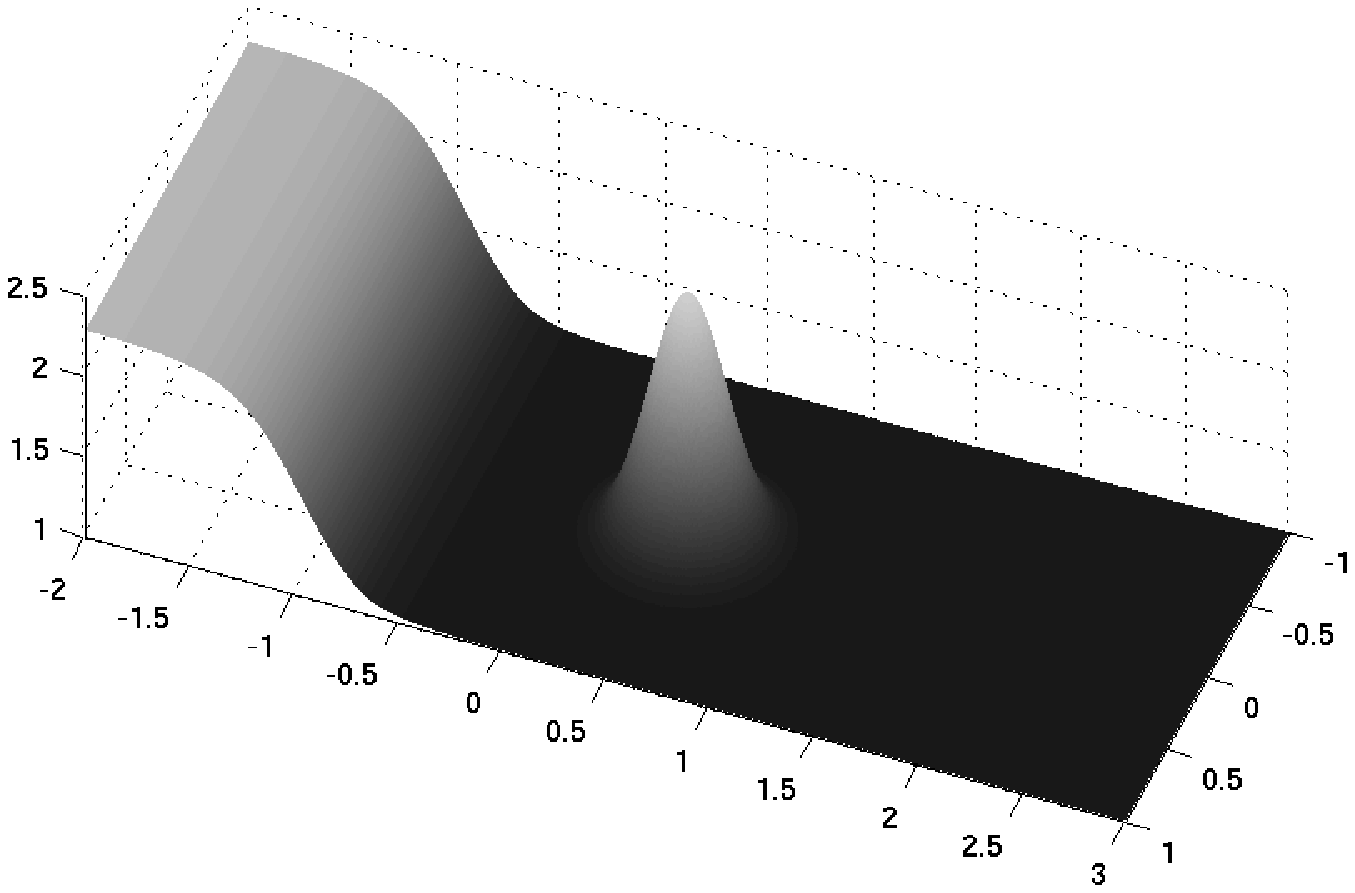}
}
\caption{The initial plot of density}
\label{fig:ShockBubble_Kn=0.05_init}
\end{figure}

The shock structure can be obtained by solving a 1D Riemann problem
constructed according to the Rankine-Hugoniot condition. The left
state is
\begin{equation}
\rho_l = \frac{4 M_0^2}{M_0^2 + 3}, \quad
\bu_l = \left( -\sqrt{\frac{5}{3}}
  \frac{M_0^2 + 3}{4M_0}, 0, 0\right)^T, \quad
p_l = \frac{5M_0^2 - 1}{4},
\end{equation}
and the right state is
\begin{equation}
\rho_r = 1, \quad
\bu_r = \left( -\sqrt{\frac{5}{3}} M_0, 0, 0 \right)^T,
\quad p_r = 1.
\end{equation}
Both states are in equilibrium.  After a sufficiently long time, a
stationary shock will form. It is quite convenient to transform a
stationary shock to an unstable one in our numerical framework.
Suppose a 1D steady shock is presented by
\begin{equation} \label{eq:shock}
f(x, \bxi) = \sum_{|\alpha| \leqslant M}
  f_{\alpha}(x) \mathcal{H}_{\theta(x),\alpha} (\bv(x)),
\quad \bv(x) = \frac{\bxi - \bu(x)}{\sqrt{\theta(x)}},
\quad x \in \bbR,
\end{equation}
and it satisfies \eqref{eq:coef_restriction}. Then, for
an arbitrary velocity $\boldsymbol{s} = (s,0,0)^T$, let
\begin{equation}
\bu'(x) = \bu(x) + \boldsymbol{s}, \quad \text{and}
  \quad \bv'(x) = \frac{\bxi - \bu'(x)}{\sqrt{\theta(x)}}.
\end{equation}
Substituting $\bu$ and $\bv$ by $\bu'$ and $\bv'$ in
\eqref{eq:shock} and keeping all the coefficients unchanged,
then \eqref{eq:shock} becomes an unsteady shock travelling
with speed $\boldsymbol{s}$. Let $\boldsymbol{s} =
(\sqrt{5M_0/3}, 0, 0)^T$, and then the desired shock wave
can be generated. The initial values of $\theta$ for
$\Kn = 0.01$, $0.05$ and $0.1$ at $y=0$ are plotted in
Figure \ref{fig:ShockBubble_init}.

\begin{figure}[!ht]
\centering
\includegraphics[scale=.5]{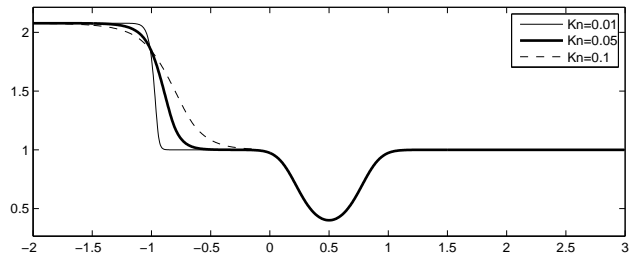}
\caption{The initial values of temperature at $y=0$}
\label{fig:ShockBubble_init}
\end{figure}

This example is aimed at the validation of our algorithm in the 2D
case. To make comparison with the results in \cite{Torrilhon2006},
$\Kn = 0.01$, $\Kn = 0.05$ and $\Kn = 0.1$ are considered and only the
R20 equations are simulated since it is the closest moment system
to R13. Results for the dense case $\Kn = 0.01$ at $t = 0.8$ are shown
in Figure \ref{fig:ShockBubble_Kn=0.01}. Comparing with the results in
\cite{Torrilhon2006}, the profile exhibits a qualitatively agreement
while the peak value between $x=0.5$ and $x=1$ disagrees. This is
believed to be caused by the highly dissipative numerical flux without
gradient reconstruction in our implementation. For $\Kn=0.05$, our R20
results of density and temperature (Figure
\ref{fig:ShockBubble_Kn=0.05}) are much closer to those presented
in \cite{Torrilhon2006}. But again, the heat fluxes show the same
profile with different magnitude, owing to the BGK model used
here. The whole structure after interaction with the bubble is drawn
in Figure \ref{fig:ShockBubble_Kn=0.05_t=0.9}, with a good agreement
with the former results.

\begin{figure}[!ht]
\centering
\subfigure[Density plot]{
  \includegraphics[scale=.45]{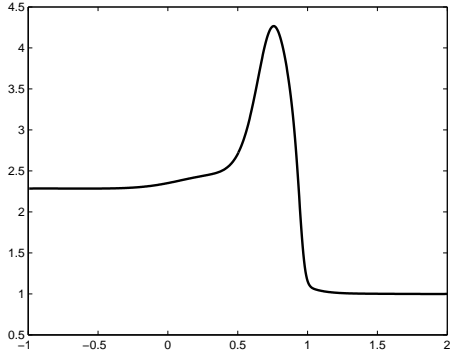}
}
\subfigure[Temperature plot]{
  \includegraphics[scale=.45]{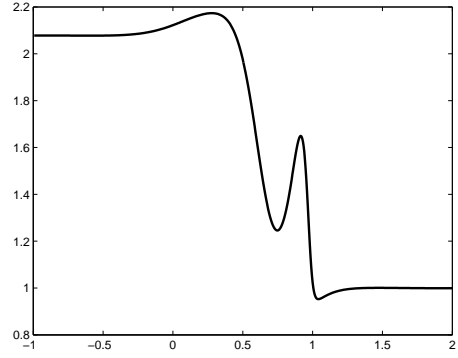}
}
\caption{R20 results of shock-bubble interaction for
$\Kn=0.01$ at $t=0.8$ and $y=0$}
\label{fig:ShockBubble_Kn=0.01}
\end{figure}

\begin{figure}[!ht]
\centering
\subfigure[Density plot at $y=0$]{
  \includegraphics[scale=.45]{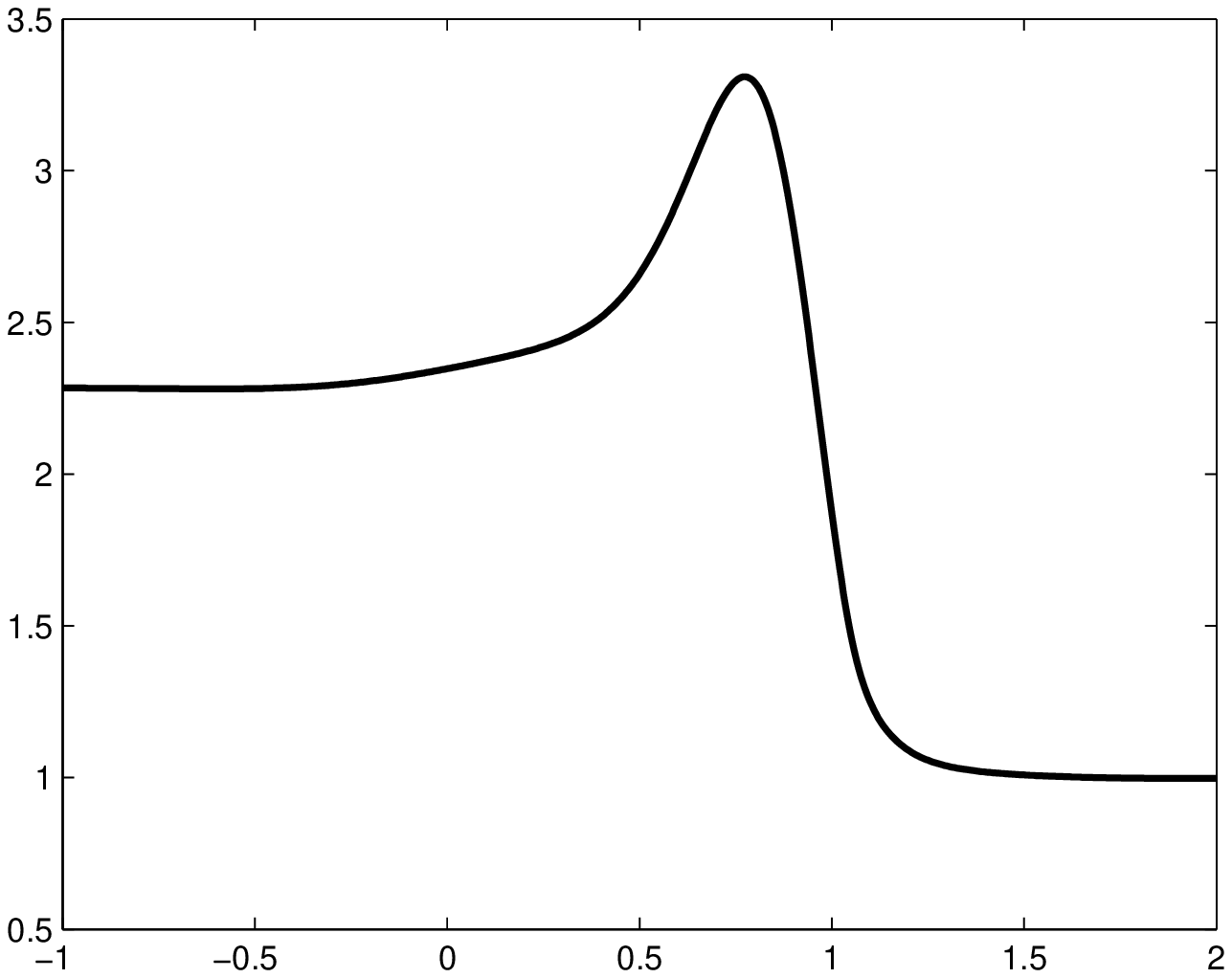}
}
\subfigure[Temperature plot at $y=0$]{
  \includegraphics[scale=.45]{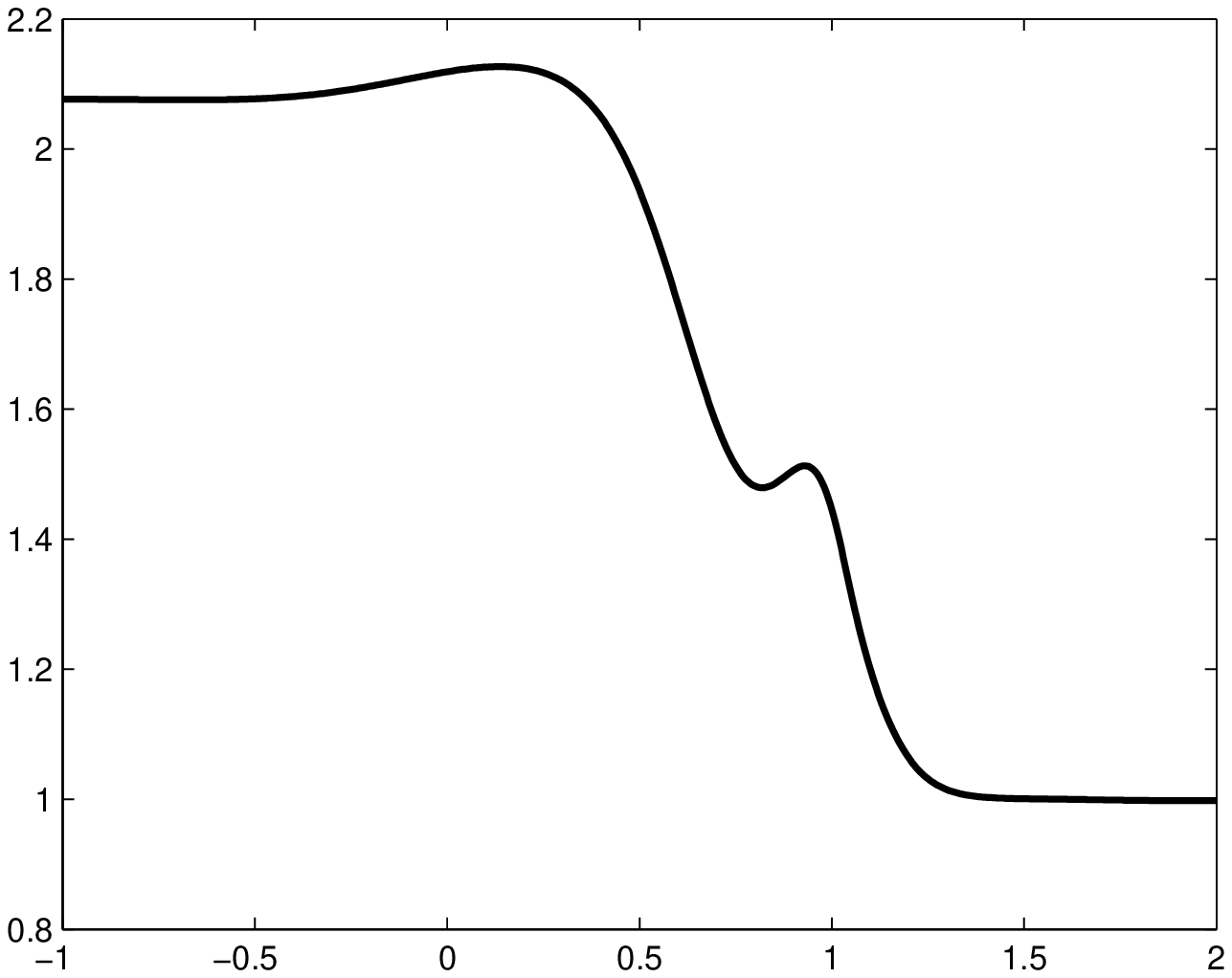}
}
\subfigure[Heat flux $q_y$ at $y=0.4$]{
  \includegraphics[scale=.45]{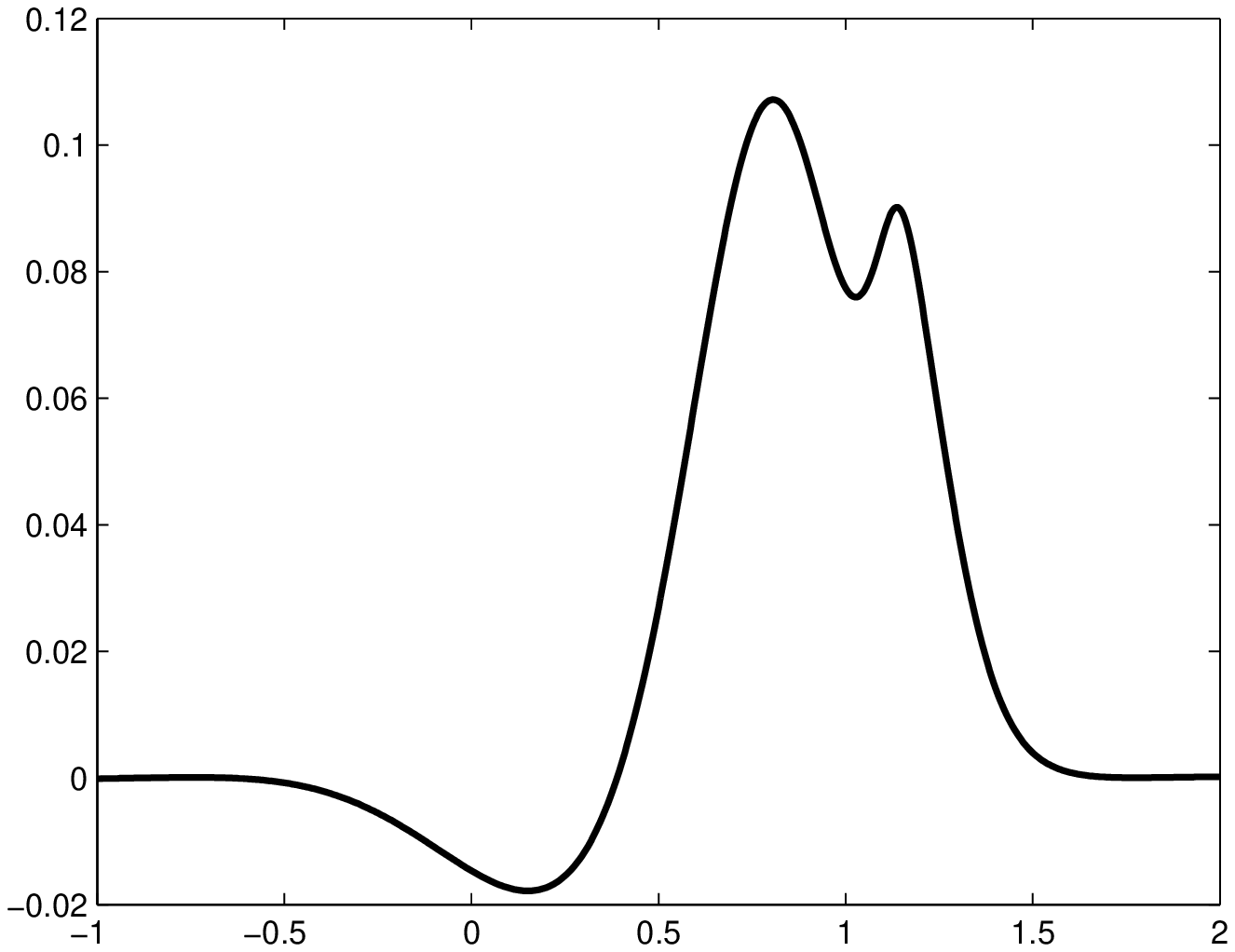}
}
\subfigure[Heat flux $q_x$ at $y=0$]{
  \includegraphics[scale=.45]{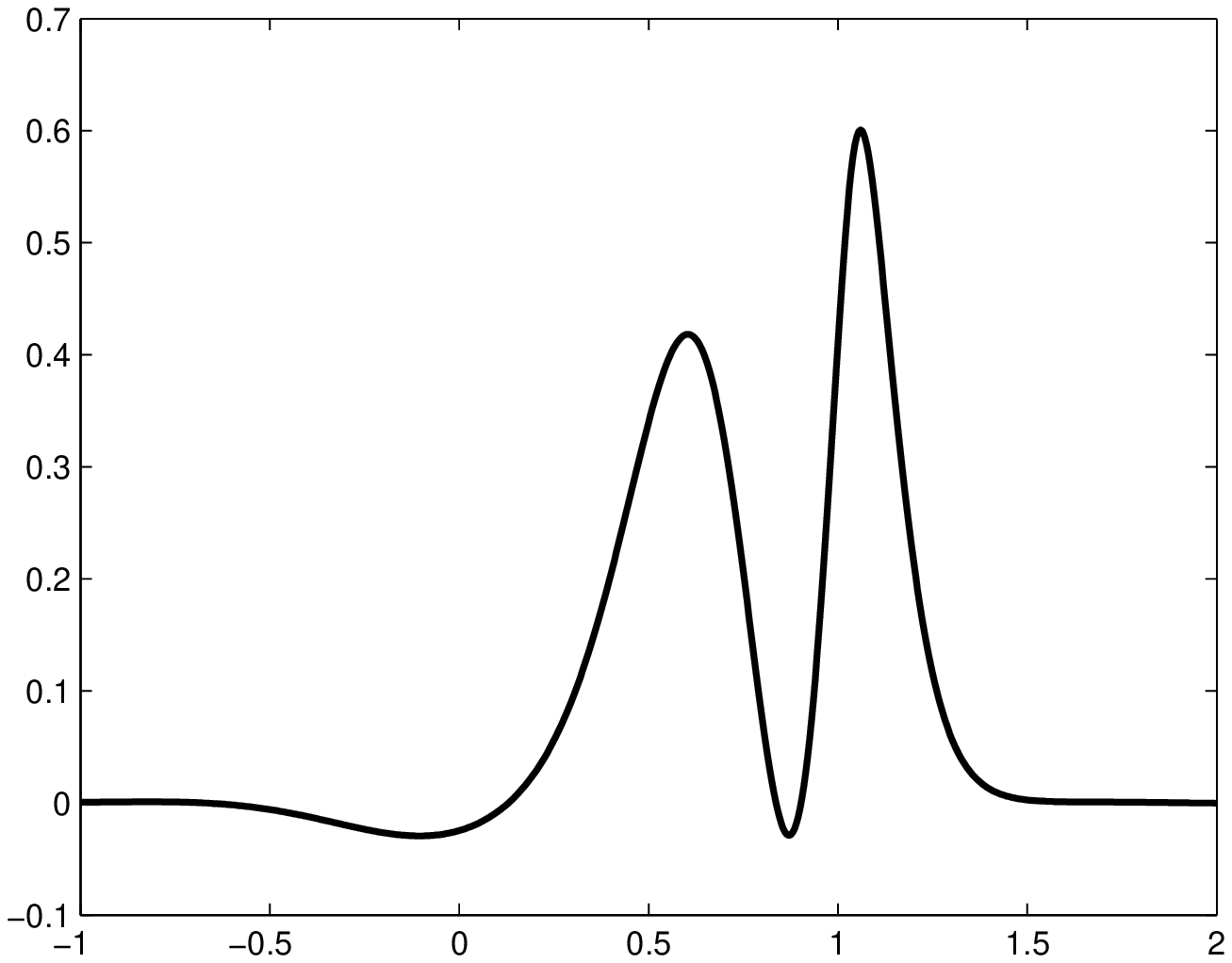}
}
\caption{R20 results of shock-bubble interaction for
$\Kn = 0.05$ at $t = 0.8$}
\label{fig:ShockBubble_Kn=0.05}
\end{figure}

\begin{figure}[!ht]
\centering
\subfigure[Density]{
  \includegraphics[scale=.5,clip]{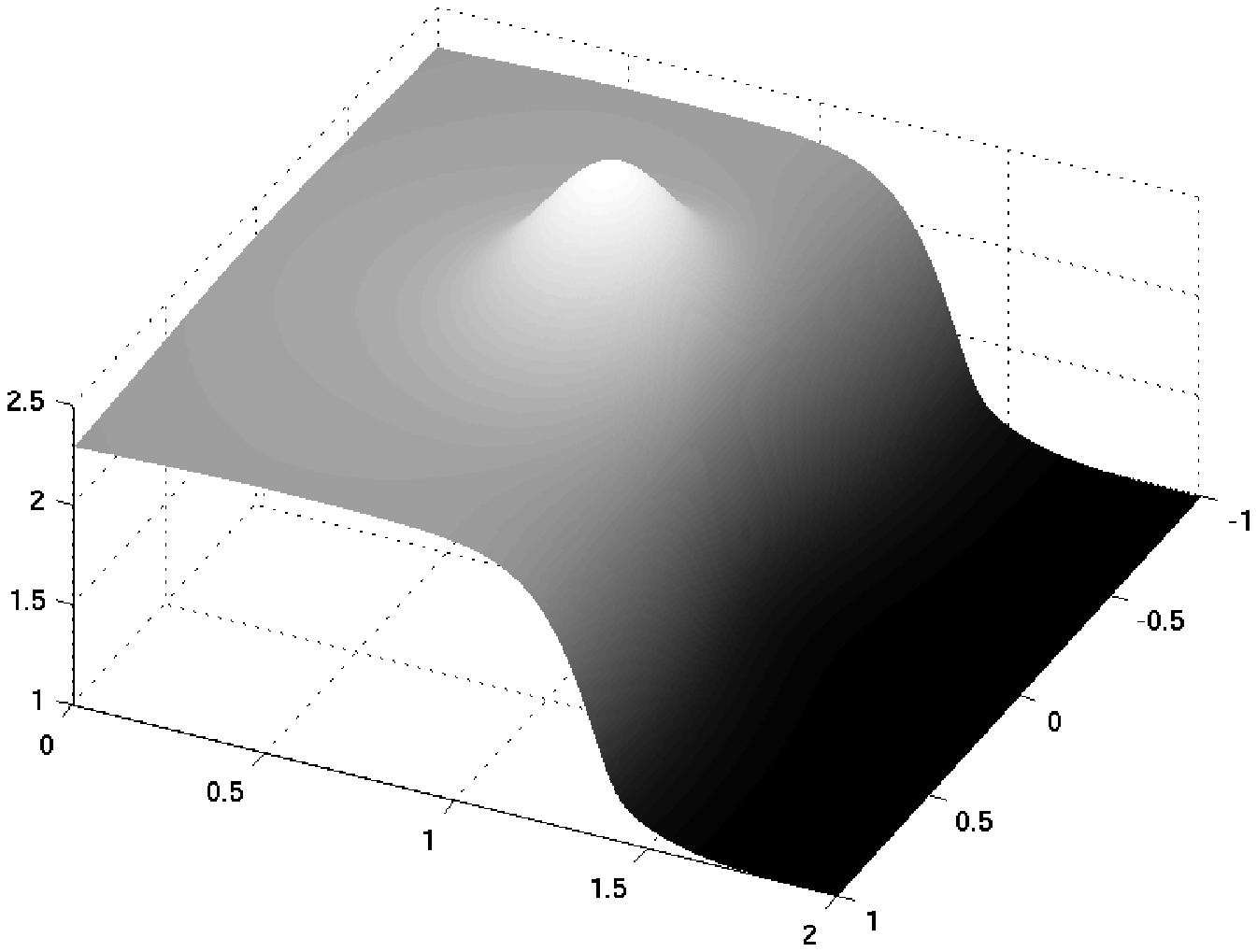}
}
\subfigure[Temperature]{
  \includegraphics[scale=.5,clip]{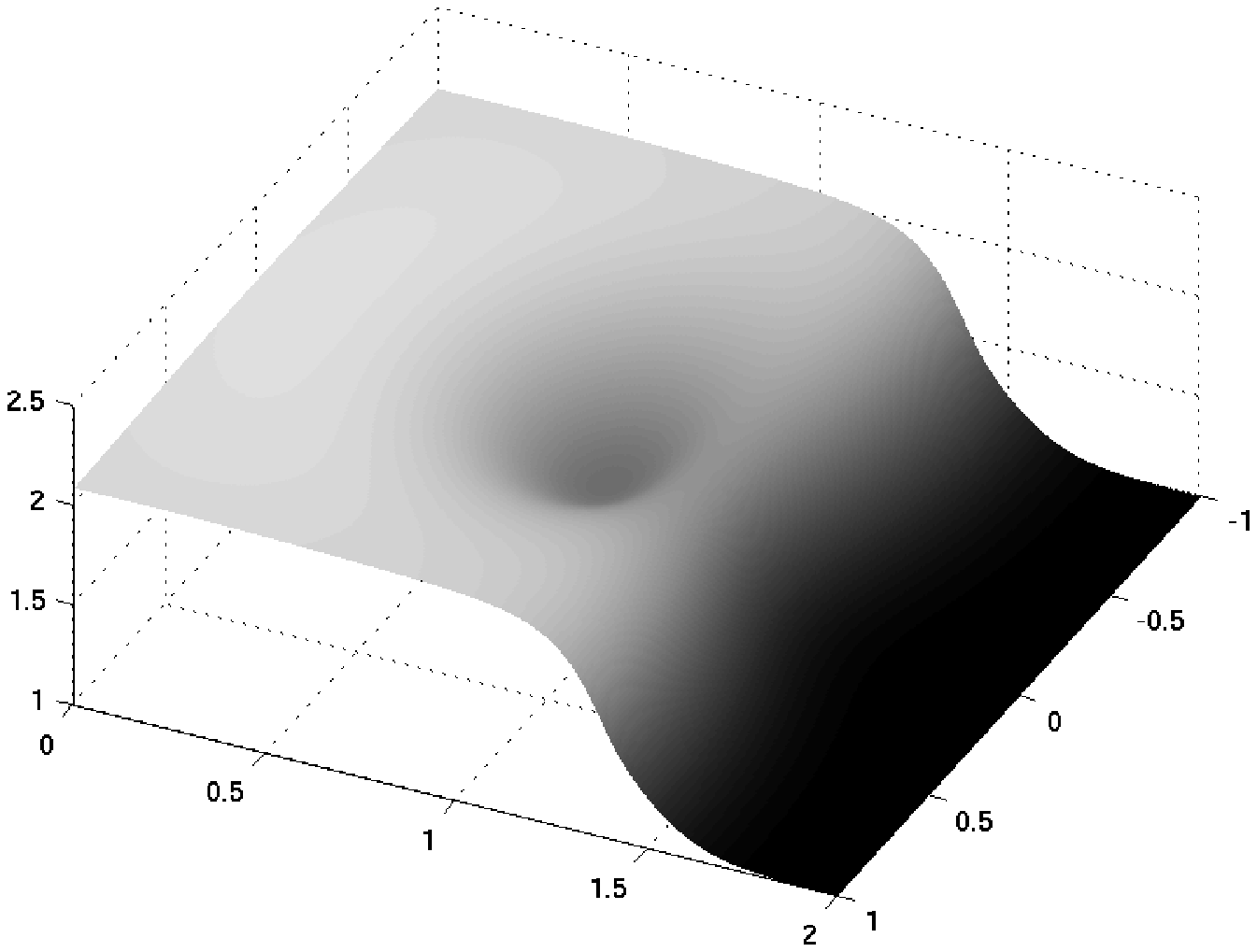}
}
\caption{R20 results of shock-bubble interaction for
$\Kn = 0.05$ at $t = 0.9$}
\label{fig:ShockBubble_Kn=0.05_t=0.9}
\end{figure}

For $\Kn = 0.1$, we know from Figure \ref{fig:periodic} that R20
results deviate from the BGK solution slightly, so some deviation
between R13 and R20 results is reasonable.  Our R20 results are
plotted in Figure \ref{fig:ShockBubble_Kn=0.1_M=3}. A comparison with
R13 results in \cite{Torrilhon2006} shows that both R13 and R20
equations are able to give correct structures of density and
temperature, while NSF is not.

\begin{figure}[!ht]
\centering
\subfigure[Density plot]{
  \includegraphics[scale=.45]{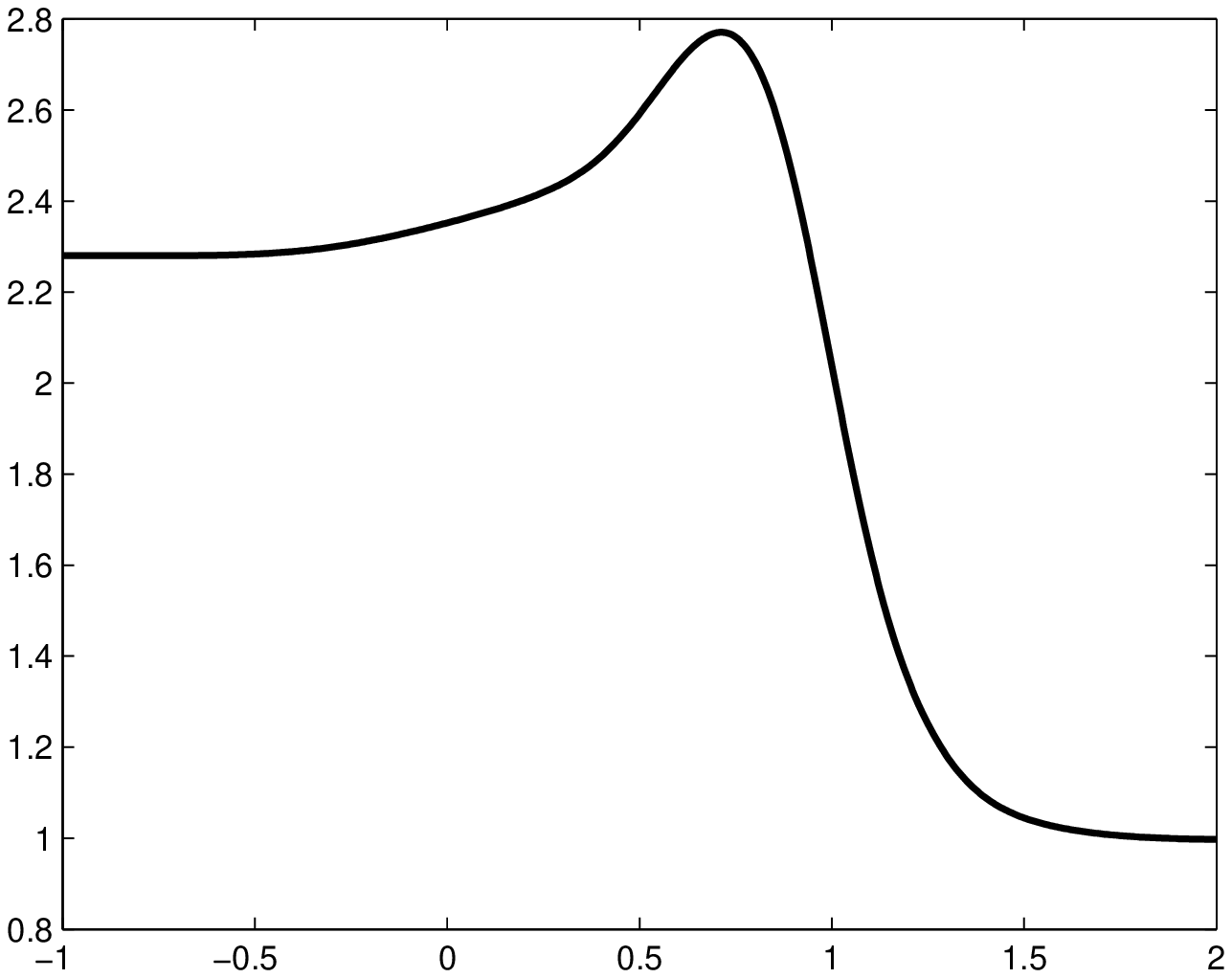}
}
\subfigure[Temperature plot]{
  \includegraphics[scale=.45]{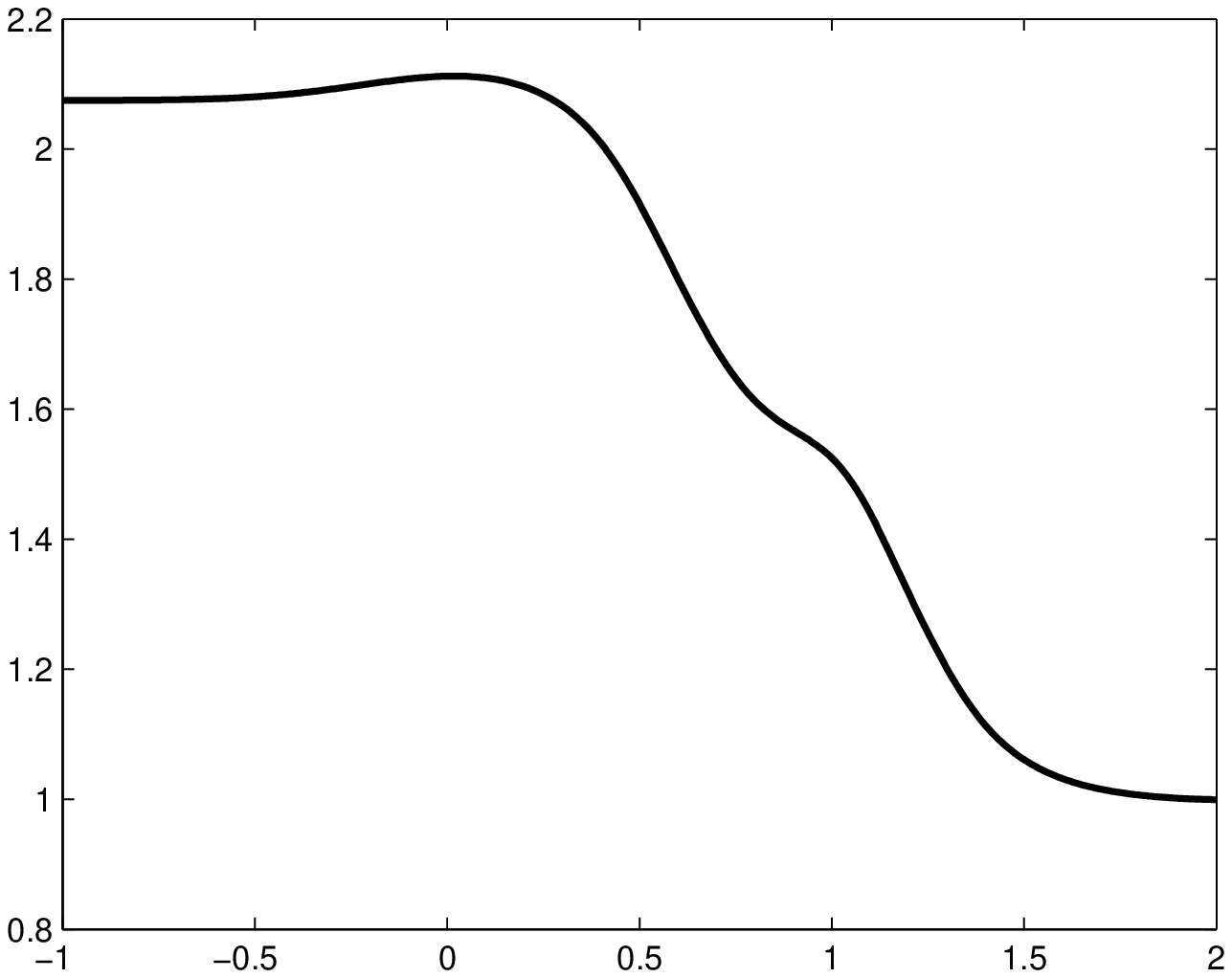}
}
\caption{R20 results of shock-bubble interaction for
$\Kn = 0.1$ at $t = 0.8$ and $y = 0$}
\label{fig:ShockBubble_Kn=0.1_M=3}
\end{figure}

\subsubsection{An example with three-dimensional velocity}
In all the numerical examples above, the $z$-component of
velocity is always zero. Now we consider an example with
three-dimensional velocity with initial conditions as
\begin{equation} \label{eq:periodic2D}
\begin{split}
\rho(0, \bx) & = 2 + \frac{1}{2} \cos(\pi x_1)
  + \frac{1}{2} \sin(\pi x_2), \quad p(0, \bx) = 1, \\
\bu(0, \bx) & = \left( \begin{array}{c}
  1 + \frac{1}{2} \sin(\pi x_1) + \frac{1}{2} \cos(\pi x_2) \\[5pt]
  \frac{1}{2} \sin(\pi x_1) + \frac{1}{2} \cos(\pi x_2) \\[5pt]
  \frac{1}{2} \sin(\pi x_1) + \frac{1}{2} \cos(\pi x_2)
\end{array} \right).
\end{split}
\end{equation}
The fluid is in equilibrium over the whole computational domain
$[-1,1] \times [-1,1]$ together with periodic boundary condition.

For this example, the simulations of the R20 and R84 equations with
$\Kn = 0.1$ are carried out. The numerical solution of the R84
equations on meshes with different sizes are compared to check the
spatial convergence order of our scheme. In the case of no exact
solution being available, we take the numerical result on a mesh with
$500\times 500$ grids as the reference solution. Other results are
computed on meshes with $N_x \times N_x$ grids, where $N_x = 10$ up
to $200$. HLL flux without gradient reconstruction is used in our
finite volume scheme, so the convergence rate is expected to be the
first order.

The numerical solution is shown in Figure \ref {fig:Periodic2D_t=0.2}
and Figure \ref{fig:Periodic2D_t=0.4}. For density and temperature,
R20 and R84 results are almost identical at both $t=0.2$ and $t=0.4$.
However, observable deviation appears in the vertical heat flux in
both Figure \ref{fig:Periodic2D_t=0.2} and Figure
\ref{fig:Periodic2D_t=0.4}. The nontrivial vertical heat flux $q_3$
declares the capacity of our method to simulate 3D non-equilibrium
processes. The $L^1$ errors of the solutions on different meshes are
illustrated in Figure \ref{fig:Periodic2D_err}, where $E(\cdot)$ is
calculated by
\begin{equation}
E(\psi) = \log_{10} \sum_{i=1}^{N_x^{\mathrm{(ref)}}}
  \sum_{j=1}^{N_x^{\mathrm{(ref)}}}
  \Delta x_1^{\mathrm{(ref)}}
  \Delta x_2^{\mathrm{(ref)}}
    |\psi^{\mathrm{(num)}} (\bx_{i,j}^{\mathrm{(ref)}})
      - \psi^{\mathrm{(ref)}} (\bx_{i,j}^{\mathrm{(ref)}})|.
\end{equation}
Here all symbols with superscript ``$\mathrm{(ref)}$'' stand for the
corresponding quantities in the reference solution, i.e. the solution
on the $500\times 500$ mesh, and the symbol with superscript
``$\mathrm{(num)}$'' is the solution on the coarse mesh, which is
considered as piecewise constant. Obviously, first order convergence
rate is achieved.

\begin{figure}[!ht]
\centering
\begin{tabular}{c@{}ccc}
& Density $\rho$ & Temperature $\theta$ & Vertical heat flux $q_3$ \\
\raisebox{43pt}{\small R20}
  & \includegraphics[scale=.3,bb=111 247 509 544,clip]{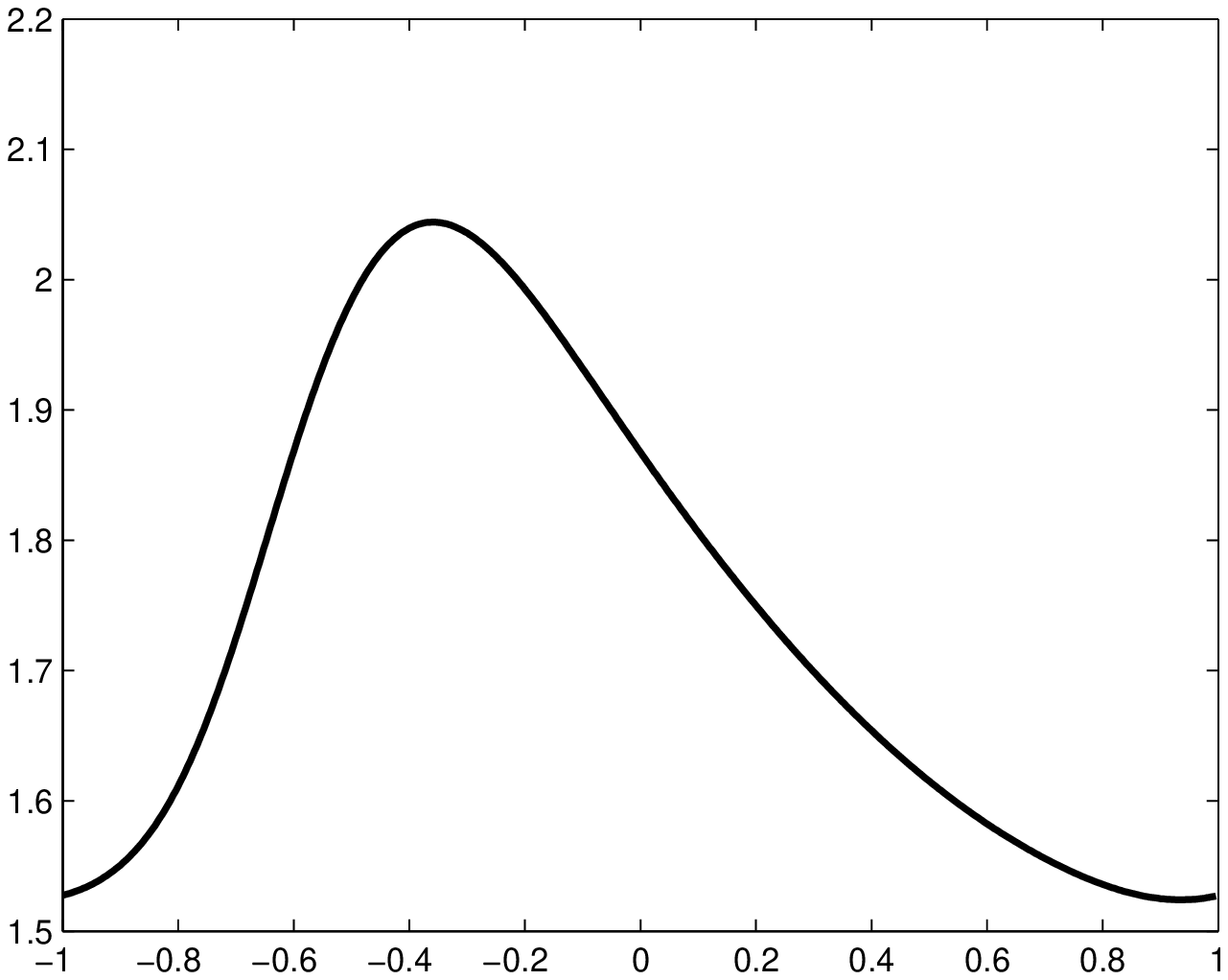}
  & \includegraphics[scale=.3,bb=111 247 509 544,clip]{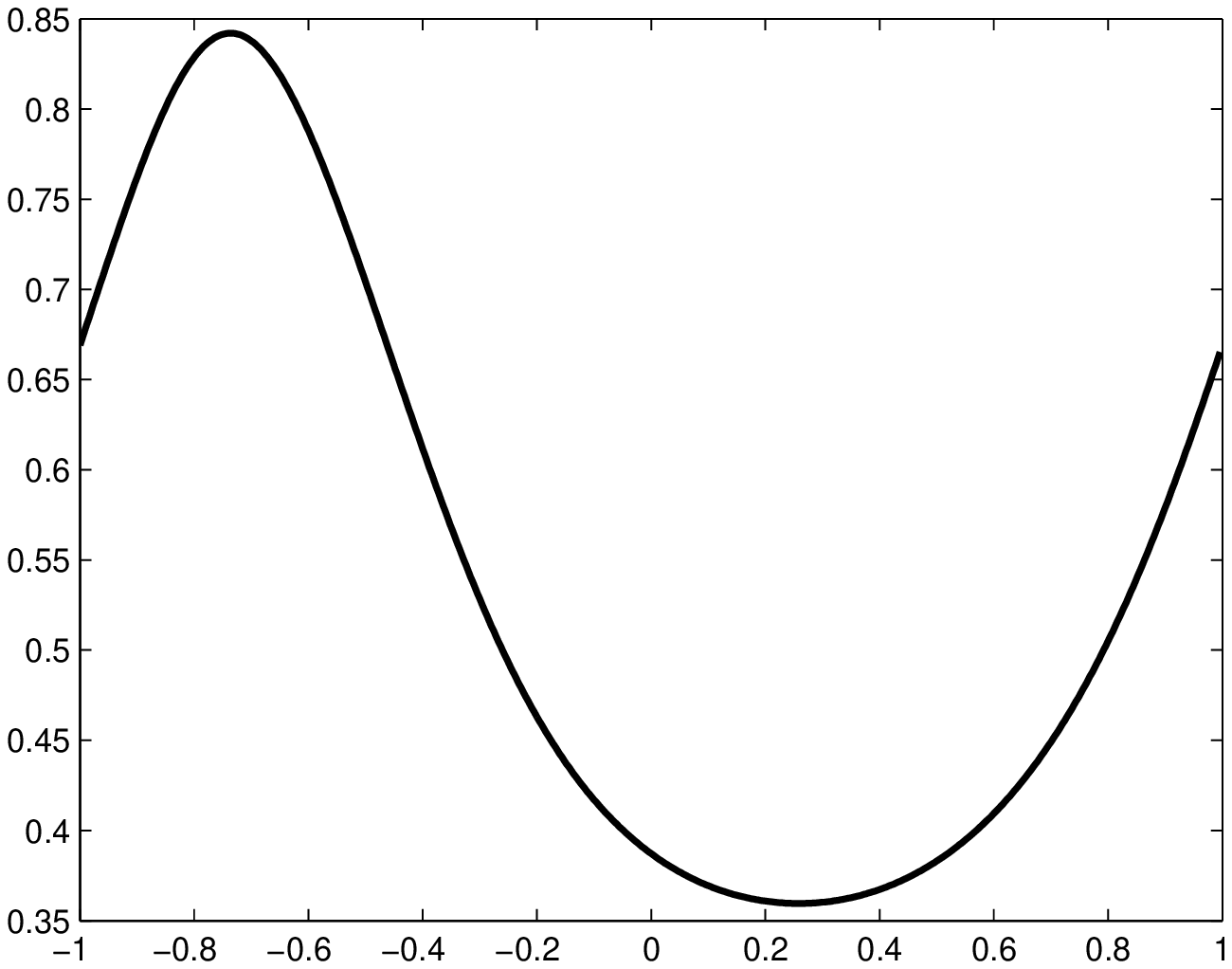}
  & \includegraphics[scale=.3,bb=111 247 509 544,clip]{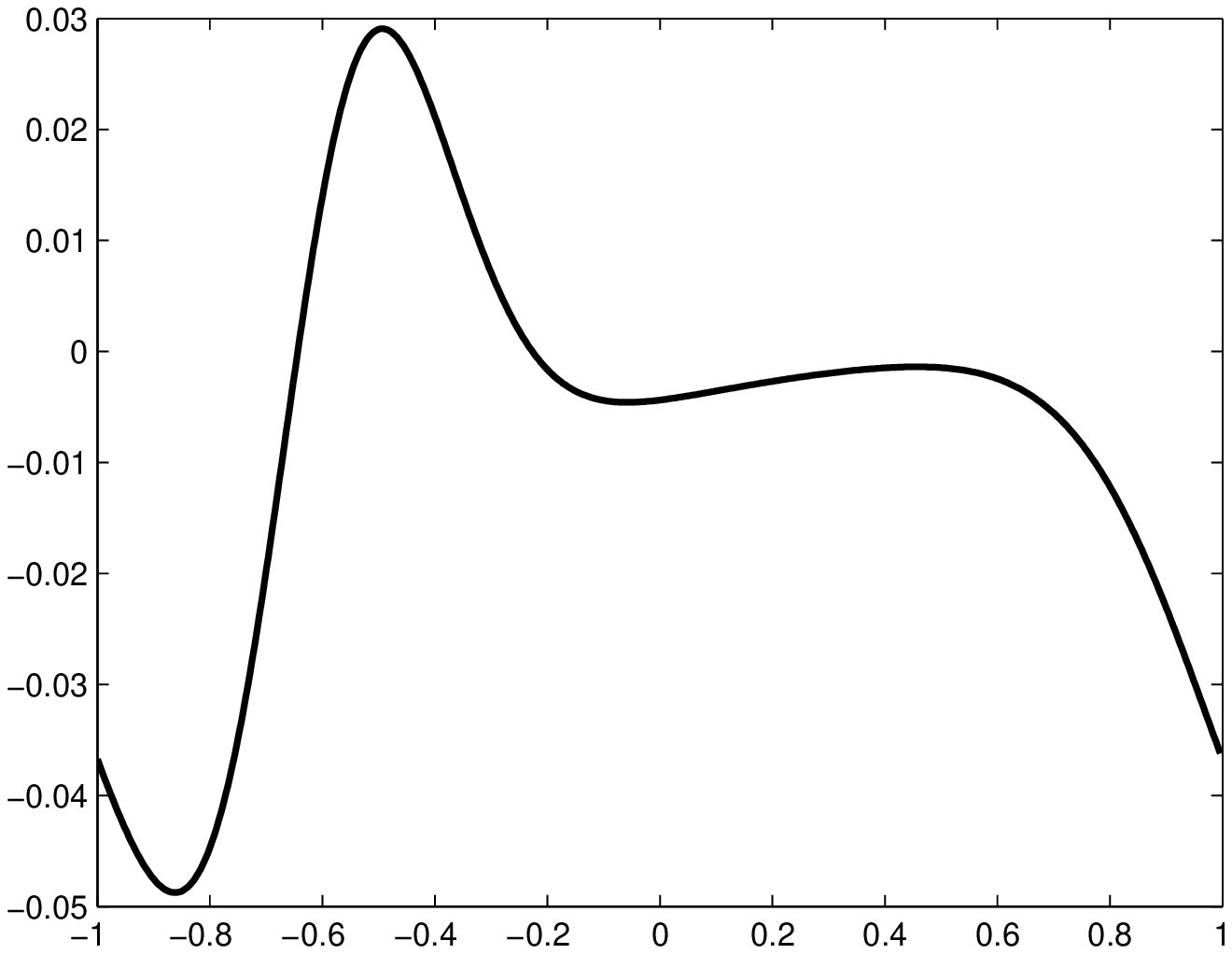} \\
\raisebox{43pt}{\small R84}
  & \includegraphics[scale=.3,bb=111 247 509 544,clip]{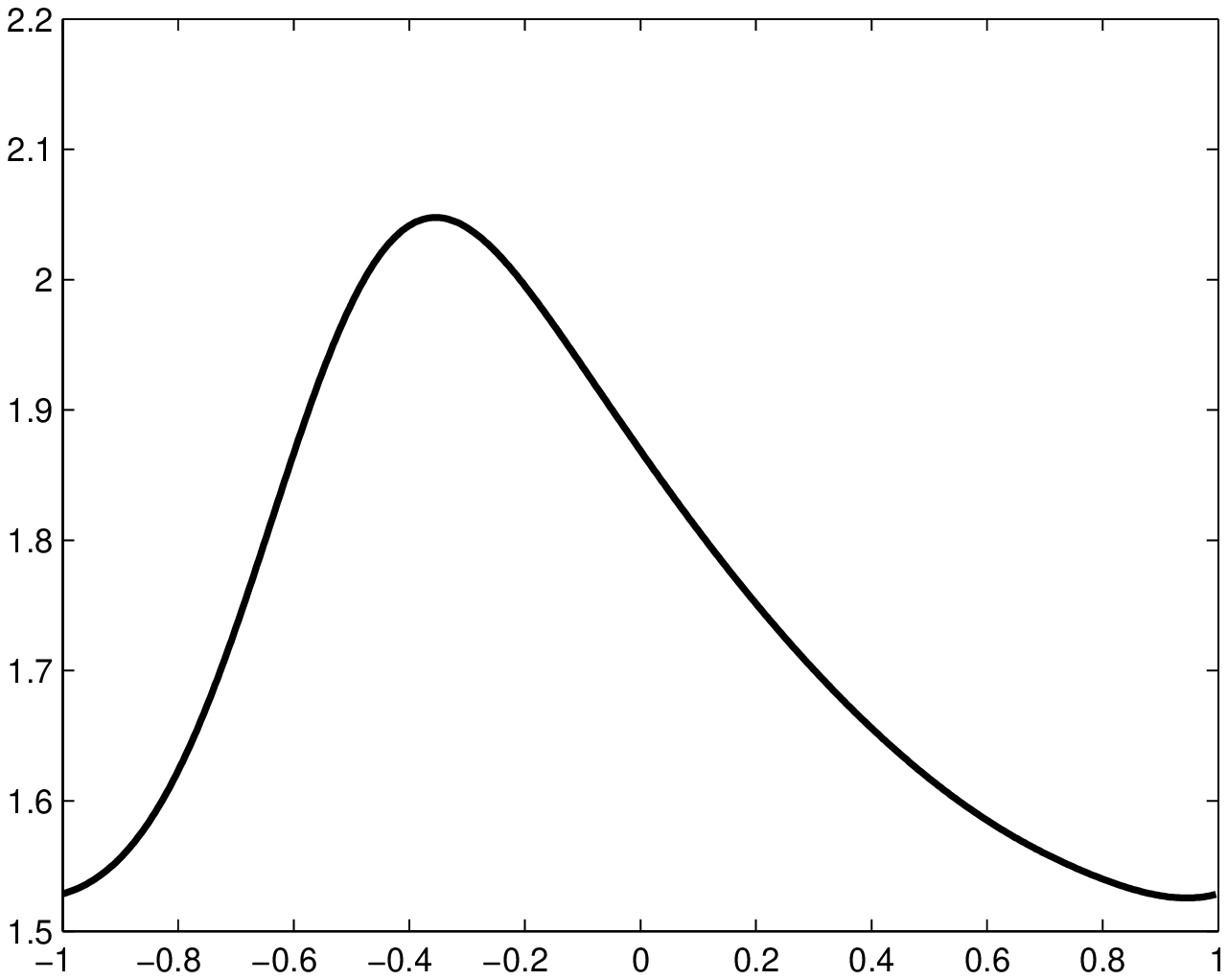}
  & \includegraphics[scale=.3,bb=111 247 509 544,clip]{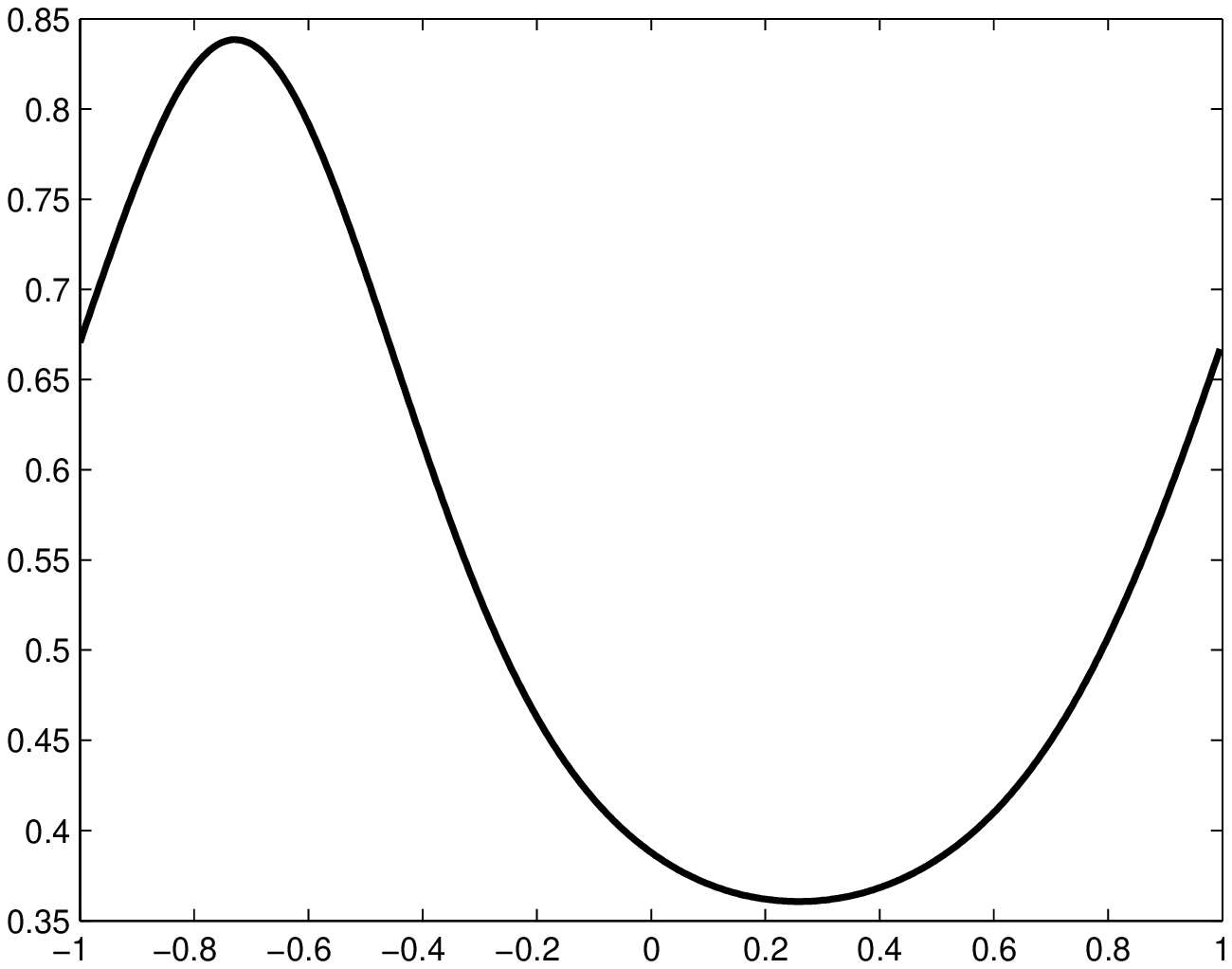}
  & \includegraphics[scale=.3,bb=111 247 509 544,clip]{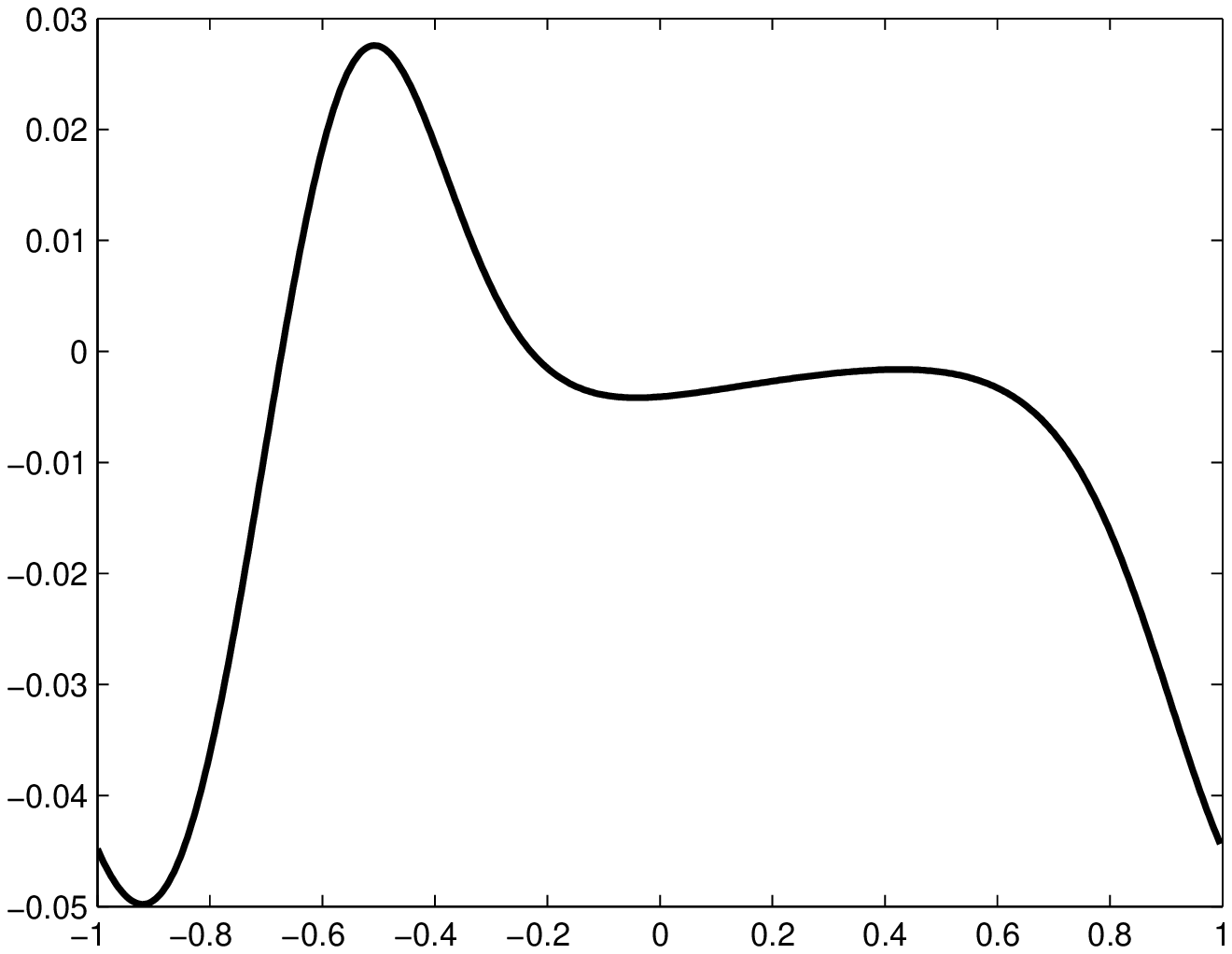}
\end{tabular}
\caption{Numerical solution of problem \eqref{eq:periodic2D}
at $t = 0.2$ and $y = 0$.}
\label{fig:Periodic2D_t=0.2}
\end{figure}

\begin{figure}[!ht]
\centering
\begin{tabular}{c@{}ccc}
& Density $\rho$ & Temperature $\theta$ & Vertical heat flux $q_3$ \\
\raisebox{43pt}{\small R20}
  & \includegraphics[scale=.3,bb=111 247 509 544,clip]{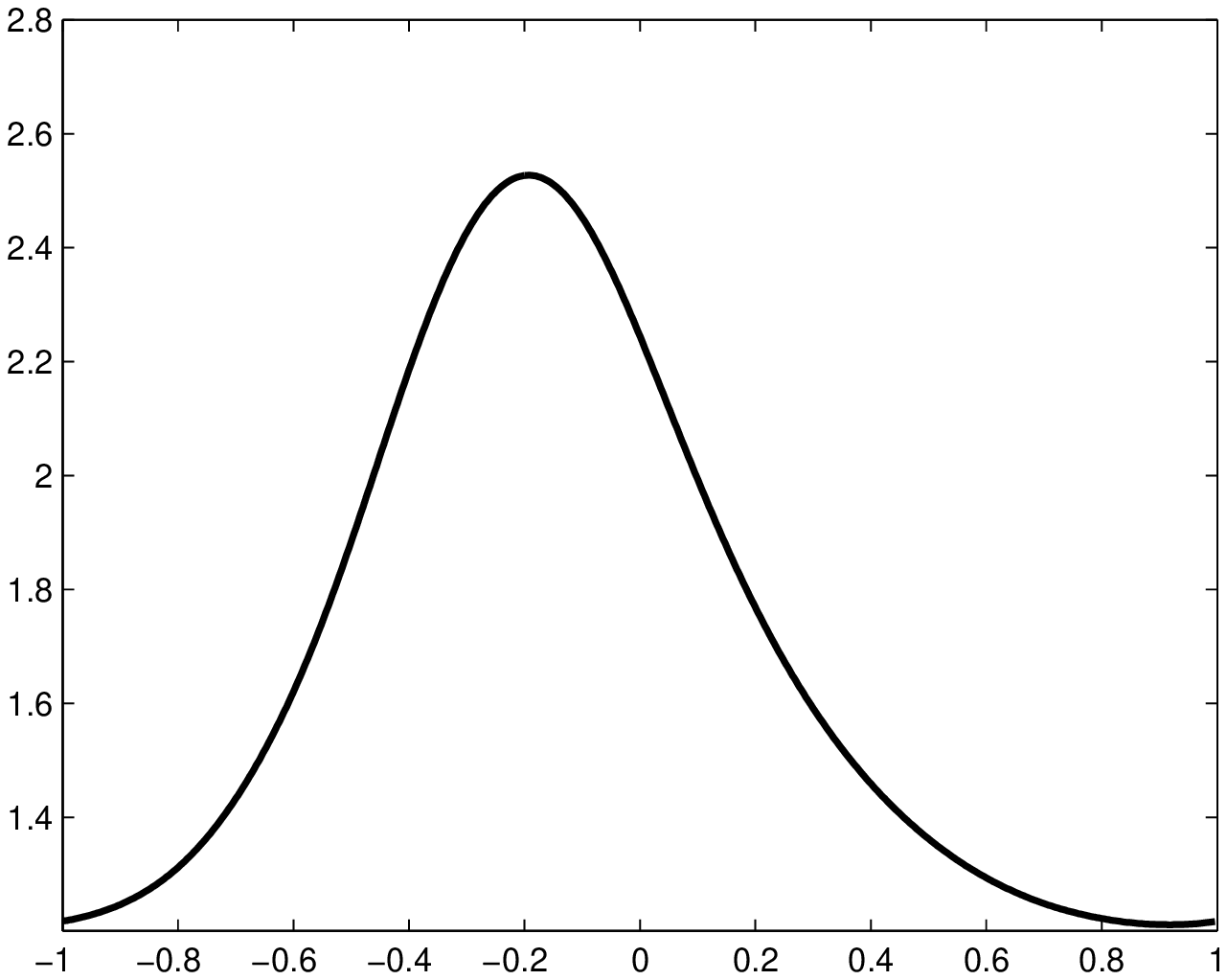}
  & \includegraphics[scale=.3,bb=111 247 509 544,clip]{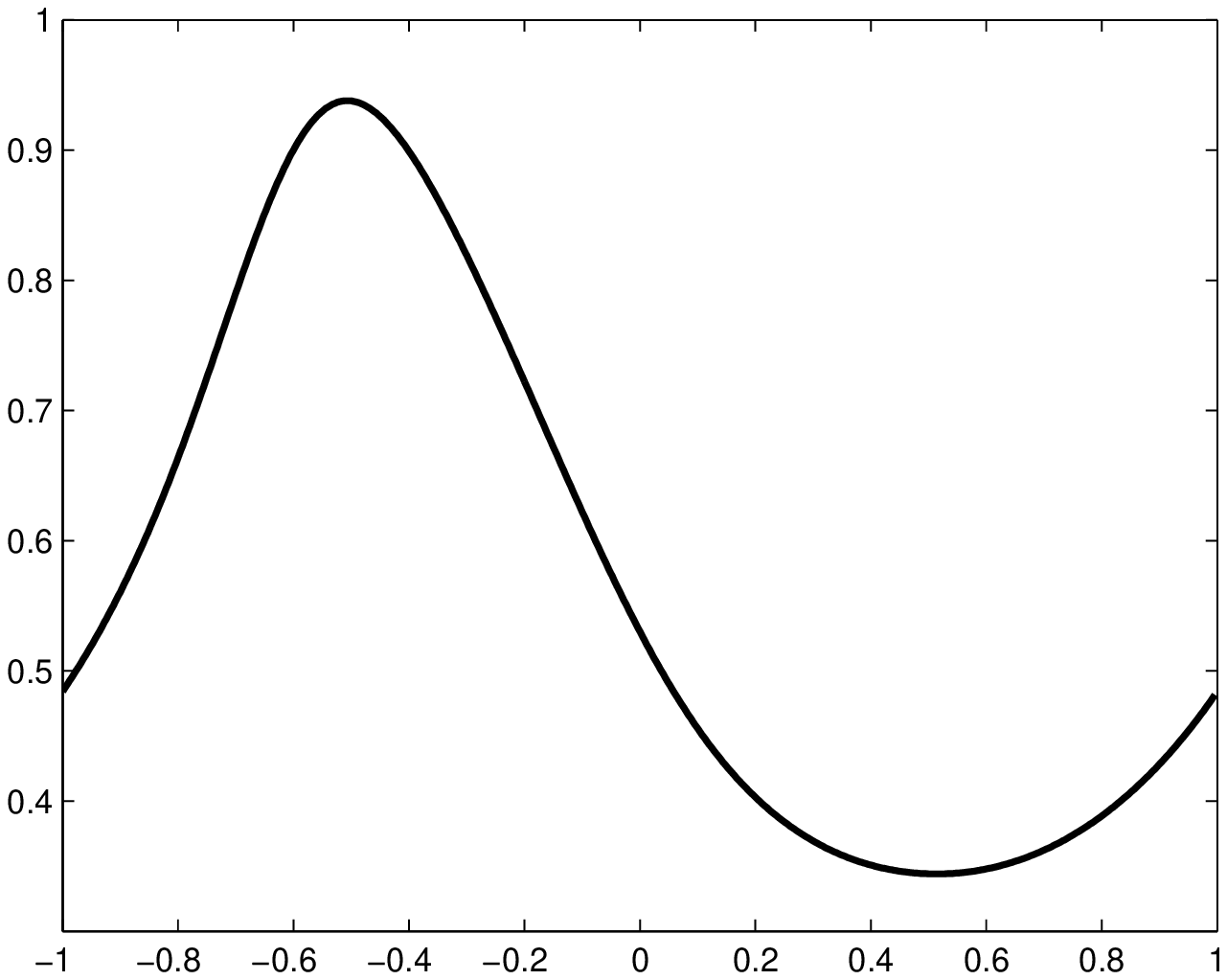}
  & \includegraphics[scale=.3,bb=111 247 509 544,clip]{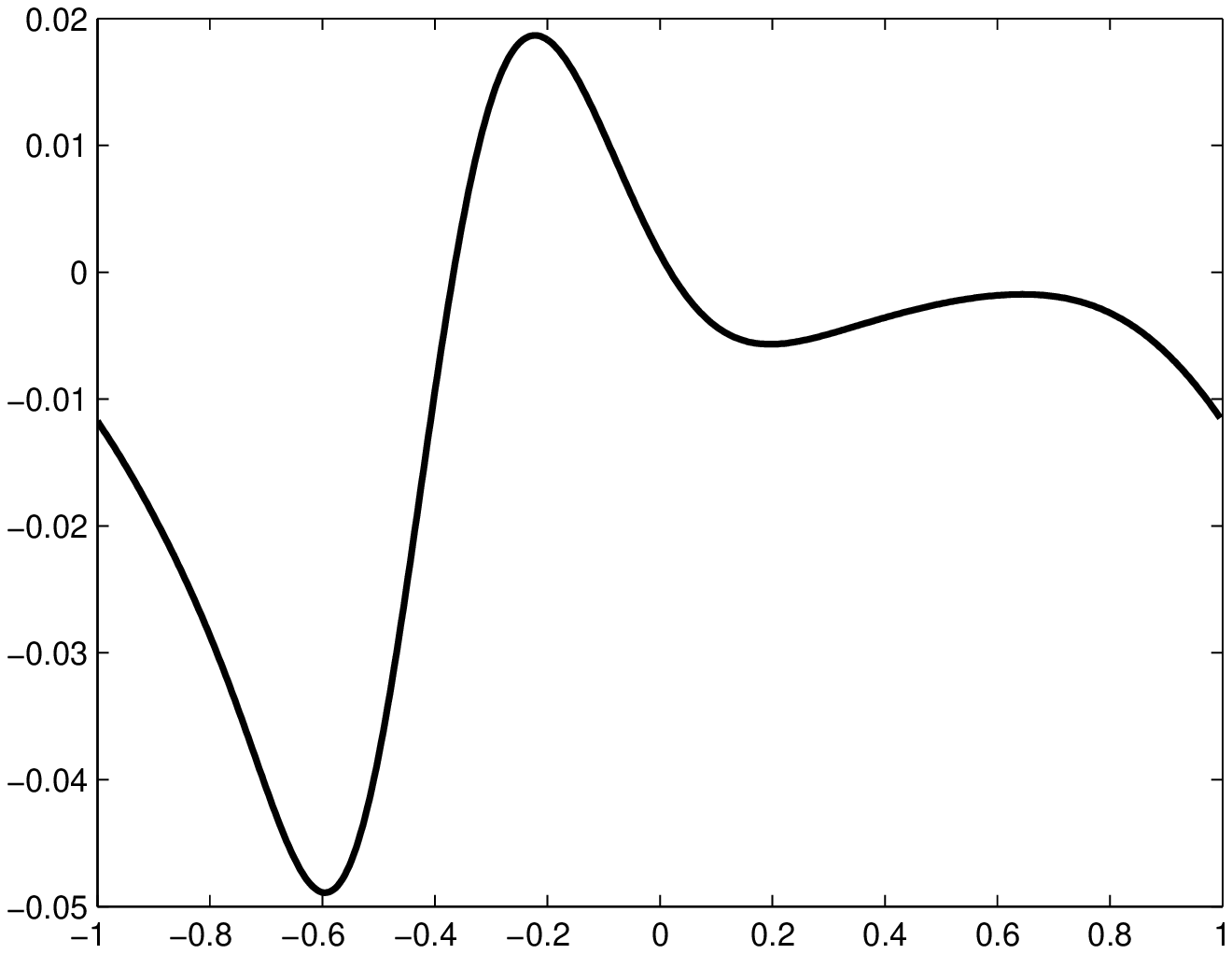} \\
\raisebox{43pt}{\small R84}
  & \includegraphics[scale=.3,bb=111 247 509 544,clip]{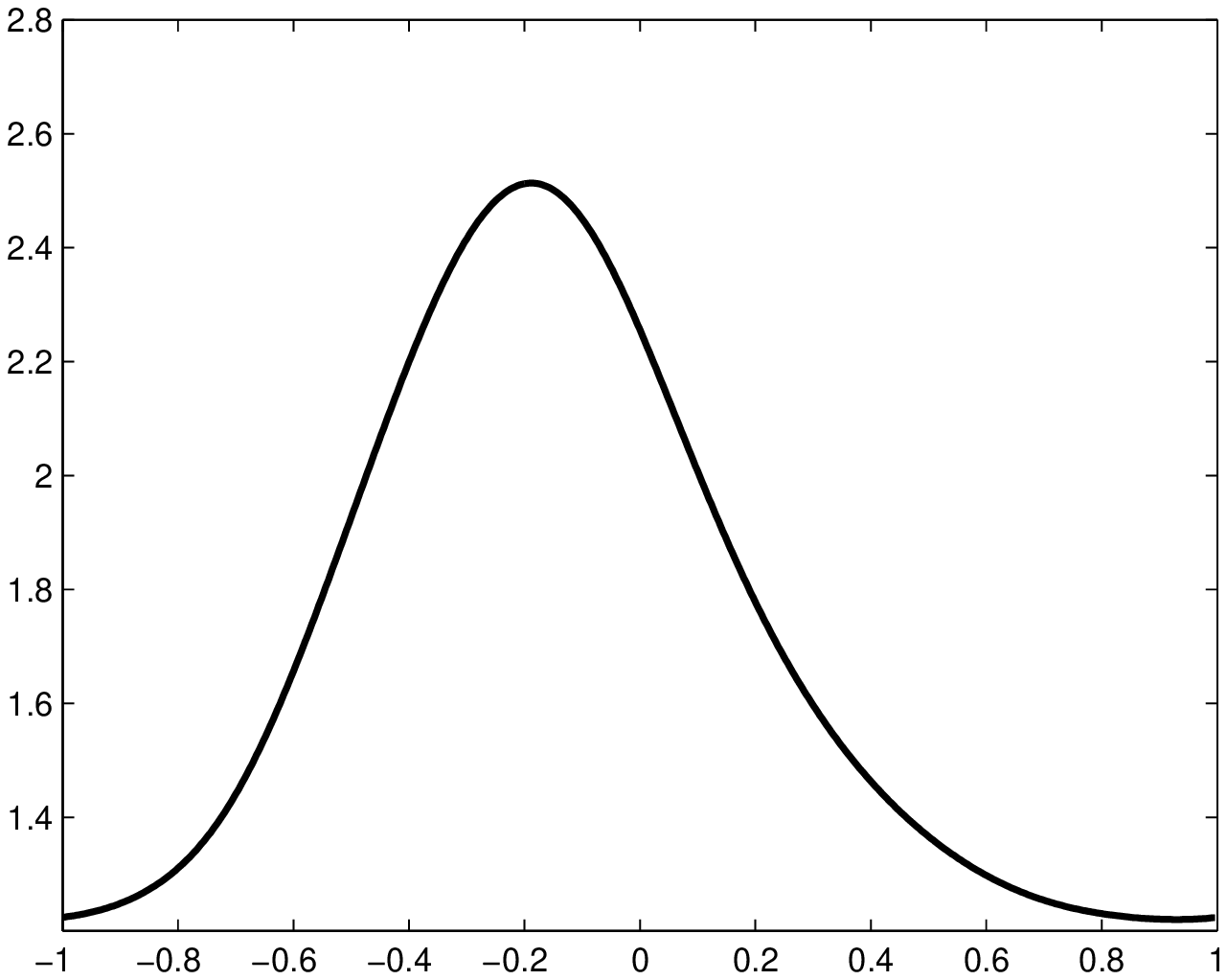}
  & \includegraphics[scale=.3,bb=111 247 509 544,clip]{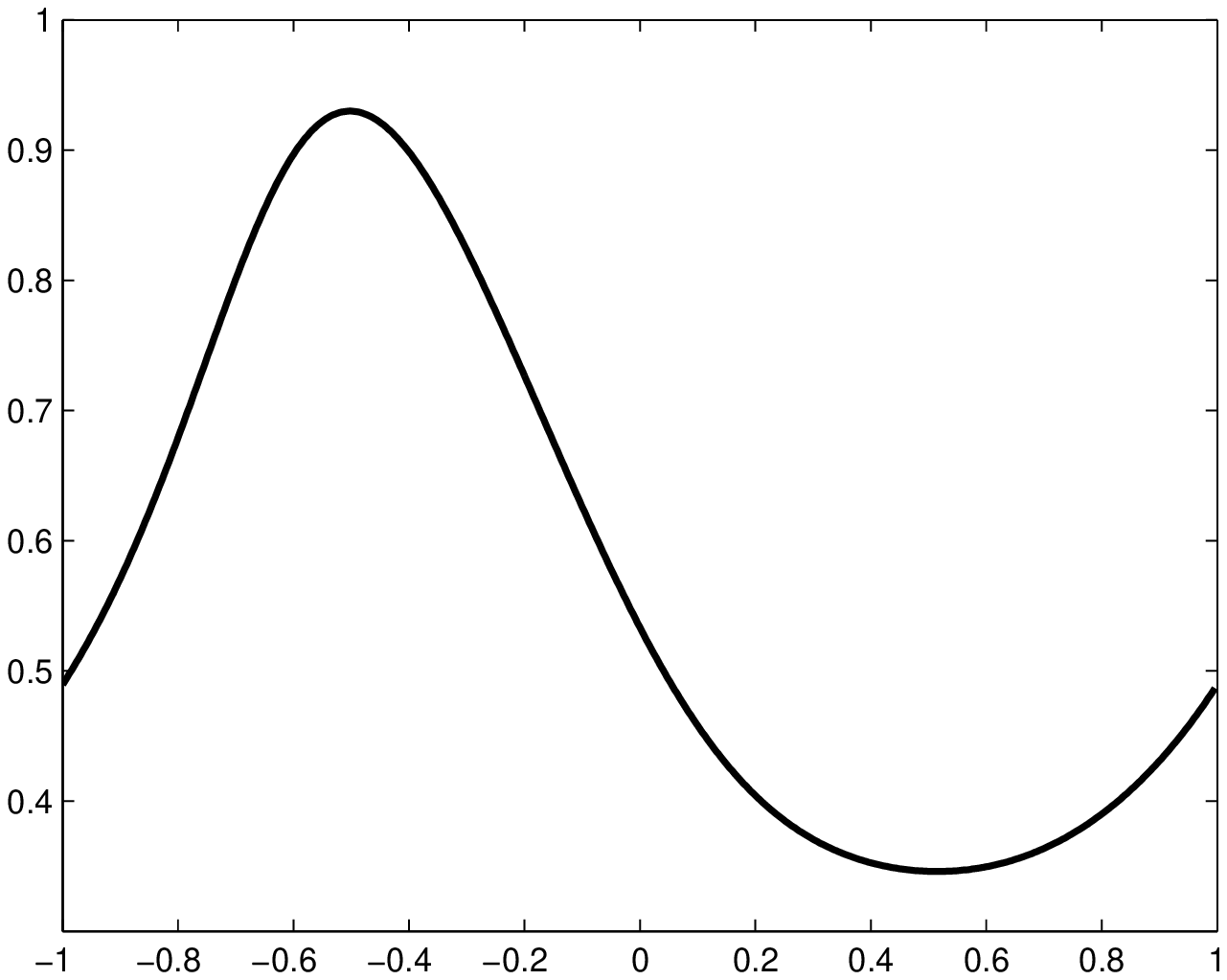}
  & \includegraphics[scale=.3,bb=111 247 509 544,clip]{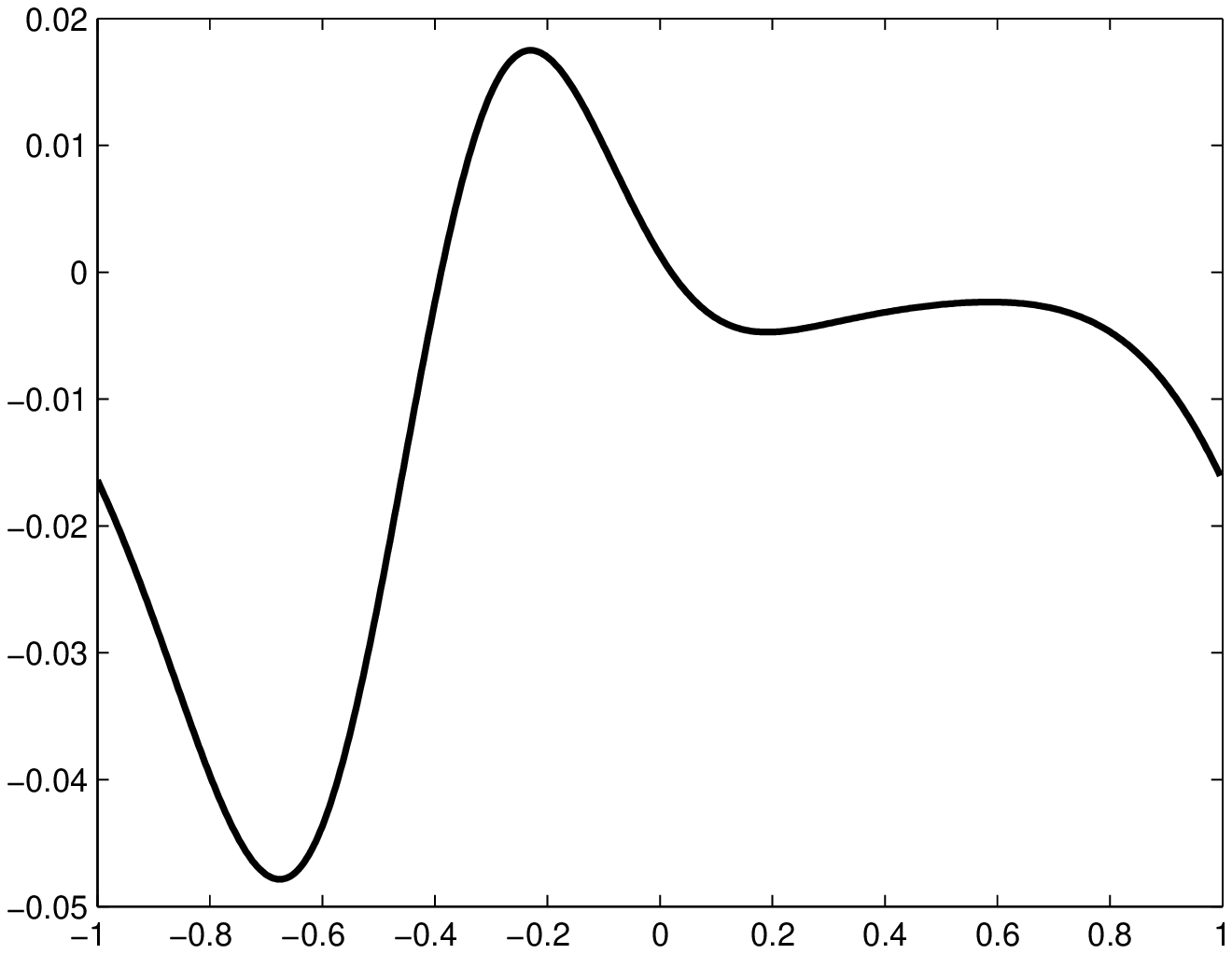}
\end{tabular}
\caption{Numerical solution of problem \eqref{eq:periodic2D}
at $t = 0.4$ and $y = 0$.}
\label{fig:Periodic2D_t=0.4}
\end{figure}

\begin{figure}[!ht]
\centering
\subfigure[$t=0.2$]{
\begin{overpic}[scale=.45]{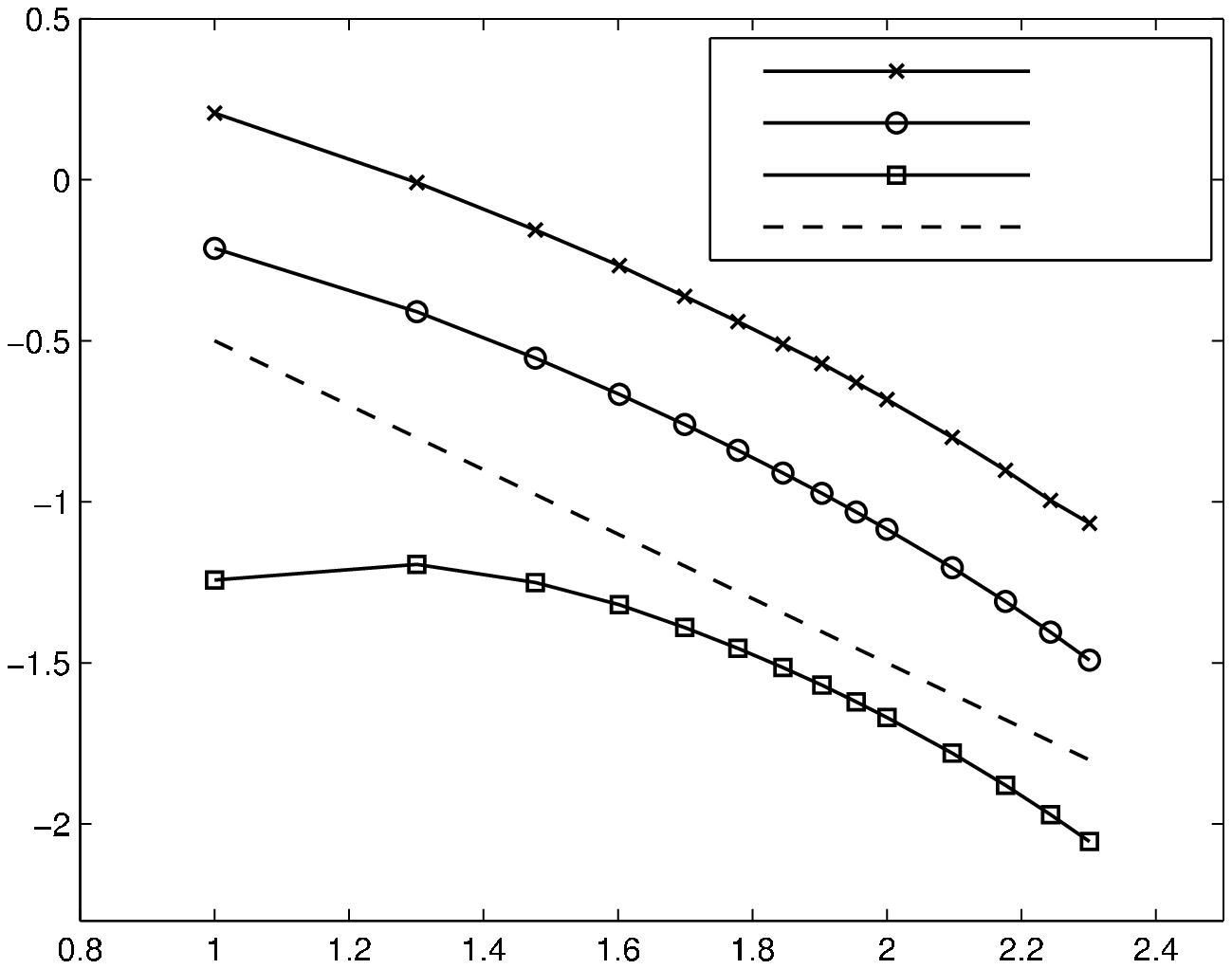}
\put(85,72){\scalebox{.5}{$E(\rho)$}}
\put(85,67.5){\scalebox{.5}{$E(\theta)$}}
\put(85,63.3){\scalebox{.5}{$E(q_3)$}}
\put(85,59.5){\scalebox{.5}{slope $-1$}}
\end{overpic}
}
\hspace{.5cm}
\subfigure[$t=0.4$]{
\begin{overpic}[scale=.45]{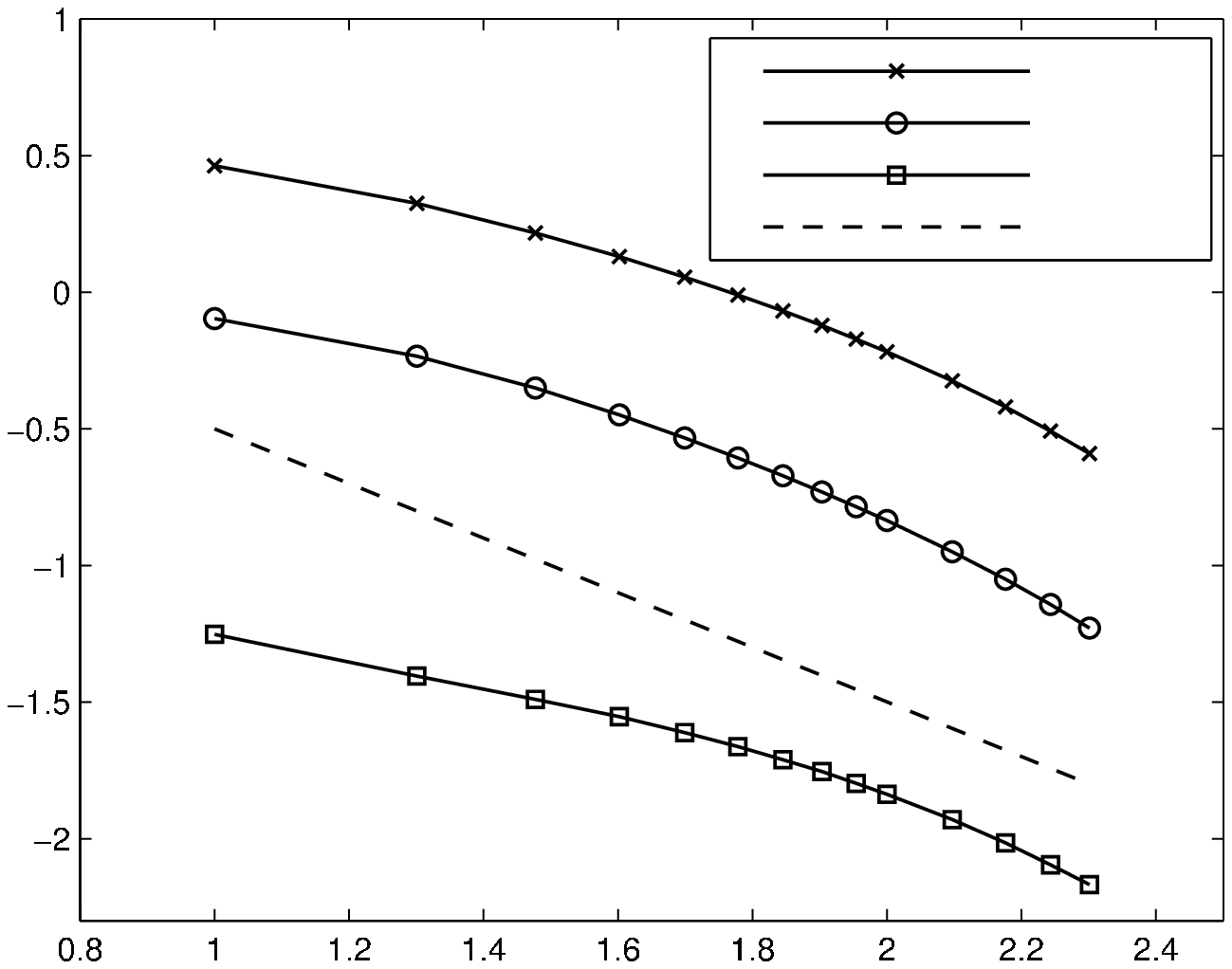}
\put(85,72){\scalebox{.5}{$E(\rho)$}}
\put(85,67.5){\scalebox{.5}{$E(\theta)$}}
\put(85,63.3){\scalebox{.5}{$E(q_3)$}}
\put(85,59.5){\scalebox{.5}{slope $-1$}}
\end{overpic}
}
\caption{The error plots for problem \eqref{eq:periodic2D}.
The $x$-axis is the logarithm of $N_x$, and the $y$-axis is
the logarithm of the norm of the $L^1$ error.}
\label{fig:Periodic2D_err}
\end{figure}


\section{Concluding remarks}
A uniform method to solve the regularized moment equations for
arbitrary order is proposed. This is the first time that the method
for arbitrary order regularized moment equations is developed, and the
moment method for large systems is applied to two-dimensional
problems. We are now devoting our efforts to the mesh adaptation and
parallelization of the algorithm to improve the computational
efficiency so that the proposed method can be applied to practical
applications.

\section*{Acknowledgements}
We thank Dr M. Torrilhon for providing us the software $ET_{XX}$ for
comparison and some useful discussions. The research of the second
author was supported in part by a Foundation for the Author of
National Excellent Doctoral Dissertation of PRC, the National Basic
Research Program of China under the grant 2005CB321701 and the
National Science Foundation of China under the grant 10731060.


\section*{Appendix}

\appendix

\section{Collection of the mathematical symbols}
Since lots of mathematical symbols are used in this paper, in order to
provide convenience to the readers, we list some of them here.

\vspace{.5cm}

{ \renewcommand\arraystretch{1.3}
\begin{tabular}{lp{12cm}}
$Q(f,f)$                           & The Boltzmann collision operator \\
$Q_{\mathrm{BGK}}(f)$              & The BGK collision operator \\
$f_M$                              & The Maxwellian distribution \\
$\mathcal{H}_{\theta,\alpha}$      & The basis functions for Grad's expansion \\
$\He_n(x)$                         & The Hermite polynomials \\
$F_M(\bu, \theta)$                 & The finite dimensional space spanned by
                                     $\mathcal{H}_{\theta,\alpha}((\bxi - \bu) / \sqrt{\theta})$,
				     $|\alpha| \leqslant M$ \\
$f_{\beta}^n$                      & The discrete distribution function on the cell indexed
                                     by $\beta$ at time $t^n$ \\
$\bu_{\beta}^n, \theta_{\beta}^n$  & The mean velocity and temperature on the cell indexed by
                                     $\beta$ at time $t^n$ \\
$F_{\beta + \frac{1}{2} e_j}$      & The numerical flux between cells indexed by $\beta$ and
                                     $\beta + e_j$ \\
$\lambda_j^L, \lambda_j^R$         & The fastest signal velocities travelling in the direction
                                     of $-x_j$ and $x_j$ \\
$Q_h(\cdot)$                       & The discrete collision operator \\
$\mathcal{H}_{\beta,\alpha}^n$     & Equivalent to $\mathcal{H}_{\theta_{\beta}^n,\alpha}$ \\
$C_{\theta,\alpha}$                & See \eqref{eq:C} \\
$\Pi_{\bu,\theta} f$               & The function generated by projecting $f$ into
                                     $F_M(\bu, \theta)$ \\
$\Pi_{\beta}^n$                    & Abbreviation of $\Pi_{\bu_{\beta}^n, \theta_{\beta}^n}$ \\
$\Pi_f$                            & The projection operator from $F_{\infty}(\bu, \theta)$
                                     to $F_M(\bu, \theta)$, where $\bu$ and $\theta$ is the
                                     mean velocity and temperature of the distribution
                                     function $f$ \\
$\Pi_{f_1,f_2}$                    & The projection operator from $F_M(\bu_1, \theta_1)$ to
                                     $F_M(\bu_2, \theta_2)$, where $\bu_i$ and $\theta_i$ are
                                     the mean velocity and temperature of the distribution
                                     function $f_i$, $i = 1,2$ \\
$f^0$                              & A truncation of the distribution \eqref{eq:expansion},
                                     defined by \eqref{eq:truncation} \\
$f^k$                              & The $k$th order term in the Chapman-Enskog expansion \\
$(\cdot)_{\alpha}$                 & The coefficient indexed by $\alpha$ in the expansion of
                                     the parameter function \\
$\tilde{\Pi}_{\bu,\theta}$         & The projection operator to the space $F_{M+1}(\bu,\theta)$ \\
$\tilde{\Pi}_{\beta}^n$            & Abbreviation of $\tilde{\Pi}_{\bu_{\beta}^n, \theta_{\beta}^n}$ \\
\end{tabular}}

\section{Calculation of the partial derivative $\partial F / \partial \tau$}
\label{sec:df_dt}
The calculation of the temporal partial derivative of \eqref{eq:F}
is performed here. Define $A(v,w,\tau)$ and $B(v,w)$ as
\begin{equation}
A(v,w,\tau) = [(\htheta - 1)\tau + 1]v + w\tau, \quad
B(v,w) = \frac{\partial A}{\partial \tau} = (\htheta - 1)v + w.
\end{equation}
It follows from \eqref{eq:differential} that
\begin{equation} \label{eq:diff_He_raw}
\begin{split}
& \frac{\partial}{\partial \tau} \left[
  \He_m(A(v,w,\tau)) \exp\left(
    -\frac{[A(v,w,\tau)]^2}{2} \right) \right] \\
= & -B(v,w) \He_{m+1}(A(v,w,\tau)) \exp\left(
    -\frac{[A(v,w,\tau)]^2}{2} \right).
\end{split}
\end{equation}
With the definition of $R(\tau)$ and $S(\tau)$ \eqref{eq:RS}, $B(v,w)$
can be related with $A(v,w,\tau)$ by
\begin{equation} \label{eq:AB}
B(v,w) = R(\tau) A(v,w,\tau) + w S(\tau).
\end{equation}
Substituting \eqref{eq:AB} into \eqref{eq:diff_He_raw}, and employing
the recursion relation of Hermite polynomials, we have
\begin{equation} \label{eq:diff_He}
\begin{split}
& \frac{\partial}{\partial \tau}
  \left[\He_m(A) \exp(-A^2/2)\right] \\
= & -[R \He_{m+2}(A) + w S \He_{m+1}(A) + (m + 1) R \He_m(A)]
  \exp(-A^2/2),
\end{split}
\end{equation}
where all parameters of $A$, $S$ and $R$ are omitted.  With
\eqref{eq:diff_He}, the partial derivative of
$\mathcal{H}_{\theta_1,\alpha}(\bA(\bv, \bw, \tau))$ can be naturally
obtained:
\begin{equation}
\frac{\partial}{\partial \tau}
  \mathcal{H}_{\theta_1,\alpha}(\bA)
= -\sum_{d=1}^D \Big[ \theta_1 R \,
  \mathcal{H}_{\theta_1,\alpha+2e_d}(\bA) +
w_d \sqrt{\theta_1} S \,
  \mathcal{H}_{\theta_1, \alpha + e_d}(\bA)
  + (\alpha_d + 1) R \,
  \mathcal{H}_{\theta_1, \alpha}(\bA) \Big],
\end{equation}
where
\begin{equation}
\bA = \bA(\bu,\bw,\tau) = [(\hat{\theta}-1) \tau + 1]\bv + \tau\bw
= [ A(u_1, w_1, \tau), \cdots, A(u_D, w_D, \tau) ]^T.
\end{equation}
Since
\begin{equation}
F(\bv, \tau) = F_{\alpha}(\tau) S(\tau)^{-(|\alpha| + D)}
  \mathcal{H}_{\theta_1,\alpha} (\bA(\bv, \bw, \tau)),
\end{equation}
we finally get
\begin{equation}
\begin{split}
\frac{\partial}{\partial \tau} F(\bv, \tau) &=
  \sum_{\alpha \in \bbN^D} S^{-(|\alpha| + D)} \Bigg\{
    \mathcal{H}_{\theta_1,\alpha}
    \frac{\mathrm{d}}{\mathrm{d} \tau} F_{\alpha}
    - F_{\alpha} \cdot \\
& \qquad \Bigg[ \sum_{d=1}^D \left(
  \theta_1 R \mathcal{H}_{\theta_1, \alpha + 2 e_d}
  + w_d \sqrt{\theta_1} S \mathcal{H}_{\theta_1, \alpha + e_d}
  + (\alpha_d + 1) R \mathcal{H}_{\theta_1, \alpha}
\right) \\
& \qquad \phantom{\Bigg[} + (|\alpha| + D) (1 - \htheta)
  S \mathcal{H}_{\theta_1, \alpha} \Bigg] \Bigg\} \\
&= \sum_{\alpha \in \bbN^D} S^{-(|\alpha| + D)}
  \mathcal{H}_{\theta_1,\alpha} \left\{
    \frac{\mathrm{d}}{\mathrm{d}\tau} F_{\alpha}
    - \sum_{d=1}^D S^2 \left[
      \theta_1 R F_{\alpha-2e_d}
      + w_d \sqrt{\theta_1} F_{\alpha - e_d}
    \right]
  \right\},
\end{split}
\end{equation}
where the parameter of $\mathcal{H}_{\theta_1,\alpha}$,
i.e. $\bA(\bv,\bw,\tau)$, is also omitted, and for $\alpha$
with negative components, $F_{\alpha}$ is taken to be zero.

\bibliographystyle{plain}
\bibliography{article}

\begin{thebibliography}{10}

\bibitem{Abramowitz}
M.~Abramowitz and I.~A. Stegun.
\newblock {\em Handbook of Mathematical Functions with Formulas, Graphs, and
  Mathematical Tables}.
\newblock Dover, New York, 1964.

\bibitem{ETXX}
J.~D. Au, H.~Struchtrup, and M.~Torrilhon.
\newblock {$ET_{XX}$} --- an equation generator for extended thermodynamics.
\newblock Source available on request via M.Torrilhon@vt.tu-berlin.de.

\bibitem{Weiss}
J.~D. Au, M.~Torrilhon, and W.~Weiss.
\newblock The shock tube study in extended thermodynamics.
\newblock {\em Phys. Fluids}, 13(8):2423--2432, 2001.

\bibitem{BGK}
P.~L. Bhatnagar, E.~P. Gross, and M.~Krook.
\newblock A model for collision processes in gases. {I}. small amplitude
  processes in charged and neutral one-component systems.
\newblock {\em Phys. Rev.}, 94(3):511--525, 1954.

\bibitem{Bobylev}
A.~V. Bobylev.
\newblock The {C}hapman-{E}nskog and {G}rad methods for solving the {B}oltzmann
  equation.
\newblock {\em Sov. Phys. Dokl.}, 27(1):29--31, 1982.

\bibitem{AIGLIB}
S.~Bochkanov.
\newblock {\tt http://www.alglib.net}.

\bibitem{Cercignani}
C.~Cercignani, R.~Illner, and M.~Pulvirenti.
\newblock {\em The mathematical theory of dilute gases}, volume 106 of {\em
  Applied Mathematical Sciences}.
\newblock Springer, New York, U.S.A., 1994.

\bibitem{Chapman}
S.~Chapman.
\newblock On the law of distribution of molecular velocities, and on the theory
  of viscosity and thermal conduction, in a non-uniform simple monatomic gas.
\newblock {\em Phil. Trans. R. Soc. A}, 216(538--548):279--348, 1916.

\bibitem{Degond}
P.~Degond, L.~Pareschi, and G.~Russo, editors.
\newblock {\em Modeling and Computational Methods for Kinetic Equations}.
\newblock Birkh{\"a}user, 2004.

\bibitem{Enskog}
D.~Enskog.
\newblock The numerical calculation of phenomena in fairly dense gases.
\newblock {\em Arkiv Mat. Astr. Fys.}, 16(1):1--60, 1921.

\bibitem{Ernst}
M.~H. Ernst.
\newblock Nonlinear model --- {B}oltzmann equations and exact solutions.
\newblock {\em Phys. Rep.}, 78(1):1--171, 1981.

\bibitem{Grad}
H.~Grad.
\newblock On the kinetic theory of rarefied gases.
\newblock {\em Comm. Pure Appl. Math.}, 2(4):331--407, 1949.

\bibitem{Grad1952}
H.~Grad.
\newblock The profile of a steady plane shock wave.
\newblock {\em Comm. Pure Appl. Math.}, 5(3):257--300, 1952.

\bibitem{Grad1958}
H.~Grad.
\newblock Principles of the kinetic theory of gases.
\newblock {\em Handbuch der Physik}, 12:205--294, 1958.

\bibitem{Gu}
X.~J. Gu and D.~R. Emerson.
\newblock A computational strategy for the regularized 13 moment equations with
  enhanced wall-boundary equations.
\newblock {\em J. Comput. Phys.}, 255(1):263--283, 2007.

\bibitem{HLL}
A.~Harten, P.~D. Lax, and B.~Van Leer.
\newblock On upstream differencing and {G}odunov-type schemes for hyperbolic
  conservation laws.
\newblock {\em SIAM Review}, 25(1):35--61, 1983.

\bibitem{Holway}
L.~H. Holway.
\newblock New statistical models for kinetic theory: Methods of construction.
\newblock {\em Phys. Fluids}, 9(1):1658--1673, 1966.

\bibitem{Liu}
G.~Liu.
\newblock A method for constructing a model form for the {B}oltzmann equation.
\newblock {\em Phys. Fluids A}, 2(2):277--280, 1990.

\bibitem{Maxwell}
J.~Clerk Maxwell.
\newblock On stresses in rarified gases arising from inequalities of
  temperature.
\newblock {\em Proc. R. Soc. Lond.}, 27(185--189):304--308, 1878.

\bibitem{Mieussens}
L.~Mieussens.
\newblock Discrete velocity model and implicit scheme for the {BGK} equation of
  rarefied gas dynamics.
\newblock {\em Math. Models Methods Appl. Sci.}, 10(8):1121--1149, 2000.

\bibitem{Mizzi}
S.~Mizzi, X.~J. Gu, D.~R. Emerson, R.~W. Barber, and J.~M. Reese.
\newblock Computational framework for the regularized 20-moment equations for
  non-equilibrium gas flows.
\newblock {\em Int. J. Num. Meth. Fluids}, 56(8):1433--1439, 2008.

\bibitem{Muller}
I.~M{\"u}ller and T.~Ruggeri.
\newblock {\em Rational Extended Thermodynamics, Second Edition}, volume~37 of
  {\em Springer tracts in natural philosophy}.
\newblock Springer-Verlag, New York, 1998.

\bibitem{Shakhov}
E.~M. Shakhov.
\newblock Generalization of the {K}rook kinetic relaxation equation.
\newblock {\em Fluid Dyn.}, 3(5):95--96, 1968.

\bibitem{Shan}
X.~Shan and X.~He.
\newblock Discretization of the velocity space in the solution of the
  {B}oltzmann equation.
\newblock {\em Phys. Rev. Lett.}, 80(1):65--68, 1998.

\bibitem{Shen}
J.~Shen and T.~Tang.
\newblock {\em Spectral and High-Order Methods with Applications}, volume~3 of
  {\em Mathematics Monograph Series}.
\newblock Science Press, Beijing, P. R. China, 2006.

\bibitem{Struchtrup}
H.~Struchtrup.
\newblock {\em Macroscopic Transport Equations for Rarefied Gas Flows:
  Approximation Methods in Kinetic Theory}.
\newblock Springer, 2005.

\bibitem{Struchtrup2003}
H.~Struchtrup and M.~Torrilhon.
\newblock Regularization of {G}rad's 13 moment equations: Derivation and linear
  analysis.
\newblock {\em Phys. Fluids}, 15(9):2668--2680, 2003.

\bibitem{Succi}
S.~Succi.
\newblock {\em The lattice {B}oltzmann equation for fluid dynamics and beyond}.
\newblock Oxford University Press, New York, 2001.

\bibitem{Tolke}
J.~T{\"o}lke, M.~Krafczyk, M.~Schulz, and E.~Rank.
\newblock Discretization of the {B}oltzmann equation in velocity space using a
  {G}alerkin approach.
\newblock {\em Comp. Phys. Comm.}, 129(1--3):91--99, 2000.

\bibitem{Toro}
E.~F. Toro.
\newblock {\em {R}iemann solvers and numerical methods for fluid dynamics - {A}
  practical introduction - 3nd edition}.
\newblock Springer, 2009.

\bibitem{Torrilhon2000}
M.~Torrilhon.
\newblock Characteristic waves and dissipation in the 13-moment-case.
\newblock {\em Continuum Mech. Thermodyn.}, 12(5):289--301, 2000.

\bibitem{Torrilhon2006}
M.~Torrilhon.
\newblock Two dimensional bulk microflow simulations based on regularized
  {G}rad's 13-moment equations.
\newblock {\em SIAM Multiscale Model. Simul.}, 5(3):695--728, 2006.

\bibitem{Au}
M.~Torrilhon, J.~Au, D.~Reitebuch, and W.~Weiss.
\newblock The {R}iemann-problem in extended thermodynamics.
\newblock In H.~Freistu{\"u}hler and G.~Warnecke, editors, {\em Hyperbolic
  Problems: Theory, Numerics, Applications, Vols {I} and {II}}, volume 140 of
  {\em International series of numerical mathematics}, pages 79--88.
  Birkh{\"a}user, 2001.

\bibitem{Torrilhon}
M.~Torrilhon, J.~D. Au, and H.~Struchtrup.
\newblock Explicit fluxes and productions for large systems of the moment
  method based on extended thermodynamics.
\newblock {\em Cont. Mech. and Ther.}, 15(1):97--111, 2002.

\bibitem{Jeltsch}
M.~Torrilhon and R.~Jeltsch.
\newblock Essentially optimal explicit {R}unge-{K}utta methods with application
  to hyperbolic-parabolic equations.
\newblock {\em Numer. Math.}, 106(2):303--334, 2007.

\bibitem{Torrilhon2004}
M.~Torrilhon and H.~Struchtrup.
\newblock Regularized 13-moment equations: shock structure calculations and
  comparison to {B}urnett models.
\newblock {\em J. Fluid Mech.}, 513:171--198, 2004.

\bibitem{Torrilhon2008}
M.~Torrilhon and H.~Struchtrup.
\newblock Boundary conditions for regularized 13-moment-equations for
  micro-channel-flows.
\newblock {\em J. Comput. Phys.}, 227(3):1982--2011, 2008.

\bibitem{Torrilhon2009}
M.~Torrilhon and H.~Struchtrup.
\newblock Modeling micro mass and heat transfer for gases using extended
  continuum equations.
\newblock {\em J. Heat Transfer}, 131(3), 2009.

\bibitem{Xu}
M.~Torrilhon and K.~Xu.
\newblock Stability and consistency of kinetic upwinding for
  advection-diffusion equations.
\newblock {\em IMA J. Numer. Analy.}, 26(4):686--722, 2006.

\end{thebibliography}
\end{document}